\PassOptionsToPackage{table}{xcolor}
\documentclass[a4paper,onecolumn,11pt,unpublished]{quantumarticle}
\pdfoutput=1

\usepackage[T1]{fontenc}
\usepackage[utf8]{inputenc}
\usepackage[english]{babel}

\usepackage{amsmath}
\usepackage{amssymb}
\usepackage{makecell}
\usepackage{amsthm}
\usepackage{mathtools}
\usepackage{microtype}

\newcommand{\ket}[1]{\lvert #1\rangle}
\newcommand{\bra}[1]{\langle #1\rvert}

\usepackage{graphicx}
\usepackage{tikz}
\usepackage{quantikz}
\usetikzlibrary{
    arrows.meta,
    shadows,
    calc,
    positioning,
    decorations.pathreplacing,
    fit,
    backgrounds
}
\usepackage{pdflscape}
\usepackage{array}
\usepackage{booktabs}
\usepackage{multirow}
\usepackage{tabularx}
\usepackage{float}
\usepackage{algorithm}
\usepackage{algpseudocode}
\usepackage{enumitem}
\newskip\PrismOriginalFootinsSkip
\PrismOriginalFootinsSkip=\skip\footins
\AddToHookNext{shipout/after}{\global\skip\footins=\PrismOriginalFootinsSkip}
\usepackage{csquotes}

\usepackage[
  backend=biber,
  style=apa,
  sorting=nyt,
  doi=true,
  url=true,
  isbn=false,
  eprint=false,
  giveninits=false,
  uniquename=false
]{biblatex}

\usepackage{comment}

\usepackage{subcaption}
\usepackage{hyperref}

\hypersetup{
    allcolors=quantumviolet,
    pdfsubject={Master's Thesis in Quantitative Finance},
    pdfkeywords={Credit Valuation Adjustment, quantum amplitude estimation, quantum finance, noisy quantum hardware}
}

\makeatletter
\let\PrismOriginalContentsline\contentsline
\newcommand{\PrismPageTopContentsline}[4]{%
    \PrismOriginalContentsline{#1}{#2}{#3}{page.#3}%
}
\newcommand{\PrismTableOfContents}{%
    \begingroup
    \setcounter{tocdepth}{2}%
    \small
    \color{quantumviolet}%
    \hypersetup{colorlinks=true,linkcolor=quantumviolet}%
    \let\contentsline\PrismPageTopContentsline
    \tableofcontents
    \endgroup
}
\makeatother

\newtheorem{proposition}{Proposition}

\algrenewcommand\algorithmiccomment[1]{\hfill\(\triangleright\) #1}

\title{A Noise-Aware Quantum Algorithm for Credit Valuation Adjustments on Real Quantum Hardware}

\author[1]{Guillem Borràs Espert}
\email{guiboes@alumni.uv.es}

\author[2]{Francisco Gómez Casanova}
\email{fgomezc@bbva.com}

\author[3]{Luis de Pedro Sánchez}
\email{luis.depedro@uam.es}

\author[4]{Senaida Hernández Santana}
\email{senaida.hernandezsantana@bbva.com}

\author[4]{Pablo Serrano Molinero}
\email{pablo.serrano.molinero@bbva.com}

\affil[1]{Universitat de València, València}

\affil[2]{BBVA Corporate \& Investment Banking, Madrid, Spain}
\affil[3]{Universidad Autónoma de Madrid, Madrid, Spain}
\affil[4]{BBVA Quantum, Madrid, Spain}

\date{July 2026}
\keywords{Quantum finance; Credit Valuation Adjustment; Bayesian quantum amplitude estimation; quantum machine learning; noisy quantum hardware}

\begin{document}

\maketitle

\begin{abstract}
Credit Valuation Adjustments (CVA) require repeated risk-neutral expectation estimation, making them a natural test bed for quantum amplitude estimation, whose coherent amplification can in principle reduce Monte Carlo sampling cost. Whether this advantage survives realistic financial encoding and noisy hardware remains open. We develop an end-to-end, noise-aware quantum workflow for CVA, covering market calibration, discretisation, oracle construction, hardware execution and error-budget analysis. The model combines a correlated two-asset exposure with discount and default factors, encoded through a QCBM-based joint time--market distribution and controlled payoff rotations. We introduce contrast-aware Bayesian iterative quantum amplitude estimation (CABIQAE), which incorporates experimentally calibrated Grover-contrast loss into Bayesian inference and circuit-depth selection. Hardware-calibrated experiments show that CABIQAE exploits the limited amplification available on current devices more effectively than noise-agnostic alternatives and achieves a much lower classical post-processing runtime than the noise-aware BAE baseline. The analysis further decomposes the total CVA error into statistical, encoding, discretisation and hardware contributions. The full CVA oracle remains limited by circuit depth and discretisation resolution.
\end{abstract}

\noindent\textbf{Keywords:} Quantum finance; Credit Valuation Adjustment; Bayesian quantum amplitude estimation; quantum machine learning; noisy quantum hardware.

\begin{center}
\begin{minipage}{0.9\textwidth}
\centering\footnotesize\itshape
\textbf{Institutional and hardware access statement.} This work was undertaken solely for academic purposes and does not necessarily reflect the views of BBVA. Access to the IBM Quantum hardware used in this study was made possible through the support of Basque Quantum (BasQ).
\end{minipage}
\end{center}

\clearpage
\PrismTableOfContents
\clearpage

\section{Introduction}
\subsection{Motivation and objectives}

Credit Valuation Adjustment (CVA) emerged as a central component of derivative valuation after the global financial crisis, when large losses revealed that the market value of derivatives could not be separated from the credit quality of the counterparties involved. The financial risk underlying this adjustment is counterparty credit risk, understood in the regulatory terminology of the European Banking Authority as the risk that a counterparty defaults before the final settlement of the transaction cash flows \parencite{EBA_CCR}. In derivatives portfolios, such a default generates an economic loss only when the portfolio has positive value for the surviving party at the time of default. CVA accounts for this effect by adjusting the risk-free value of a derivative portfolio for the expected loss associated with possible counterparty default.

From a quantitative perspective, CVA is a prototypical example of a financial
quantity whose accurate computation depends on repeated estimation of
risk-neutral expectations. In realistic derivative portfolios, the exposure
depends on several market factors and netting effects. This makes Monte Carlo simulation the
standard numerical tool for CVA computation, but also makes high-precision
estimation computationally expensive, since the sampling error decreases only
as \(\mathcal{O}(N_{\mathrm{MC}}^{-1/2})\). This slow convergence is precisely
the point at which quantum computing becomes relevant for this work. Certain quantum algorithms allow an expectation value, such as the discretised CVA functional considered here, to be reformulated as a probability-estimation problem. In this formulation, the target quantity is encoded into the probability that a quantum circuit returns a specified measurement outcome. Once the financial quantity has been
cast into this form, a quantum procedure can estimate that probability more
efficiently than direct sampling by using coherent amplification before
measurement. The central algorithmic primitive used for this purpose is called
quantum amplitude estimation. In the ideal model, it reduces the query
complexity\footnote{Throughout this work, a \emph{query} denotes one use of
the relevant computational primitive: in the quantum setting, the corresponding
quantum circuit, or its inverse when applicable; in the classical Monte Carlo
setting, the corresponding sampler. The terms \emph{ideal} and \emph{noiseless}
are used interchangeably to denote models or simulations with no hardware
noise.} for estimating an expectation to accuracy
\(\varepsilon\) from the classical Monte Carlo scaling
\(\mathcal{O}(\varepsilon^{-2})\) to \(\mathcal{O}(\varepsilon^{-1})\), giving a
quadratic improvement in the dependence on the target accuracy. This makes CVA
a natural candidate for studying whether quantum-enhanced Monte Carlo methods
can become useful for quantitative finance.

However, turning this asymptotic promise into a practical financial workflow is
far from automatic. A quantum CVA algorithm must encode a joint distribution
over time and market states, evaluate the positive exposure of a portfolio,
include discount and default factors, and estimate the resulting amplitude
under severe hardware constraints. The first circuit-level proposal by
\textcite{Alcazar2021CVA} established this direction, but considered a simple
single-option instance and assessed the algorithm through numerical simulation
and resource estimation rather than execution on real quantum hardware.
Moreover, the final estimation step remains a central bottleneck. Once the CVA
quantity has been embedded into a quantum circuit, the algorithm must extract
the corresponding numerical value from repeated quantum measurements. In
ideal settings, this can be accelerated by using Grover\footnote{Here and
throughout this work, Grover amplification refers to the repeated application
of the reflection-based operator introduced in Grover's quantum search
algorithm \parencite{Grover1996}, which rotates amplitude from the undesired
subspace into the target subspace. An overview of this operator is also given
in Appendix~\ref{appsection:qae}.} amplification, which increases the
information gained per circuit execution. In practice, however, methods that
are statistically powerful in ideal or simulated settings may require deep
amplified circuits or significant classical inference overhead, limiting their
scalability on near-term devices.

\subsection{Contributions and scope}

The main objective of this work is to assess the practical viability of
quantum CVA on real quantum hardware as a complete end-to-end pipeline. In
particular, we investigate whether the theoretical quadratic advantage of
quantum amplitude estimation over Monte Carlo can survive the combined effects
of realistic financial modelling, quantum encoding, finite discretisation and
hardware noise. This general objective is developed through three specific
subgoals. First, we extend the quantum CVA framework developed by
\textcite{Alcazar2021CVA} from a single European call option to a more
realistic multi-asset netting set composed of several derivatives written on
correlated underlyings and calibrated from market data. Second, we study the
amplitude estimation stage in detail and propose a noise-aware Bayesian
iterative algorithm designed to make the estimation procedure more compatible
with the limitations of near-term quantum hardware. Third, we validate the
complete workflow not only in ideal simulation, but also on real IBM Quantum
hardware through a hardware-replay methodology that enables robust statistical
comparisons under experimentally observed noise. The aim is not to claim an
end-to-end runtime advantage, but to provide a realistic and reproducible
assessment of where the quantum advantage can enter, which components currently
limit it, and how hardware-aware amplitude estimation can improve the practical
scalability of quantum CVA.

The algorithmic claim is deliberately restricted. CABIQAE is not presented as a proof of asymptotic superiority over BIQAE or over other QAE variants. Rather, it is designed for the regime observed in the hardware experiments of this work, where Grover amplification remains useful only over a finite contrast window. The relevant question is therefore not whether CABIQAE is universally better as an amplitude-estimation method, but whether it is better matched to the depth-dependent loss of contrast encountered by the trained CVA circuits considered here.

\subsection{Literature review}
\label{chap:literature_review}

Quantum computing has attracted increasing attention in finance because several
core tasks in the field can be formulated as sampling, integration,
optimisation or learning problems. Broad reviews of quantum finance identify
portfolio optimisation, derivative pricing, risk analysis and machine learning
as the main application areas, while also emphasising that the strongest
complexity guarantees typically arise in Monte Carlo-type estimation rather than
in heuristic near-term optimisation \parencite{OrusMugelLizaso2019,
Egger2020QuantumFinance,Bouland2020QuantumFinance}. This is particularly
relevant for quantitative finance, where many pricing and risk-management
problems reduce to the estimation of expectations under a risk-neutral measure.

The algorithmic primitive behind most quantum Monte Carlo proposals is Quantum
Amplitude Estimation (QAE). The original amplitude amplification and estimation
framework of \textcite{Brassard2000QAE} provides a quadratic improvement over
classical sampling in the ideal query model. \textcite{Montanaro2015} later
generalised this perspective by showing that the mean output of a bounded-variance randomised algorithm can be estimated with near-quadratic quantum
speedup. These results motivated several financial applications.
\textcite{WoernerEgger2019QuantumRisk} applied QAE to risk measures such as
Value-at-Risk and Conditional Value-at-Risk, while
\textcite{Stamatopoulos2020OptionPricing} developed quantum circuits for
vanilla, multi-asset and path-dependent option pricing. In parallel,
\textcite{An2021QuantumMLMC} studied quantum-accelerated multilevel Monte Carlo
methods for stochastic differential equations, with direct applications to
mathematical finance.

Within valuation adjustments, the main reference is the pioneering contribution of
\textcite{Alcazar2021CVA}, which introduced the first
circuit-level quantum algorithm for CVA estimation. Their construction starts from the discretised CVA
formula, prepares a joint time--price probability distribution, encodes payoff,
discount and default-probability functions through controlled rotations, and
then estimates the resulting observable using a Bayesian amplitude estimation
method based on engineered likelihood functions. This work is important because
it goes beyond black-box oracle assumptions and studies concrete circuit
components. Nevertheless, the benchmark instance is essentially a single
European call, and the assessment is based on numerical simulation and resource
estimation rather than real-hardware execution. This leaves open a practical question that is central to the present work: even if the CVA functional can be encoded as a quantum amplitude, the useful information extracted from amplified circuits may degrade much faster on hardware than the ideal query model suggests. In that case, the amplitude-estimation routine must be evaluated not only by its ideal query complexity, but also by how it behaves when the Grover oscillation loses contrast with depth. In addition, the amplitude-estimation routine adopted in that work is essentially robust amplitude estimation (RAE) implemented through engineered likelihood functions, which introduces
a further scalability concern on the classical side. As discussed by
\textcite{Ramoa2025bayesianquantum}, RAE generalises the fixed Grover
reflections, originally equal to \(\pi\), into arbitrary rotations. This changes
the classical inference problem from optimising over a single amplification
choice to working with \(2m\) continuous parameters, where \(m\) is the number
of circuit layers or Grover applications. For CVA, where each circuit evaluation 
is already costly because it involves non-negligible quantum subroutines, this 
enlarged optimisation space can become a major practical bottleneck.

A related theoretical direction was developed by \textcite{Han2022QuantumCVA},
who analysed multi-option portfolio pricing and
CVA within a quantum Monte Carlo framework. Their results provide
query-complexity guarantees under bounded-variance assumptions and clarify when
CVA-like quantities can inherit a quadratic speedup from amplitude estimation.
This complements the construction of \textcite{Alcazar2021CVA}, but it does not replace the need for an
end-to-end circuit-level and hardware-aware implementation: the practical
question remains how to construct the state-preparation and payoff-encoding
blocks, how to control the discretisation error, and how to run amplitude
estimation when real devices destroy useful quantum information beyond a finite
circuit depth. The present work should therefore be read as complementary to such theoretical query-complexity analyses. It does not attempt to prove a new asymptotic complexity theorem for CVA. Instead, it studies which parts of the ideal quantum Monte Carlo advantage survive after the CVA functional has been discretised, encoded into trained quantum circuits, transpiled to hardware-compatible layouts, and subjected to experimentally measured amplification loss.

The amplitude estimation layer is therefore decisive. The canonical QAE
algorithm requires the use of the quantum Fourier transform (QFT) and deep controlled-Grover circuits,
which are unsuitable for current noisy devices. QFT-free variants such as
iterative quantum amplitude estimation (IQAE) \parencite{Grinko2021IQAE} reduce circuit depth while retaining the
quadratic speedup up to logarithmic factors. More recently, Bayesian approaches
have improved the statistical efficiency of amplitude estimation. Bayesian amplitude estimation (BAE) introduces
a noise-aware Bayesian formulation capable of adapting to imperfect
hardware \parencite{Ramoa2025bayesianquantum}. Bayesian iterative quantum amplitude estimation (BIQAE), proposed by
\textcite{Li2026BIQAE}, injects Bayesian inference into the IQAE
structure, replacing confidence intervals with credible intervals and showing
improved sample complexity relative to several QAE baselines. Importantly, \textcite{Li2026BIQAE} also highlight that general Bayesian approaches based on unrestricted Monte
Carlo posterior propagation can incur substantial classical overhead; conjugate
Bayesian updates, as in Beta-BIQAE, are introduced precisely to retain Bayesian
efficiency while improving scalability. 

This work is positioned at the intersection of these gaps. On the financial side, it extends circuit-level quantum CVA beyond the single-option benchmark by constructing a calibrated two-asset netting set with correlated equity dynamics, finite-grid exposure tables, discount factors and default probabilities. On the implementation side, it moves beyond purely ideal simulation by validating the amplitude-estimation layer with real IBM Quantum hardware data and a hardware-replay methodology. On the algorithmic side, it treats amplitude estimation as a separate practical bottleneck and introduces CABIQAE as a contrast-aware Bayesian iterative method for the specific regime in which amplified circuits lose Grover contrast with depth.

The novelty is therefore not a claim that CABIQAE is theoretically superior to BIQAE, nor that the full CVA pipeline already achieves an end-to-end quantum runtime advantage. The contribution is narrower and more testable: the work shows how a realistic CVA workflow can be decomposed into discretisation, encoding, statistical and hardware components, and it demonstrates that incorporating experimentally calibrated contrast loss into the Bayesian amplitude-estimation loop makes the estimator better suited to the noisy amplified circuits available in the present hardware regime.

\section{Theoretical framework}
\label{sec:theoretical_framework}
\subsection{Classical framework}
\label{sec:classical_cva_valuation}

\subsubsection{Risk-neutral multi-asset dynamics}
\label{subsec:risk_neutral_multi_asset_dynamics}

Let $(\Omega,\mathcal{F},\mathbb{F},\mathbb{P})$ be a complete filtered
probability space, with $\mathbb{F}=(\mathcal{F}_t)_{t\in[0,T]}$
satisfying the usual conditions and $T<\infty$. By the First Fundamental
Theorem of Asset Pricing \parencite{HARRISON1979381,HARRISON1981215}, absence
of arbitrage guarantees the existence of a unique equivalent martingale
measure $\mathbb{Q}\sim\mathbb{P}$ under market completeness; all
valuations are henceforth performed under $\mathbb{Q}$, with
$\mathbb{E}^{\mathbb{Q}}_t[\,\cdot\,]:=\mathbb{E}^{\mathbb{Q}}
[\,\cdot\mid\mathcal{F}_t]$. Risk premia are transferred from the drift
via Girsanov's theorem \parencite{karatzas1991brownian}. The risk-free curve
is represented by a flat continuously compounded rate $r$, yielding
discount factor
\begin{equation}\label{eq:discount_factor}
    D(t,u):=\exp[-r(u-t)], \qquad 0\leq t\leq u\leq T.
\end{equation}

The exposure model is a multi-asset lognormal extension of the
Black--Scholes--Merton framework, incorporating continuous dividend
yields, deterministic time-dependent volatilities, and instantaneous
correlation. Let $\mathbf{S}_t=(S_t^{(1)},\ldots,S_t^{(d)})^\top$ be
the vector of $d$ equity underlyings, with asset $k$ paying dividend
yield $q_k$. The volatility of asset $k$ is modelled as a
deterministic piecewise-constant function over the implied-volatility
maturity grid $0=T_0<T_1<\cdots<T_J$, with bucket values
\begin{equation}\label{eq:piecewise_forward_vol}
    \sigma_{k,j}
    =
    \left(
        \frac{
            T_{j+1}\bigl(\sigma^{\mathrm{imp}}_k(T_{j+1})\bigr)^2
            -
            T_j\bigl(\sigma^{\mathrm{imp}}_k(T_j)\bigr)^2
        }{T_{j+1}-T_j}
    \right)^{1/2},
\end{equation}
chosen so that $\int_0^{T_j}\sigma_k^2(u)\,du =
T_j\bigl(\sigma^{\mathrm{imp}}_k(T_j)\bigr)^2$, reproducing the
market-implied total variance at each quoted maturity. Under
$\mathbb{Q}$, the asset dynamics are
\begin{equation}\label{eq:multi_asset_sde}
    dS_t^{(k)}
    =
    (r-q_k)S_t^{(k)}\,dt
    +
    \sigma_k(t)S_t^{(k)}\,d\widetilde{W}_t^{(k)},
    \qquad k=1,\ldots,d,
\end{equation}
with instantaneous correlations
$d\widetilde{W}_t^{(k)}\,d\widetilde{W}_t^{(l)}=\rho_{kl}\,dt$.
The symmetric positive definite correlation matrix
$\boldsymbol{\rho}=(\rho_{kl})$ admits the Cholesky factorisation
$\boldsymbol{\rho}=LL^\top$, yielding the representation
\begin{equation}\label{eq:correlated_brownian_cholesky}
    d\widetilde{\mathbf{W}}_t = L\,d\mathbf{W}_t,
\end{equation}
where $\mathbf{W}$ is a vector of independent standard Brownian
motions under $\mathbb{Q}$.

Applying Itô's formula to $\log S_t^{(k)}$ gives the exact transition
\begin{equation}\label{eq:multi_asset_transition_solution}
    S_t^{(k)}
    =
    S_s^{(k)}
    \exp\!\left[
        (r-q_k)(t-s)
        -\tfrac{1}{2}\int_s^t\sigma_k^2(u)\,du
        +\int_s^t\sigma_k(u)\,d\widetilde{W}_u^{(k)}
    \right], \quad 0\leq s<t\leq T.
\end{equation}
Since $\sigma_k(\cdot)$ is deterministic, the conditional log-return is
Gaussian, and it is convenient to define the root-mean-square volatility
\begin{equation}\label{eq:rms_vol}
    \bar{\sigma}_k(s,t)
    :=
    \left(
        \frac{1}{t-s}\int_s^t\sigma_k^2(u)\,du
    \right)^{1/2},
\end{equation}
under which the marginal law of $S_t^{(k)}$ coincides with that of a
constant-volatility Black--Scholes--Merton model with parameter
$\bar{\sigma}_k(s,t)$. This is
\[
    \log S_t^{(k)} \mid S_s^{(k)}
    \sim
    \mathcal{N}
    \left(
        \log S_s^{(k)}
        +
        (r-q_k)(t-s)
        -
        \frac{1}{2}\bar{\sigma}_k^2(s,t)(t-s),
        \;
        \bar{\sigma}_k^2(s,t)(t-s)
    \right),
\]
so that \(S_t^{(k)}\mid S_s^{(k)}\) is lognormally distributed with
\begin{equation}\label{eq:expected_value_GBM}
        \mu_{S,k}(s,t)
    :=
    \mathbb{E}\!\left[S_t^{(k)}\mid S_s^{(k)}\right]
    =
    S_s^{(k)}e^{(r-q_k)(t-s)},
\end{equation}
and
\begin{equation}\label{eq:vol_GBM}
    \eta_{S,k}(s,t)
    :=
    \operatorname{Std}\!\left[S_t^{(k)}\mid S_s^{(k)}\right]
    =
    S_s^{(k)}e^{(r-q_k)(t-s)}
    \sqrt{
        e^{\bar{\sigma}_k^2(s,t)(t-s)}-1
    }.
\end{equation}
On a monitoring interval $(t_i,t_{i+1}]$ contained within bucket
$(T_j,T_{j+1}]$, the exact simulated transition is
\begin{equation}\label{eq:exact_step}
S_{t_{i+1}}^{(k)}
=
S_{t_i}^{(k)}
\exp\left[
\left(r-q_k-\tfrac{1}{2}\sigma_{k,j}^2\right)\Delta t_i
+\sigma_{k,j}\sqrt{\Delta t_i}
\sum_{l=1}^{d}L_{kl}Z_l^{(i)}
\right],
\end{equation}
where $\Delta t_i=t_{i+1}-t_i$ and
$\mathbf{Z}^{(i)}\overset{\mathrm{iid}}{\sim}\mathcal{N}(\mathbf{0},I_d)$
across monitoring dates. If $(t_i,t_{i+1}]$ intersects several volatility
buckets, the transition is instead generated from the integrated
log-return covariance matrix
\begin{equation}
\Gamma_i^{kl}
:=
\rho_{kl}
\int_{t_i}^{t_{i+1}}
\sigma_k(u)\sigma_l(u)\,du,
\end{equation}
with marginal integrated variances
$V_{k,i}:=\Gamma_i^{kk}=\int_{t_i}^{t_{i+1}}\sigma_k^2(u)\,du$.
Writing $\Gamma_i=L_iL_i^\top$, the exact transition becomes
\begin{equation}
S_{t_{i+1}}^{(k)}
=
S_{t_i}^{(k)}
\exp\left[
(r-q_k)\Delta t_i
-\tfrac{1}{2}V_{k,i}
+
\sum_{l=1}^{d}(L_i)_{kl}Z_l^{(i)}
\right].
\end{equation}
Hence intra-step volatility changes are integrated exactly and no Euler
discretisation bias is incurred. Empirically estimated correlation matrices that fail positive definiteness
are regularised via a minimal spectral shift prior to Cholesky
factorisation.

\subsubsection{Classical portfolio valuation}
\label{sec:classical_valuation}

For an $\mathcal{F}_T$-measurable payoff $X_T$, the default-free
arbitrage-free value at time $t\leq T$ is
\begin{equation}\label{eq:risk_neutral_pricing}
    V_t = \mathbb{E}^{\mathbb{Q}}_t\!\left[D(t,T)X_T\right].
\end{equation}
Since the conditional distribution of $S_T\mid\mathcal{F}_t$ under
deterministic volatility is equivalent to that of a constant-volatility
model with parameter $\bar{\sigma}(t,T)$, the standard
Black--Scholes--Merton pricing formulas apply with this substitution.
Defining
\[
    d_\pm(t,T,K)
    :=
    \frac{
        \log(S_t/K)+\left(r-q\pm\tfrac{1}{2}\bar{\sigma}^2(t,T)\right)(T-t)
    }{\bar{\sigma}(t,T)\sqrt{T-t}},
\]
the time-$t$ values of a European call, put and forward are
\begin{align}
    c(t,T,K) &= S_t e^{-q(T-t)}\Phi(d_+) - Ke^{-r(T-t)}\Phi(d_-),
    \label{eq:call_price}\\
    p(t,T,K) &= Ke^{-r(T-t)}\Phi(-d_-) - S_t e^{-q(T-t)}\Phi(-d_+),
    \label{eq:put_price}\\
    f(t,T,K) &= S_t e^{-q(T-t)} - Ke^{-r(T-t)}.
    \label{eq:forward_price}
\end{align}

Let \(\mathcal{N}\) denote the netting set. Each trade \(\ell\in\mathcal{N}\) references an
underlying asset \(a_\ell\in\{1,\ldots,d\}\), has maturity \(T_\ell\),
position sign \(\phi_\ell\in\{-1,+1\}\), and default-free value
\(V_\ell(S_t^{(a_\ell)},t)\). The mark-to-market value of the netting set is
\begin{equation}\label{eq:portfolio_value}
    V(t)
    :=
    \Pi_{\mathcal{N}}(\mathbf{S}_t,t)
    =
    \sum_{\ell\in\mathcal{N}}
    \phi_\ell\,V_\ell\!\left(S_t^{(a_\ell)},t\right)
    \mathbf{1}_{\{t\leq T_\ell\}},
\end{equation}
with positive and negative parts
\begin{equation}\label{eq:positive_negative_exposure}
    V^+(t):=\max(V(t),0), \qquad V^-(t):=\min(V(t),0).
\end{equation}
The quantity $V^+(\mathbf{S}_t,t)$ constitutes the relevant exposure
for unilateral counterparty credit risk: a positive portfolio value
implies exposure to counterparty default, whereas a negative value
represents an obligation to the counterparty.

\subsubsection{Default modelling and the Credit Valuation Adjustment formula}
\label{sec:cva_model}

Let \(\tau\) denote the counterparty default time and let \(R_{\mathrm{CVA}}\in[0,1]\)
be the recovery rate. We work in a reduced-form credit-risk framework,
where default is modelled through an intensity process calibrated to CDS
market quotes. Technical details on filtrations and intensity-based default modelling can be found, for instance,
in \textcite{BieleckiJeanblancRutkowski2004Hedging} and \textcite{JeanblancLeCam2009Immersion}.

Using the notation introduced in
\eqref{eq:portfolio_value}--\eqref{eq:positive_negative_exposure},
the credit-adjusted value of the netting set on the event \(\{\tau>t\}\)
and under a default-free close-out convention is
\begin{equation}
    V^*(t)
    =
    \mathbb{E}^{\mathbb{Q}}_t
    \!\left[
        D(t,T)X_T\mathbf{1}_{\{\tau>T\}}
        +
        D(t,\tau)
        \bigl(
            R_{\mathrm{CVA}} V^+(\tau)+V^-(\tau)
        \bigr)
        \mathbf{1}_{\{t<\tau\leq T\}}
    \right].
    \label{eq:risky_value_main}
\end{equation}
The close-out payment can be written as
\[
    R_{\mathrm{CVA}} V^+(\tau)+V^-(\tau)
    =
    V(\tau)-(1-R_{\mathrm{CVA}})V^+(\tau).
\]
Substituting this identity into \eqref{eq:risky_value_main} yields
\begin{align}
    V^*(t)
    &=
    \mathbb{E}^{\mathbb{Q}}_t
    \!\left[
        D(t,T)X_T\mathbf{1}_{\{\tau>T\}}
        +
        D(t,\tau)V(\tau)
        \mathbf{1}_{\{t<\tau\leq T\}}
    \right]
    \notag\\
    &\quad
    -
    (1-R_{\mathrm{CVA}})
    \mathbb{E}^{\mathbb{Q}}_t
    \!\left[
        D(t,\tau)V^+(\tau)
        \mathbf{1}_{\{t<\tau\leq T\}}
    \right].
    \label{eq:risky_value_decomposition_main}
\end{align}
The first expectation in \eqref{eq:risky_value_decomposition_main} is the
default-free value \(V(t)\): if no default occurs before maturity, the
terminal payoff is received directly; if default occurs before maturity,
the default-free continuation value \(V(\tau)\) is paid at the default
time. Therefore, by the tower property,
\begin{equation}
    V^*(t)
    =
    V(t)-\mathrm{CVA}(t),
\end{equation}
where
\begin{equation}
    \mathrm{CVA}(t)
    =
    \mathbf{1}_{\{\tau>t\}}(1-R_{\mathrm{CVA}})
    \mathbb{E}^{\mathbb{Q}}_t
    \!\left[
        D(t,\tau)V^+(\tau)
        \mathbf{1}_{\{t<\tau\leq T\}}
    \right]
    \label{eq:cva_general_main}
\end{equation}
denotes the credit valuation adjustment. It represents
the risk-neutral expected discounted loss due to the possible default of the
counterparty before maturity. Since losses arise only when the portfolio has
positive value to the investor, the adjustment depends on the positive exposure
\(V^+(\tau)\), weighted by the loss-given-default factor \((1-R_{\mathrm{CVA}})\) and by the
event that default occurs before maturity. Therefore, the risky value \(V^*(t)\)
is the default-free value \(V(t)\) reduced by the expected loss induced by
counterparty credit risk.
For \(0\leq t\leq u\), define the conditional survival probability
\begin{equation}
    P(t,u)
    :=
    \mathbb{Q}(\tau>u\mid \tau>t,\mathcal{F}_t).
    \label{eq:survival_probability_definition}
\end{equation}
In the deterministic-intensity setting considered in this work,
\begin{equation}
    P(t,u)
    =
    \exp\!\left(
        -\int_t^u \lambda(s)\,ds
    \right).
    \label{eq:survival_probability_compact}
\end{equation}
Since \(u\mapsto P(t,u)\) is non-increasing, the positive default measure
over future time is \(-dP(t,u)\). Under the usual no-wrong-way-risk
approximation adopted in this work, conditional on the market information available at time
\(t\), the exposure process and the default event are assumed independent.
Thus default enters the CVA calculation only through marginal default
probabilities, and \eqref{eq:cva_general_main} becomes
\begin{equation}
    \mathrm{CVA}(t)
    =
    -\mathbf{1}_{\{\tau>t\}}(1-R_{\mathrm{CVA}})
    \int_t^T
        D(t,u)\,
        \mathrm{EE}(t,u)\,
        dP(t,u),
    \label{eq:cva_continuous_compact}
\end{equation}
where
\begin{equation}
    \mathrm{EE}(t,u)
    :=
    \mathbb{E}^{\mathbb{Q}}_t
    \!\left[
        V^+(u)
    \right]
    =
    \mathbb{E}^{\mathbb{Q}}_t
    \!\left[
        V^+(\mathbf{S}_u,u)
    \right],
    \qquad u\geq t.
    \label{eq:expected_exposure_compact}
\end{equation}

The intensity curve is obtained from CDS market quotes by imposing the
standard par-spread condition, namely that the present value of the premium
leg equals the present value of the protection leg
\parencite{HullWhite2000}. For a CDS maturity \(T_{\mathrm{CDS}}=t_n\), the
discrete premium and protection legs satisfy
\begin{equation}
    s(0,T_{\mathrm{CDS}})
    \sum_{i=1}^{n}
        \Delta_i D(0,t_i)P(0,t_i)
    =
    (1-R_{\mathrm{CDS}})
    \sum_{i=1}^{n}
        D(0,t_i)
        \bigl(
            P(0,t_{i-1})-P(0,t_i)
        \bigr),
    \label{eq:cds_par_spread_compact}
\end{equation}
where \(s(0,T_{\mathrm{CDS}})\) is the quoted CDS spread,
\(R_{\mathrm{CDS}}\) is the recovery rate assumed in the CDS market,
\(D(0,t_i)\) is the risk-free discount factor and \(\Delta_i\) is the
year fraction associated with the premium payment over
\((t_{i-1},t_i]\). 

On a monitoring grid \(t=t_0<t_1<\cdots<t_M=T\), define the conditional
default probability over the interval \((t_{i-1},t_i]\), as seen from
valuation time \(t\), by
\begin{equation}
    \Delta q(t_i)
    :=
    P\!\left(t,\max(t,t_{i-1})\right)-P(t,t_i).
    \label{eq:default_increment_compact}
\end{equation}
The discretised CVA is then approximated by
\begin{equation}
    \mathrm{CVA}(t)
    \approx
    \mathbf{1}_{\{\tau>t\}}(1-R_{\mathrm{CVA}})
    \sum_{i=1}^{M}
        D(t,t_i)\,
        \mathrm{EE}(t,t_i)\,
        \Delta q(t_i).
    \label{eq:cva_discrete_compact}
\end{equation}

Given \(N_{\mathrm{MC}}\) independent simulated paths
\(\{\mathbf{S}^{(k)}\}_{k=1}^{N_{\mathrm{MC}}}\), generated
conditionally from time \(t\), the Monte Carlo estimator of the expected
exposure at a future monitoring date \(t_i>t\) is
\begin{equation}
    \widehat{\mathrm{EE}}(t,t_i)
    =
    \frac{1}{N_{\mathrm{MC}}}
    \sum_{k=1}^{N_{\mathrm{MC}}}
    V^+(\mathbf{S}^{(k)}_{t_i},t_i).
    \label{eq:ee_mc_compact}
\end{equation}
The corresponding discretised CVA estimator is
\begin{equation}
    \widehat{\mathrm{CVA}}_{\mathrm{MC}}(t)
    =
    \mathbf{1}_{\{\tau>t\}}(1-R_{\mathrm{CVA}})
    \sum_{i=1}^{M}
        D(t,t_i)\,
        \widehat{\mathrm{EE}}(t,t_i)\,
        \Delta q(t_i).
    \label{eq:cva_mc_compact}
\end{equation}
Conditional on the monitoring grid and on the calibrated deterministic
discount and default curves, this estimator is unbiased for the discretised
quantity in \eqref{eq:cva_discrete_compact}. Moreover, under the log-normal
equity model and the Black--Scholes pricing formulas used for calls, puts
and forwards, the positive exposure has finite second moment. Hence the
central limit theorem yields the standard Monte Carlo root mean square error (RMSE) rate
\[
    \mathrm{RMSE}
    =
    \mathcal{O}
    \!\left(
        N_{\mathrm{MC}}^{-1/2}
    \right).
\]
This rate provides the classical benchmark against which the quantum
amplitude estimation pipeline is compared.

\subsection{Quantum framework}
\label{sec:quantum_framework}
This section shows how the discretised CVA functional can be encoded as the success probability of a quantum circuit. This reformulation allows quantum amplitude estimation, the central quantum primitive of this work reviewed in Appendix~\ref{appsection:qae}, to accelerate the CVA calculation relative to classical Monte Carlo sampling in terms of query complexity. Throughout this work, unless explicitly stated otherwise, the term quantum advantage is used only in this restricted sense. It does not, by itself, imply an end-to-end runtime advantage for a practical CVA calculation. Such a runtime comparison also depends on the cost of state preparation,
controlled rotations, circuit depth, hardware noise, error mitigation or
correction overheads, and queueing time. A complete runtime analysis is beyond the scope of this work; for an in-depth discussion in the context of quantum CVA, we refer the reader to the runtime analysis of \textcite[Section~5]{Alcazar2021CVA}.

We assume familiarity with the standard notation and basic principles of
quantum computation, including qubits, quantum states, unitary circuits,
measurement and projective observables. These concepts are not reviewed here, as they are standard material in quantum computation and are thoroughly covered in the canonical textbook by \textcite{NielsenChuang2010}. 

\subsubsection{Quantum encoding of the Credit Valuation Adjustment}
\label{subsec:quantum_cva_encoding}
A quantum circuit acts on finite-dimensional registers and therefore cannot
represent an unbounded continuum of price values exactly. We thus begin by
truncating the continuous price domain of each underlying \(S^{(k)}\) appearing in the netting set to a
finite interval. In practice, we define
\[
    \mathcal{D}_k
    =
    \left[
    \max\{\mu_{S,k}(0,T) - 3\eta_{S,k}(0,T),0\},
    \mu_{S,k}(0,T) + 3\eta_{S,k}(0,T)
    \right],
\]
clipping the endpoint
at zero because prices are strictly positive under the GBM model. 

After this truncation, the time and market variables must be represented on
finite discrete grids whose cardinalities are compatible with qubit registers.
Let \(\{t_i\}_{i=1}^{M}\), with \(M=2^m\),\footnote{The number of monitoring dates is chosen to be a power of two so that the time index can be encoded exactly in an $m$-qubit register. This avoids unused computational basis states and makes the finite grid directly match the cardinality of the quantum register.} denote the exposure monitoring
dates, and for each underlying \(S^{(k)}\), partition the
truncated domain \(\mathcal{D}_k\) into \(N_k=2^{n_k}\) disjoint bins
\(\{B^{(k)}_{j_k}\}_{j_k=0}^{N_k-1}\). Writing
\(B^{(k)}_{j_k}=[b^{(k)}_{j_k},b^{(k)}_{j_k+1})\), each bin is represented by
its left endpoint, $s^{(k)}_{j_k}=b^{(k)}_{j_k}$,
with the last bin closed on the right\footnote{The left-endpoint convention is
used to avoid introducing an additional arithmetic step in the finite-grid
evaluation of the exposure function.}. A multi-index
\(\mathbf{j}=(j_1,\ldots,j_d)\) then identifies the Cartesian product bin
\(B_{\mathbf{j}} = B^{(1)}_{j_1} \times \cdots \times B^{(d)}_{j_d}\), so
that each pair \((t_i,B_{\mathbf{j}})\) represents one cell of the
discretised time--market grid. The associated quantum representation uses
\(m\) qubits for the time register and
\(n := \sum_{k=1}^{d} n_k\)
qubits for the asset registers, with computational basis state
\[
    \ket{i}_T\ket{\mathbf{j}}_S
    :=
    |i\rangle_T\,|j_1\rangle_{S_1}\cdots|j_d\rangle_{S_d}
\]
encoding the market-grid cell corresponding to monitoring date \(t_i\) and
multi-asset price bin \(B_{\mathbf{j}}\).

For notational clarity, and without loss of generality for the encoding
construction, we fix the valuation date to be the current date, denoted by
\(t_{\mathrm{val}}=0\). All quantities appearing below are therefore understood
as time-zero, risk-neutral quantities, conditional on the information available
at the valuation date and, when relevant, on survival up to that date. Under this 
convention, and before the CVA can be embedded into a quantum amplitude estimation routine,
its continuous and semi-discrete forms must first be represented on this
finite time--market grid. Starting from the discrete-time approximation
\eqref{eq:cva_discrete_compact}, the expected exposure at each monitoring
date is replaced by a finite sum over the market-grid bins, giving
\[
    \mathrm{EE}_{\Delta}(0,t_i) := \mathrm{EE}_{\Delta}(t_i)
    =
    \sum_{\mathbf{j}}
    \mathbb{Q}_{\mathcal{D}}
    \!\left(
        \mathbf{S}_{t_i}\in B_{\mathbf{j}}
    \right)
    V^+_{i,\mathbf{j}},
\]
and therefore
\begin{equation}
   \mathrm{CVA}_{\Delta}(t=0) := \ \mathrm{CVA}_{\Delta}
    =
    (1-R_{\mathrm{CVA}})
    \sum_{i=1}^{M}
    D(0,t_i)\,\Delta q(t_i)
    \sum_{\mathbf{j}}
    \mathbb{Q}_{\mathcal{D}}
    \!\left(
        \mathbf{S}_{t_i}\in B_{\mathbf{j}}
    \right)
    V^+_{i,\mathbf{j}}.
    \label{eq:cva_discrete_grid}
\end{equation}
where the subscript \(\Delta\) collectively refers to the time discretisation
and to the market-grid bin widths.

At this stage the probabilities entering
the exposure are still conditional market probabilities at each fixed date
\(t_i\); however, the quantum circuit used for amplitude estimation must
prepare a single coherent superposition over the full time--market register,
and it is therefore necessary to reorganise them into a single discrete
probability law over the Cartesian product of the monitoring dates and the
market-grid bins. To that end, following the construction of
\textcite{Alcazar2021CVA}, we introduce the joint probability tensor.
For each monitoring date $t_i$, the risk-neutral density
$f_{\mathbf{S}_{t_i}}$ of the $d$-dimensional price vector is defined over
$\mathbb{R}^d_{+}$, whereas the qubit register can only represent prices
within the finite domain $\mathcal{D}=\bigcup_{\mathbf{j}}B_{\mathbf{j}}$.
The mass of $f_{\mathbf{S}_{t_i}}$ outside $\mathcal{D}$ is accommodated by
conditioning on the event $\{\mathbf{S}_{t_i}\in\mathcal{D}\}$, which amounts
to renormalising the density by

\[
    \mathcal{Z}_i
    \;=\;
    \int_{\mathcal{D}}
    f_{\mathbf{S}_{t_i}}(\mathbf{s})\,d\mathbf{s} \;\in\; (0,1].
\]
so that the discrete probability assigned to bin $B_{\mathbf{j}}$ at date $t_i$
is the conditional bin probability
\[
    \mathbb{Q}_{\mathcal{D}}\!\left(\mathbf{S}_{t_i}\in B_{\mathbf{j}}\right)
    \;=\;
    \frac{1}{\mathcal{Z}_i}
    \int_{B_{\mathbf{j}}}
    f_{\mathbf{S}_{t_i}}(\mathbf{s})\,d\mathbf{s}.
\]
For the purpose of state preparation, we choose a uniform marginal
$\pi_i = 1/M$ over the monitoring dates. The joint probability tensor is therefore defined as
\begin{equation}
    \mathcal{P}_{i,\mathbf{j}}
    \;:=\;
    \frac{1}{M\,\mathcal{Z}_i}
    \int_{B_{\mathbf{j}}}
    f_{\mathbf{S}_{t_i}}(\mathbf{s})\,d\mathbf{s},
    \qquad
    \sum_{i=1}^{M}\sum_{\mathbf{j}}
    \mathcal{P}_{i,\mathbf{j}} = 1,
    \label{eq:joint_probability_tensor}
\end{equation}
which generalises the two-dimensional time--price matrix of
\textcite{Alcazar2021CVA} to a $(d+1)$-dimensional tensor encoding the full
dependence structure of the correlated multi-asset dynamics.

With the joint probability tensor in hand, we can now construct the quantum
circuit that encodes the CVA. The first building block is the
state-preparation unitary\footnote{The term ``unitary''
refers to the linear operator acting on the Hilbert space of the quantum
registers. Its physical implementation is a quantum circuit whose net action,
in the ideal noiseless model, realises that unitary transformation. Thus
\(G_{\mathcal{P}}\) denotes both the target state-preparation transformation
and, by standard abuse of notation, the circuit implementing it.} \(G_{\mathcal{P}}\), defined by its action on the computational zero state as
\begin{equation}
    G_{\mathcal{P}}\,|0\rangle^{\otimes(m+n)}
    \;=\;
    \sum_{i=1}^{M}\sum_{\mathbf{j}}
    \sqrt{\mathcal{P}_{i,\mathbf{j}}}\;
    |i\rangle_T\,|j_1\rangle_{S_1}\cdots|j_d\rangle_{S_d}.
    \label{eq:state_preparation_multi}
\end{equation}
This unitary coherently encodes the joint
time--market distribution in the amplitudes of the quantum state: by Born's
rule, measuring the register in the computational basis returns the grid point
\((i,\mathbf{j})\) with probability
\(\mathcal{P}_{i,\mathbf{j}}\). With correlated
underlyings, this distribution is generally non-factorisable across asset
registers, so the resulting state may be entangled across the time and market
registers. The state-preparation ansatz circuit must therefore capture these
cross-register dependencies; its architecture is described in
Section~\ref{subsec:methodology}.

Once the market state has been coherently prepared, the remaining scalar
quantities in \eqref{eq:cva_discrete_grid} must be converted into valid
probabilities before they can be encoded in ancilla amplitudes. We therefore
introduce classical normalisation constants \(C_v,C_p,C_q>0\) and define
\begin{equation}\label{eq:scaling_cts_definition}
    \widetilde{v}_{i,\mathbf{j}} := \frac{V^+_{i,\mathbf{j}}}{C_v}, \qquad
    \widetilde{p}_i             := \frac{D(0,t_i)}{C_p},            \qquad
    \widetilde{q}_i             := \frac{\Delta q(t_i)}{C_q},
\end{equation}
with \(C_v,C_p,C_q\) chosen so that all three rescaled quantities lie in
\([0,1]\). The original financial scale is restored classically at the end of
the computation through the factor \(C_vC_pC_q\).

The rescaled quantities are then encoded into three dedicated ancilla
qubits\footnote{An ancilla qubit is an auxiliary qubit appended to the
computational register, typically initialised in \(\ket{0}\), and used either as
workspace or as a readout qubit for a particular computational feature. Such
auxiliary degrees of freedom are also needed to embed the desired computation
into a reversible, hence unitary, operation. In the present construction, the
ancilla qubits store, through their amplitudes, the rescaled quantities later
combined into the marked success event. See, e.g., \textcite[Section~4.2]{NielsenChuang2010}} by controlled-rotation unitaries
\(R_f \in \{R_v,R_p,R_q\}\). For a rescaled quantity \(\widetilde f\), the
corresponding controlled rotation acts on an ancilla initialised in
\(\ket{0}\) as
\[
    R_f\,\ket{0}
    \;=\;
    \sqrt{1-\widetilde{f}}\,\ket{0}
    \;+\;
    \sqrt{\widetilde{f}}\,\ket{1},
\]
with the value of \(\widetilde f\) selected by the relevant time and/or market
register.

The discount-factor and default-probability rotations, $R_p$ and $R_q$,
condition exclusively on the time register, since $D(0,t_i)$ and $\Delta q(t_i)$
are deterministic functions of $i$.
The exposure rotation $R_v$, by contrast, must condition on the full pair
$(i,\mathbf{j})$, because the netting-set positive exposure
\[
    V^+_{i,\mathbf{j}}
    \;=\;
    \max\!\left\{
        \sum_{\ell\in\mathcal{N}}
        \phi_\ell\,
        V_\ell\!\left(s^{(a_\ell)}_{j_{a_\ell}},t_i\right)
        \mathbf{1}_{\{t_i\leq T_\ell\}},\;0
    \right\}
\]
is a non-linear, non-separable function of the joint price vector
$\mathbf{s}_{\mathbf{j}}=(s^{(1)}_{j_1},\ldots,s^{(d)}_{j_d})$.
The circuit implementations of $R_p$, $R_q$, and $R_v$ are detailed in
Section~\ref{subsec:methodology}.

The full encoding operator is $A:=R_q R_p R_v G_{\mathcal{P}}$, acting on
$m + n + 3$ qubits, and a comprehensive schematic summarising
the joint state preparation and the controlled rotations for the exposure,
discount, and default factors is provided in
Figure~\ref{fig:quantum_cva_encoding_multiasset}. To see explicitly how this circuit
encodes the CVA, let
$|\xi\rangle = A\,|0\rangle^{\otimes(m+n+3)}$ and let
$\Pi_{111} := I \otimes |111\rangle\langle 111|$ denote the projector onto the
subspace where all three ancilla qubits are in state $|1\rangle$. Writing
the prepared state as
\begin{equation}
\begin{aligned}
|\xi\rangle
={}&
\sum_{i=1}^{M}\sum_{\mathbf{j}}
\sqrt{\mathcal{P}_{i,\mathbf{j}}}\,
|i\rangle|\mathbf{j}\rangle \\
&{}\otimes
\left(
\sqrt{1-\widetilde v_{i,\mathbf{j}}}|0\rangle
+
\sqrt{\widetilde v_{i,\mathbf{j}}}|1\rangle
\right) \\
&{}\otimes
\left(
\sqrt{1-\widetilde p_i}|0\rangle
+
\sqrt{\widetilde p_i}|1\rangle
\right) \\
&{}\otimes
\left(
\sqrt{1-\widetilde q_i}|0\rangle
+
\sqrt{\widetilde q_i}|1\rangle
\right),
\end{aligned}
\end{equation}
projecting onto the all-ones ancilla subspace yields
\begin{equation}
\begin{aligned}
a_{\mathrm{CVA}}
&:=
\langle \xi|\Pi_{111}|\xi\rangle \\
&=
\sum_{i=1}^{M}\sum_{\mathbf{j}}
\mathcal{P}_{i,\mathbf{j}}\,
\left|
\sqrt{\widetilde v_{i,\mathbf{j}}}
\sqrt{\widetilde p_i}
\sqrt{\widetilde q_i}
\right|^2 \\
&=
\sum_{i=1}^{M}\sum_{\mathbf{j}}
\mathcal{P}_{i,\mathbf{j}}\,
\widetilde v_{i,\mathbf{j}}\,
\widetilde p_i\,
\widetilde q_i .
\end{aligned}
\label{eq:cva_amplitude}
\end{equation}

Substituting
$\mathcal{P}_{i,\mathbf{j}} = \pi_i\,\mathbb{Q}_{\mathcal{D}}
(\mathbf{S}_{t_i}\in B_{\mathbf{j}})$ with $\pi_i=1/M$, and
restoring the original scale by inverting \eqref{eq:scaling_cts_definition} one obtains
\begin{align*}
a_{\mathrm{CVA}}
&=
\sum_{i=1}^{M}\sum_{\mathbf{j}}
\frac{1}{M}\,
\mathbb{Q}_{\mathcal{D}}
\!\left(
\mathbf{S}_{t_i}\in B_{\mathbf{j}}
\right)
\frac{V^+_{i,\mathbf{j}}}{C_v}
\frac{D(0,t_i)}{C_p}
\frac{\Delta q(t_i)}{C_q} \\
&=
\frac{1}{M C_v C_p C_q}
\sum_{i=1}^{M}\sum_{\mathbf{j}}
\mathbb{Q}_{\mathcal{D}}
\!\left(
\mathbf{S}_{t_i}\in B_{\mathbf{j}}
\right)
V^+_{i,\mathbf{j}}D(0,t_i)\Delta q(t_i) .
\end{align*}

Hence
\begin{align*}
M C_v C_p C_q\,a_{\mathrm{CVA}}
&=
\sum_{i=1}^{M}\sum_{\mathbf{j}}
\mathbb{Q}_{\mathcal{D}}
\!\left(
\mathbf{S}_{t_i}\in B_{\mathbf{j}}
\right)
V^+_{i,\mathbf{j}}D(0,t_i)\Delta q(t_i) \\
&=
\sum_{i=1}^{M}
D(0,t_i)\Delta q(t_i)
\sum_{\mathbf{j}}
\mathbb{Q}_{\mathcal{D}}
\!\left(
\mathbf{S}_{t_i}\in B_{\mathbf{j}}
\right)
V^+_{i,\mathbf{j}} .
\end{align*}

Since
\begin{equation*}
\mathrm{EE}_{\Delta}(t_i)
=
\sum_{\mathbf{j}}
\mathbb{Q}_{\mathcal{D}}
\!\left(
\mathbf{S}_{t_i}\in B_{\mathbf{j}}
\right)
V^+_{i,\mathbf{j}},
\end{equation*}
the discrete CVA satisfies
\begin{equation}
\begin{aligned}
\mathrm{CVA}_{\Delta}
&=
(1-R_{\mathrm{CVA}})
\sum_{i=1}^{M}
D(0,t_i)\Delta q(t_i)\,\mathrm{EE}_{\Delta}(t_i) \\
&=
(1-R_{\mathrm{CVA}})
\sum_{i=1}^{M}
D(0,t_i)\Delta q(t_i)
\sum_{\mathbf{j}}
\mathbb{Q}_{\mathcal{D}}
\!\left(
\mathbf{S}_{t_i}\in B_{\mathbf{j}}
\right)
V^+_{i,\mathbf{j}} \\
&=
M(1-R_{\mathrm{CVA}})C_vC_pC_q\,a_{\mathrm{CVA}} .
\end{aligned}
\label{eq:cva_from_amplitude}
\end{equation}

\begin{figure}[htbp]
    \centering
    \includegraphics[width=1.0\textwidth]{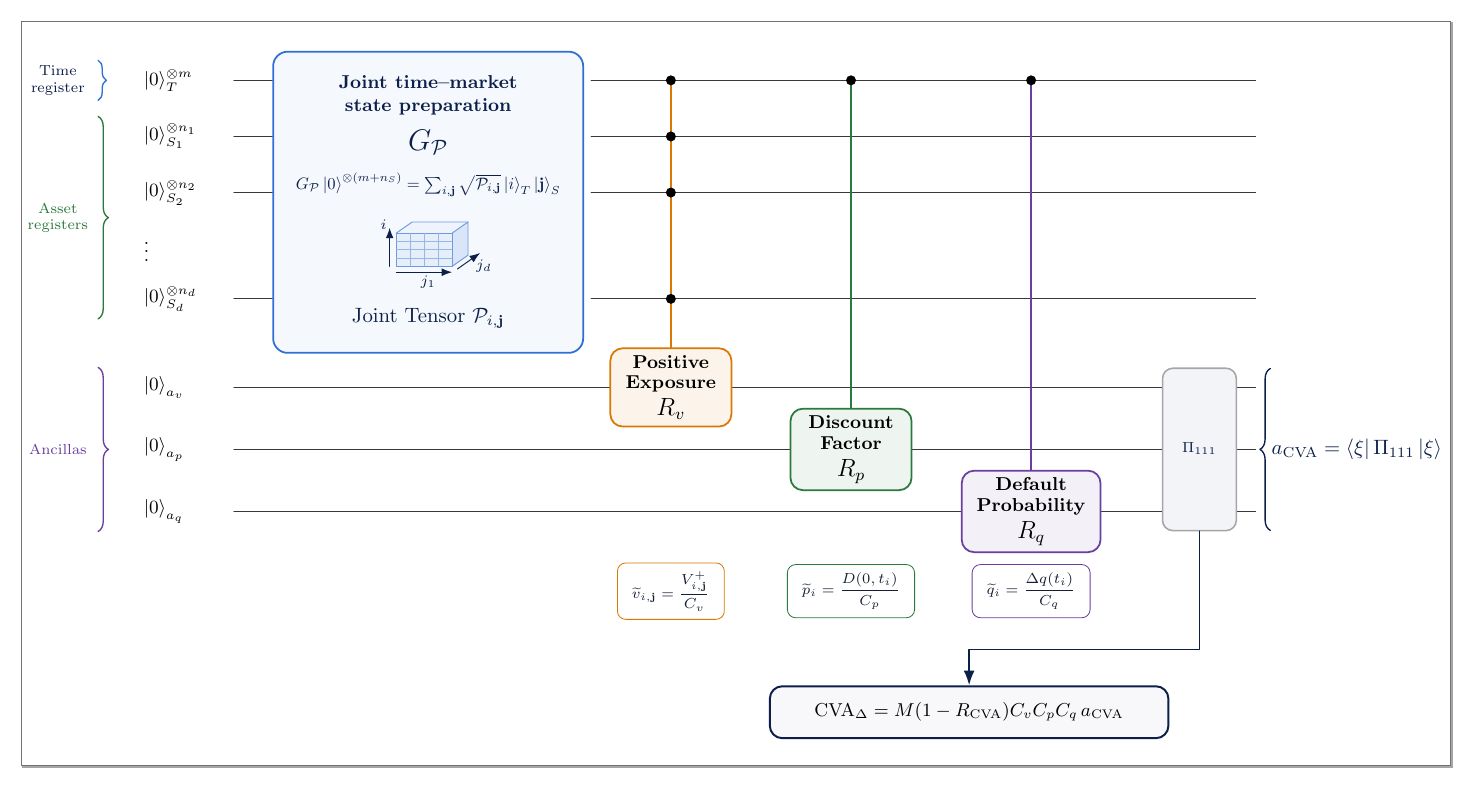}
    \caption{Quantum encoding of the discretised multi-asset CVA. The unitary
$G_{\mathcal P}$ prepares a coherent superposition over the full
time--market grid, while $R_v$, $R_p$, and $R_q$ encode the positive exposure,
discount factor, and default probability into three dedicated ancillas. The
projector $\Pi_{111}$ selects the event in which all three ancillas are measured
in state $\ket{1}$.}
    \label{fig:quantum_cva_encoding_multiasset}
\end{figure}

Consequently, the discretised CVA computation is no longer treated as a
classical expectation to be sampled directly. Once the circuit \(A\) has been
constructed, the target quantity is encoded in the probability of the marked
ancilla event of
\(\ket{\xi}=A\ket{0}^{\otimes(m+n+3)}\), and the financial CVA is
recovered by
\[
    \mathrm{CVA}_{\Delta}
    =
    M(1-R_{\mathrm{CVA}})C_vC_pC_q\,a_{\mathrm{CVA}} .
\]

Although written as the projector expectation
\(\langle \xi|\Pi_{111}|\xi\rangle\), this quantity is equivalently
\[
    a_{\mathrm{CVA}}
    =
    \langle \xi|\Pi_{111}|\xi\rangle
    =
    \|\Pi_{111}\ket{\xi}\|^2 .
\]
Thus \(a_{\mathrm{CVA}}\) is the total probability mass of the marked
subspace. The label \(111\) does not identify a single basis state of the full
\(m+n+3\)-qubit register; it fixes only the three ancilla qubits,
so the marked subspace is spanned by all
\(|i\rangle|\mathbf{j}\rangle|111\rangle\). Equivalently,
\[
    \ket{\xi}
    =
    \sqrt{1-a_{\mathrm{CVA}}}\,\ket{\xi_{\mathrm{bad}}}
    +
    \sqrt{a_{\mathrm{CVA}}}\,\ket{\xi_{\mathrm{good}}},
    \qquad
    \ket{\xi_{\mathrm{good}}}
    =
    \frac{\Pi_{111}\ket{\xi}}{\sqrt{a_{\mathrm{CVA}}}} .
\]
In this sense, estimating the projector expectation is equivalent to
estimating the squared amplitude of the good component, which is the quantity
accessed by quantum amplitude estimation.

\subsubsection{Contrast-aware Bayesian iterative quantum amplitude estimation}
\label{subsec:cabiqae}

Once the CVA functional is encoded into the marked probability
$a_{\mathrm{CVA}}$, the remaining task is to estimate this amplitude from
circuit measurements. As reviewed in Appendix~\ref{appsection:qae}, QAE reduces
the ideal sampling scaling from $\mathcal{O}(\varepsilon^{-2})$ to
$\mathcal{O}(\varepsilon^{-1})$ queries. The central algorithmic contribution of
this work is contrast-aware Bayesian iterative quantum amplitude estimation
(CABIQAE), an estimator developed in this work to make this quantum advantage
operational in the hardware-limited regime. The key idea is to treat the
experimentally observed loss of Grover contrast not as an external imperfection,
but as part of the inference problem itself. CABIQAE therefore links hardware
calibration, Bayesian amplitude inference and circuit-depth selection, so that
amplified circuits are chosen and interpreted according to the statistical
information they can still provide as their depth increases.

Among QFT-free methods,\footnote{QFT-based QAE relies on quantum phase
estimation and therefore requires deep controlled-Grover circuits, which are
not suitable for NISQ devices because noise limits reliably executable circuit
depth \parencite{Preskill2018NISQ}.} we take the recently developed
Beta-BIQAE \parencite{Li2026BIQAE} as the ideal-regime reference. It preserves
the iterative structure of IQAE \parencite{Grinko2021IQAE}, replaces confidence
intervals with Bayesian credible intervals, and propagates posterior
information across Grover depths, achieving the lowest query complexity among
the QFT-free methods benchmarked by \textcite{Li2026BIQAE}. This makes
Beta-BIQAE the natural starting point for the present work: its conjugate
Bayesian update is lightweight and statistically efficient, while the
hardware-sensitive part of the method is concentrated in the ideal observation
model used for amplified circuits.

That observation model assumes that each application of the Grover iterate $Q$
coherently rotates the state within the good--bad subspace (see
Appendix~\ref{appsection:qae},
eq.~\eqref{eq:amplified_probability_appendix}), so that querying $Q^k A$ and
measuring the good subspace yields a success probability that oscillates as a
function of the Grover power $k$. On real hardware, this oscillation is damped
by decoherence and gate errors: beyond a device-dependent depth, additional
Grover applications add query cost and circuit depth without a proportional
gain in information. CABIQAE keeps the Beta-BIQAE Bayesian transport structure,
but replaces the ideal likelihood by a calibrated contrast-decay model and uses
the same calibration to bias the scheduler towards Grover depths that remain
informative.

\paragraph{Noise-aware observation model.}

The Grover structure reviewed in Appendix~\ref{appsection:qae} shows that,
after $k$ Grover applications, the success probability depends on $a$ only
through the latent angle
\begin{equation}
    \vartheta := \arcsin\sqrt{a} \in [0,\pi/2],
    \label{latent_theta_def}
\end{equation}
and equals $\sin^2(K\vartheta)$ with $K = 2k+1$
(eq.~\eqref{eq:amplified_probability_appendix}). This angular
reparameterisation is therefore inherited directly from the geometry of
amplitude amplification, not introduced artificially: the Grover iterate
rotates the prepared state by $2\vartheta$ per application, making $\vartheta$
the natural coordinate for inference.

Thus, in the ideal case the success probability at Grover power $k$ is
\begin{equation}
    q(\vartheta,k) = \sin^2(K\vartheta).
    \label{eq:cabiqae_ideal_probability}
\end{equation}
CABIQAE replaces this by the hardware observation model
\begin{equation}
    p_{\mathrm{obs}}(\vartheta,k)
    =
    b + c(k)\bigl(q(\vartheta,k) - b\bigr),
    \qquad
    c(k) = c_0\exp\!\left(-\frac{2k+1}{\tau_{\mathrm{c}}}\right),
    \label{eq:cabiqae_observation_model}
\end{equation}
where \(b\) is the asymptotic baseline probability towards which the signal
decays as the circuit depth grows, \(c(k)\) is the hardware contrast at Grover
power \(k\), and \(\tau_{\mathrm{c}}\) is the effective contrast-decay scale,
measured in amplification depth \(K=2k+1\). This model extends the effective
contrast model used in BAE by \textcite{Ramoa2025bayesianquantum}
in two ways. First, the baseline \(b\) is not fixed to \(1/2\), but is allowed
to represent the circuit- and observable-dependent noise floor approached by
deep amplified circuits. This is important beyond single-qubit or balanced
readout settings, since the asymptotic probability of a structured good state
need not be \(1/2\). Second, the prefactor \(c_0\) allows for an initial
contrast mismatch already at the shallowest amplified circuit, before the
exponential decay with Grover depth is applied. Together, these two extensions
are particularly relevant for the deeper and more structured circuits that
arise in financial applications. When \(b=1/2\), \(c_0=1\), and
\(\tau_{\mathrm c}\to\infty\), the ideal observation model is recovered; for
finite \(\tau_{\mathrm c}\), the oscillation amplitude shrinks exponentially
with depth, so that the marginal information gain of deeper circuits diminishes
and eventually vanishes.

CABIQAE is implemented as a noise-aware Beta-BIQAE loop. At each stage \(t\),
the algorithm keeps a Grover power \(k_t\), an identifiable interval
\(I_t\subset[0,\pi/2]\) for the latent angle \(\vartheta\), and a Beta prior
over the observed success probability
\(p_{\mathrm{obs}}(\vartheta,k_t)\). A batch of measurements of
\(Q^{k_t}A\) is used to update this prior through the
binomial likelihood induced by the hardware observation model. The resulting
posterior is then pulled back to latent-angle space and converted into a
credible interval for the target amplitude \(a=\sin^2\vartheta\).

If the amplitude interval has half-width below the prescribed tolerance, the
algorithm stops. Otherwise, CABIQAE selects a new Grover power by enforcing
three conditions: the next amplified angle must remain identifiable within a
single branch, the amplification depth must respect the calibrated contrast
scale, and the candidate depth should maximise a posterior-averaged
Fisher-information score per circuit query. When the Grover power changes, the
current latent posterior is transported through the new observation model and
approximated again by a Beta prior. In this way, CABIQAE uses the same Bayesian transport structure as Beta-BIQAE,
but makes the admissible Grover depths depend on the statistical sensitivity
left after contrast loss.

This mechanism determines how the empirical results should be interpreted. In
the high-contrast limit, the likelihood used by CABIQAE collapses to the ideal
amplified model, and the procedure behaves as a Beta-BIQAE-type iterative
estimator. In the hardware-calibrated regime studied here, by contrast, the
useful amplification range is jointly constrained by the target precision and by
the contrast length $\tau_{\mathrm c}$: once the observed probability is pulled
close to the baseline (b), additional Grover applications contribute little
statistical information despite increasing the query cost and circuit depth.
This interpretation is consistent with the benchmarking approach followed by
\textcite{Ramoa2025bayesianquantum} for noise-aware Bayesian amplitude
estimation, where the relevant scaling of BAE is assessed through observed
cost--accuracy curves rather than through a hardware-independent theorem for the
noisy adaptive Bayesian setting. In the same spirit, the complexity evidence for
CABIQAE in this work is empirical and implementation-dependent, based on the
observed error--query behaviour under calibrated contrast loss.

The comparisons reported in the results section are therefore cost--accuracy
comparisons within this finite-contrast regime. CABIQAE is expected to be most
useful precisely when some amplification remains informative, but blindly
increasing the Grover depth would lead to low-contrast measurements and poorly
calibrated posterior updates. This is the operational regime encountered in the
hardware-replay experiments below.

The complete pseudocode and the definitions of the auxiliary modules
\textsc{BayesianUpdate}, \textsc{ComputeCRI}, \textsc{PreparePrior} and
\textsc{FindNextK} are given in Appendix~\ref{app:cabiqae_modules}.
\section{Data and methodology}
\subsection{Market data}

All market inputs are obtained from \textit{LSEG Workspace}. The dataset includes
historical equity-index closes, option-implied volatility surfaces, EUR OIS (overnight indexed swap) discount factors, CDS par spreads and deterministic dividend yields. These inputs
calibrate the equity dynamics, discount curve and default-probability curve used
in the CVA calculation. The full data windows, tenors, interpolation inputs and
retained numerical values are reported in
Appendix~\ref{appsec:classical_market_inputs}.

\subsection{Methodology}
\label{subsec:methodology}
\subsubsection{Classical benchmark}
\label{subsec:classical_benchmark}
Given simulated paths
\(\{\mathbf S^{(n)}_{t_i}\}\),
generated with the exact lognormal transition in~\eqref{eq:exact_step}, each trade in
the netting set is repriced at every exposure date using the Black--Scholes--Merton
formulas in Section~\ref{sec:classical_valuation}. Netting is applied before taking the
positive part of the portfolio value. The piecewise-constant hazard-rate curve is obtained via sequential bootstrapping of
\eqref{eq:cds_par_spread_compact} across quoted CDS maturities.
The continuous-underlying Monte Carlo benchmark is obtained by evaluating
\eqref{eq:cva_mc_compact} at \(t=0\), with standard errors computed from the
corresponding pathwise discounted loss variables. This benchmark, denoted by
\(\widehat{\mathrm{CVA}}^{\mathrm{cont}}_{\mathrm{MC}}\), is used as the main
classical reference.

The same paths are then projected onto the finite time--market grid introduced in
Section~\ref{subsec:quantum_cva_encoding}. For each monitoring date \(t_i\), the
empirical conditional probabilities \(\mathbb{Q}_\mathcal{D}(\mathbf{S}_{t_i}\in B_{\mathbf{j}})\) are obtained by counting
the simulated states falling in each joint price bin \(B_{\mathbf j}\), after truncation and
renormalisation of the price domain. Together with uniform time weights \(w_i=1/M\),
these probabilities define the joint probability tensor as in \eqref{eq:joint_probability_tensor}.

On the same grid, the implementation evaluates the positive-exposure table
\(V^+_{i,\mathbf{j}}\), the discount vector $D(0,t_i)$, the default-increment vector
\(\Delta q(t_i)\), and the scaling constants \(C_v,C_p,C_q\). These objects are the
classical numerical realisation of the quantities encoded by the quantum circuit.

Finally, the finite time-market grid CVA in \eqref{eq:cva_discrete_grid} is
computed from the same discretised inputs later encoded by the quantum circuit and
stored as the direct classical reference for the quantum pipeline. We denote this value by
\(\widehat{\mathrm{CVA}}_{\Delta}^{\mathrm{tab}}\). It is the appropriate benchmark
for evaluating the trained quantum encoding, because it uses exactly the same
probability tensor, exposure table, discount vector and default-increment vector as the
quantum circuit.

By contrast,
\(\widehat{\mathrm{CVA}}^{\mathrm{cont}}_{\mathrm{MC}}\) is used to monitor the
classical approximation error introduced by time monitoring, truncation and finite price
binning. A grid-convergence diagnostic is also performed by increasing the number of asset-grid
bits and comparing the resulting sequence of finite-grid CVA values with the continuous
Monte Carlo benchmark.\footnote{The number of time and asset-register qubits changes
across the hardware-compatible instances. For this reason, the main text describes the
procedure generically.}

The detailed market inputs, numerical parameters, grid specifications and additional computational controls are reported in Appendix~\ref{appsec:classical_market_inputs}.

\subsubsection{Noise-aware amplitude estimation algorithm validation}
\label{subsec:cabiqae_test}

Before using CABIQAE in the final CVA calculation, we first isolate the
amplitude-estimation layer in a reduced validation experiment. The purpose is to
compare the estimators in the regime relevant for the final CVA experiment:
finite-shot inference from amplified circuits whose useful information may
decrease with Grover depth because of hardware-induced contrast loss. The
comparison therefore tracks not only final estimation error and query count, but
also the amplification depths selected by each method and the associated
classical post-processing time.

The study is performed on the reduced three-qubit unitary \(A^{\mathrm{test}}\)
shown in Figure~\ref{fig:three_qubit_parametrised_circuit}. The circuit
preserves the state-preparation, entangling and objective-rotation structure of
the final CVA construction, while remaining shallow enough for noisy simulation
and hardware execution. All gate parameters are fixed arbitrarily except for the
final objective-register rotation \(\varphi\). In the ideal validation regime,
different choices of \(\varphi\) generate several target amplitudes; each
independent trajectory uses one fixed choice. For that trajectory, the success
subspace is defined by the third qubit being in state \(\ket{1}\). The
corresponding true target amplitude is therefore
\begin{equation}
    a_{\mathrm{true}}^{(r)}
    =
    \bra{0}
    \bigl(A^{\mathrm{test}}(\varphi^{(r)})\bigr)^{\dagger}
    \Pi_{\mathrm{2}}
    A^{\mathrm{test}}(\varphi^{(r)})
    \ket{0},
    \qquad
    \Pi_{\mathrm{2}}
    =
    I_{q_0}\otimes I_{q_1}\otimes \ket{1}\!\bra{1}_{q_2}.
    \label{eq:cabiqae_test_target_amplitude}
\end{equation}

The adaptive
estimation algorithm is then executed on this instance, generating the
corresponding trajectory. Different Grover powers within the same trajectory
generate different amplified probabilities \(p_k\), not new target amplitudes. The
reference amplitude \(a_{\mathrm{true}}^{(r)}\) is obtained by
noiseless statevector simulation.\footnote{By noiseless statevector simulation
we mean exact classical simulation of the pure quantum state evolved by the
circuit, without finite-shot sampling noise, gate noise, decoherence or
readout errors. This simulation model is implemented using Qiskit's
Statevector and Aer statevector simulators \parencite{QiskitStatevector,QiskitAerNoiseModels}.}

CABIQAE is compared against two QFT-free amplitude estimation baselines:
BIQAE \parencite{Li2026BIQAE} and BAE \parencite{Ramoa2025bayesianquantum}. BIQAE is
selected as the strongest ideal-regime benchmark considered here, due to its
state-of-the-art proven query complexity; however, since it does not model
hardware-induced contrast loss, its performance is expected to deteriorate under
noisy amplified circuits. Empirically demonstrating this limitation, and
testing whether a contrast-aware correction can overcome it, is one of the
novel contributions of this study. BAE is selected as the natural noise-aware
competitor: it is Bayesian, adaptive, competitive in the ideal regime, and
designed to remain effective under noisy executions. Together, these baselines
provide a stringent test of whether CABIQAE can retain near-ideal QAE efficiency
while improving robustness in the hardware-relevant regime.

The complexity measure is the number of circuit queries to
\(A^{\mathrm{test}}\), denoted by \(N_q\). Because the algorithms are adaptive, one independent run generates a
trajectory of intermediate estimates indexed by the accumulated query cost.
Within such a trajectory, a stage denotes a block of shots collected at a fixed
Grover power. Therefore, if stage \(s\) uses \(n_s\) shots at power \(k_s\), the
accumulated query cost after \(S\) stages is
\begin{equation}
    N_q(S)
    =
    \sum_{s=1}^{S} n_s(2k_s+1),
    \label{eq:cabiqae_testing_query_cost}
\end{equation}
The estimate recorded after stage \(S\) is therefore associated with query cost
\(N_q(S)\). The final cost of the run is
\(N_q=N_q(S_{\mathrm{final}})\).
We also record classical wall-clock runtime, since Bayesian methods
incur non-negligible posterior-update, resampling, interval-inversion,
and scheduler-optimisation costs.

For each recorded trajectory point \((r,s)\), the estimator
\(\widehat a^{(r)}_{s}\)\footnote{For IQAE, BIQAE and CABIQAE, this is the
midpoint of the current amplitude interval; for BAE, it is the posterior
mean produced by the sequential Monte Carlo update \parencite{Ramoa2025bayesianquantum}.
Thus, each method is evaluated using its native point estimator.}
is the point estimate stored at query cost \(N^{(r)}_{q}(s)\), with relative error
\(\epsilon^{(r)}_{rel,s}=|\widehat a^{(r)}_s-a_{\mathrm{true}}^{(r)}|/a_{\mathrm{true}}^{(r)}\). Since adaptive
methods update and terminate at different query costs, trace points are
grouped separately for each algorithm into logarithmic \(N_q\)-bins. For a
bin \(B_j\), the plotted point is
\[
    \left(
    \operatorname{median}_{(r,s)\in B_j} N_{q}^{(r)}(s),
    \operatorname{median}_{(r,s)\in B_j}
    \epsilon^{(r)}_{rel,s}
    \right).
\]
\noindent
We use median errors as robust trajectory summaries; by Appendix~\ref{appsection:qae}, they preserve the usual ideal exponents. The uncertainty bands for each point are obtained by bootstrap resampling of this
median.\footnote{Bootstrap resampling is used because each bin typically
contains a limited number of independent runs, making dispersion summaries such
as the standard deviation a poor indicator of uncertainty for a median-based
curve. The bootstrap directly estimates the sampling variability of the plotted
median under the available finite ensemble.}

We also plot the reference curves
\begin{equation}
    g_{\mathrm{QAE}}(N_q)=y_0\frac{N_0}{N_q},
    \qquad
    g_{\mathrm{MC}}(N_q)=y_0\sqrt{\frac{N_0}{N_q}},
    \label{eq:cabiqae_testing_reference_lines}
\end{equation}
as visual guides for Heisenberg\footnote{The Heisenberg-limited regime in the context of quantum amplitude estimation is briefly reviewed in Appendix~\ref{appsection:qae}.} and Monte Carlo scaling, respectively. Here \(N_0\) is the
smallest plotted representative query cost, and \(y_0\) is the value at
\(N_0\) of the log--log fit through the plotted binned points, following the
intercept convention of \textcite[Appendix~G]{Ramoa2025bayesianquantum}.

The main-text comparison considers two execution regimes. In the ideal
regime, the adaptive algorithms sample each requested Grover power \(k\)
from the Bernoulli model in \eqref{eq:cabiqae_ideal_probability}, so that the
only randomness is finite-shot binomial noise. In the hardware regime, the IBM
backend is used once to measure the experimentally realised response of the
transpiled circuits. After readout calibration, which is explained below, this produces mitigated
probabilities \(\widehat p_{\mathrm{hw}}(1\mid k)\) for the admissible Grover
powers. These probabilities define a hardware-replay model: whenever BAE,
BIQAE or CABIQAE requests shots at \(k\), counts are generated as
\[
    n_1(k)\sim
    \operatorname{Binomial}
    \left(N_{\mathrm{shots}},\widehat p_{\mathrm{hw}}(1\mid k)\right).
\]
The hardware-replay methodology therefore separates the expensive hardware characterisation
step from the statistical evaluation of the estimators. It enables many
independent algorithmic trajectories to be generated from the same measured
hardware response, without executing every adaptive trajectory live on the QPU.

The number of target amplitudes differs between regimes. In the ideal regime,
the validation trajectories use several amplitudes generated by different
values of \(\varphi\). In the hardware-replay regime reported here, the
algorithms are run against a single hardware-calibrated amplitude. This is a
scope choice of the hardware experiment, not a property of the replay mechanism:
changing the hardware amplitude would require executing and calibrating a new
amplified-circuit family for the new \(\varphi\), which would make the required
QPU time prohibitive.

\begin{figure}[htbp]
    \centering
    \includegraphics[width=0.7\textwidth]{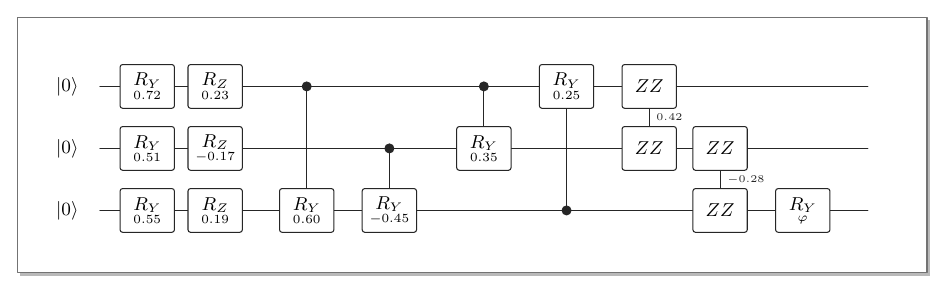}
    \caption{Three-qubit circuit $A^{\mathrm{test}}$ used to validate CABIQAE. The final rotation angle $\varphi$ is left free to generate different target amplitudes.}
    \label{fig:three_qubit_parametrised_circuit}
\end{figure}

The contrast scale \(\tau_{\mathrm c}\) is calibrated
empirically\footnote{This is a supervised, circuit-specific calibration: the ideal
probabilities are used only to fit the effective noise parameters, not as
inputs to the estimator. Although one could approximate them from IBM backend
calibration data, this would require additional modelling assumptions.
Auxiliary calibration circuits could also introduce transfer bias because
their depth, layout, and gate structure would differ from those of the target
circuit. We therefore calibrate the target circuit family directly, as a
pragmatic oracle-assisted characterisation rather than a fully blind
end-to-end protocol.}, for the single hardware-calibrated amplitude. With
\(\vartheta_0=\arcsin\sqrt{a_{\mathrm{true}}}\) as in
\eqref{latent_theta_def}, the calibration varies the Grover power \(k\).
Writing \(q_k=q(\vartheta_0,k)\), \(K=2k+1\), and denoting by
\(\widehat p_k\) the observed probability $p_{\mathrm{obs}}( \vartheta_0,k)$, the contrast sample is
\[
    \widehat c_k
    =
    \frac{\widehat p_k-b}{q_k-b}.
\]
Calibration points with unstable denominators or non-physical contrasts are
discarded, and the remaining values are fitted to the exponential contrast
model through
\[
    \log \widehat c_k
    =
    \beta_0+\beta_1 K+\varepsilon_k,
    \qquad
    \widehat\tau_{\mathrm c}=-1/\widehat\beta_1
    \quad(\widehat\beta_1<0).
\]
If no valid negative-slope fit is obtained, the calibration is rejected and the
ideal observation model is used for that hardware instance. Otherwise,
\(\widehat\tau_{\mathrm c}\) is used both in the likelihood correction
\eqref{eq:cabiqae_observation_model} and in the noise-aware selection of Grover
depths.

Readout mitigation is estimated from objective-register calibration circuits.
With
\[
    r_0=\widehat p(1\mid0),\qquad
    r_1=\widehat p(1\mid1),\qquad
    \Delta_{\mathrm{read}}=r_1-r_0,
\]
the raw probability is corrected, when the readout calibration is stable, as
\[
    \widehat p^{\,\mathrm{mit}}_k
    =
    \Pi_{[0,1]}
    \left(
        \frac{\widehat p^{\mathrm{\,raw}}_k-r_0}{\Delta_{\mathrm{read}}}
    \right).
\]
The binomial standard error of the raw estimate,
\[
    \widehat\sigma^{\mathrm{\,raw}}_{p,k}
    =
    \sqrt{
    \frac{
    \widehat p^{\mathrm{\,raw}}_k
    \bigl(1-\widehat p^{\mathrm{\,raw}}_k\bigr)}
    {n_k}
    },
\]
is propagated through the affine readout correction and then through the
contrast estimator:
\[
    \widehat\sigma^{\mathrm{\,mit}}_{p,k}
    \simeq
    \frac{\widehat\sigma^{\mathrm{\,raw}}_{p,k}}{|\Delta_{\mathrm{read}}|},
    \qquad
    \widehat\sigma_{c,k}
    =
    \frac{\widehat\sigma^{\mathrm{\,mit}}_{p,k}}{|q_k-b|}.
\]
Hardware calibration points are retained only when the denominator is stable,
the contrast is positive and unsaturated, and
\(\widehat c_k\) is statistically resolved from zero relative to
\(\widehat\sigma_{c,k}\).

\subsubsection{Quantum pipeline}
\label{subsec:quantum_pipeline}
\paragraph{Quantum encoding and circuit training.}
In the numerical implementation considered in this work, the state register
contains two time qubits and two qubits per underlying for a two-underlying
exposure grid. Thus, \(m=2\) and \(n=4\), giving a total of
\(m+n=6\) qubits. Although the overall pipeline is not restricted to six qubits, the variational
ansatz circuits considered in this implementation are designed for this
six-qubit register. Their architectures are selected to satisfy the connectivity constraints of the target hardware while retaining sufficient expressive capacity for the required distributional encoding, with a more detailed discussion of the hardware-aware design choices provided in Appendix~\ref{app:hardware}. Extending
the same ansatz families to finer discretisations or higher-dimensional state
registers would generally require redesigning the circuit architectures, and
would not necessarily preserve the same hardware-friendly structure.\footnote{
In this context, an ansatz denotes a parametrised family of quantum circuits. Training amounts to optimising its free rotation angles so that the circuit
prepares a state whose measurement distribution approximates the target
distribution. Further details on the hardware topology, connectivity
constraints and backend selection are provided in Appendix~\ref{app:hardware}.}

Once the cardinalities of the quantum registers have been fixed, the
classically constructed finite-grid objects
\(\mathcal{P}_{i,\mathbf{j}}, \tilde{v}_{i,\mathbf{j}}, \tilde{p}_i\) and
\(\tilde{q}_i\) are taken as inputs to the quantum pipeline. Writing \(x=(i,\mathbf{j})\in\mathcal X\) for a computational basis state of
this register, the target distribution to be encoded is
\[
    P_{\mathrm{target}}(x) := \mathcal{P}_{i,\mathbf{j}},
    \qquad
    \sum_{x\in\mathcal X}P_{\mathrm{target}}(x)=1 .
\]

The state-preparation unitary \(G_{\mathcal P}\) is implemented approximately
using a Quantum Circuit Born Machine (QCBM) \parencite{Liu2018QCBM}. More precisely,
we use  a parametrised circuit
\(G_{\theta}\), whose architecture is shown in
Figure~\ref{fig:qcbm_heavyhex6_two_layer_pattern}. The probability distribution
generated by measuring this circuit in the computational basis is its Born
distribution,
\[
    P_{\boldsymbol{\theta}}(x)
    :=
    \left|
        \langle x|G_{\theta}|0^{\otimes(m+n)}\rangle
    \right|^2 .
\]
Training the QCBM consists of optimising the parameters
\(\theta\) so that \(P_{\theta}\) approximates the
target finite-grid distribution \(\mathcal P\). Following \textcite{Alcazar2021CVA}, the QCBM parameters are trained by minimising
the regularised cross-entropy between the target distribution and the
distribution generated by the circuit,
\[
    \mathcal L_{\mathrm{QCBM}}(\theta)
    =
    -\sum_{x\in\mathcal X}
        P_{\mathrm{target}}(x)
        \log\!\left(P_\theta(x)+\epsilon_{\mathrm{num}}\right),
\]
where \(\epsilon_{\mathrm{num}}>0\) is a small numerical constant introduced to
avoid logarithmic instabilities when a generated probability is close to zero. In the
statevector setting, \(P_\theta\) is obtained exactly from the simulated quantum
state. In finite-shot or noisy executions, \(P_\theta\) is replaced by the
empirical frequency distribution
\(\widehat P_{\theta}^{(N_{\mathrm{shots}})}\) induced by circuit
measurements, with zero-frequency outcomes handled through the same additive
floor \(\epsilon_{\mathrm{num}}\).

\begin{figure}[htbp]
    \centering
    \includegraphics[width=0.6\textwidth]{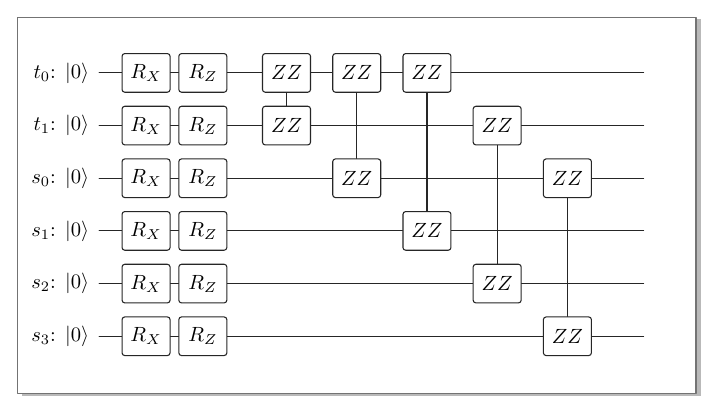}
    \caption{{
Hardware-compatible QCBM ansatz used for the six-qubit state-preparation block. Each variational block combines local single-qubit rotations 
with parametrised \(R_{ZZ}\) entanglers placed along a connectivity pattern chosen 
to respect the target heavy-hex hardware constraints. 
}}
    \label{fig:qcbm_heavyhex6_two_layer_pattern}
\end{figure}

Although the optimisation is carried out through the above loss, the quality of
the trained QCBM is assessed in terms of the numerically regularised
Kullback--Leibler diagnostic from the target distribution to the generated
distribution,
\[
    \operatorname{KL}_{\epsilon}\!\left(P_{\mathrm{target}}\|P_\theta\right)
    =
    \sum_{x:P_{\mathrm{target}}(x)>0}
    P_{\mathrm{target}}(x)
    \log
    \left(
        \frac{P_{\mathrm{target}}(x)}
             {P_\theta(x)+\epsilon_{\mathrm{num}}}
    \right).
\]
This quantity measures the distributional discrepancy between the finite-grid
classical benchmark and the probability law prepared by the quantum circuit,
with the numerical floor preventing singular contributions when the generated
probability is close to zero. 

The remaining quantum subroutines correspond to the controlled rotations
\(R_v\), \(R_p\), and \(R_q\), which encode the finite-grid quantities
\(\widetilde v_{i,\mathbf{j}}\), \(\widetilde p_i\), and \(\widetilde q_i\),
respectively, as described in Section~\ref{subsec:quantum_cva_encoding}.
Following the quantum CVA construction of \textcite{Alcazar2021CVA},
these rotations are implemented through Controlled Rotations Circuit Ansatz
(CRCA) blocks.

\begin{figure}[htbp]
    \centering
    \includegraphics[width=0.6\textwidth]{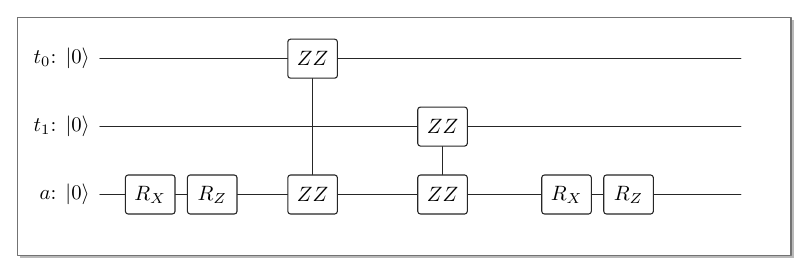}
    \caption{Native-tree CRCA ansatz used for the time-dependent controlled rotations $R_p$ and $R_q$.
    }
\label{fig:crca_default_probabilities_native_tree_ansatz}
\end{figure}

The purpose of each CRCA block is to approximate the deterministic map that
assigns to each computational basis state \(x\) a rescaled quantity
\(f_{\mathrm{target}}(x)\in[0,1]\). For a generic target function
\(f_{\mathrm{target}}:\mathcal X_f\to[0,1]\), the trainable circuit induces an approximation
\(F_\phi(x)\) to \(f_{\mathrm{target}}(x)\), where \(F_\phi(x)\) denotes the probability of
measuring the corresponding ancilla in state \(|1\rangle\) conditional on the
input \(x\). Equivalently, the controlled rotation implements
\[
    |0\rangle
    \longmapsto
    \sqrt{1-F_\phi(x)}\,|0\rangle
    +
    \sqrt{F_\phi(x)}\,|1\rangle .
\]
The parameters \(\phi\) are obtained by minimising the mean-square discrepancy
over the relevant finite grid,
\[
    \mathcal L_f(\phi)
    =
    \frac{1}{|\mathcal X_f|}
    \sum_{x\in\mathcal X_f}
    \left(F_\phi(x)-f_{\mathrm{target}}(x)\right)^2 .
\]

The discount-factor and default-probability rotations \(R_p\) and \(R_q\), implemented through the parametrised circuits \(R_p(\phi_p)\) and \(R_q(\phi_q)\), respectively, use the same native-tree CRCA topology shown in Figure~\ref{fig:crca_default_probabilities_native_tree_ansatz}. They differ only in the ancilla qubits on which they act and in the target functions used during training,
\[
    f_p(i)=\widetilde p_i,
    \qquad
    f_q(i)=\widetilde q_i .
\]
Both rotations condition exclusively on the time register, since the discount
factor and the default-probability increment are deterministic functions of the
monitoring date.

By contrast, the exposure rotation \(R_v\), implemented as $R_v(\phi_v)$, must condition on the full
time-market register, with target function
\[
    f_v(i,\mathbf{j})=\widetilde v_{i,\mathbf{j}} .
\]
This makes \(R_v(\phi_v)\) the most demanding controlled-rotation block, because the
positive exposure is generally non-linear and non-separable across the time and
market coordinates. The ansatz used for this exposure encoder is represented in
Figure~\ref{fig:crca_heavy_hex_star_two_subcircuits}.

\begin{figure}[htbp]
    \centering
    \includegraphics[width=1.0\textwidth]{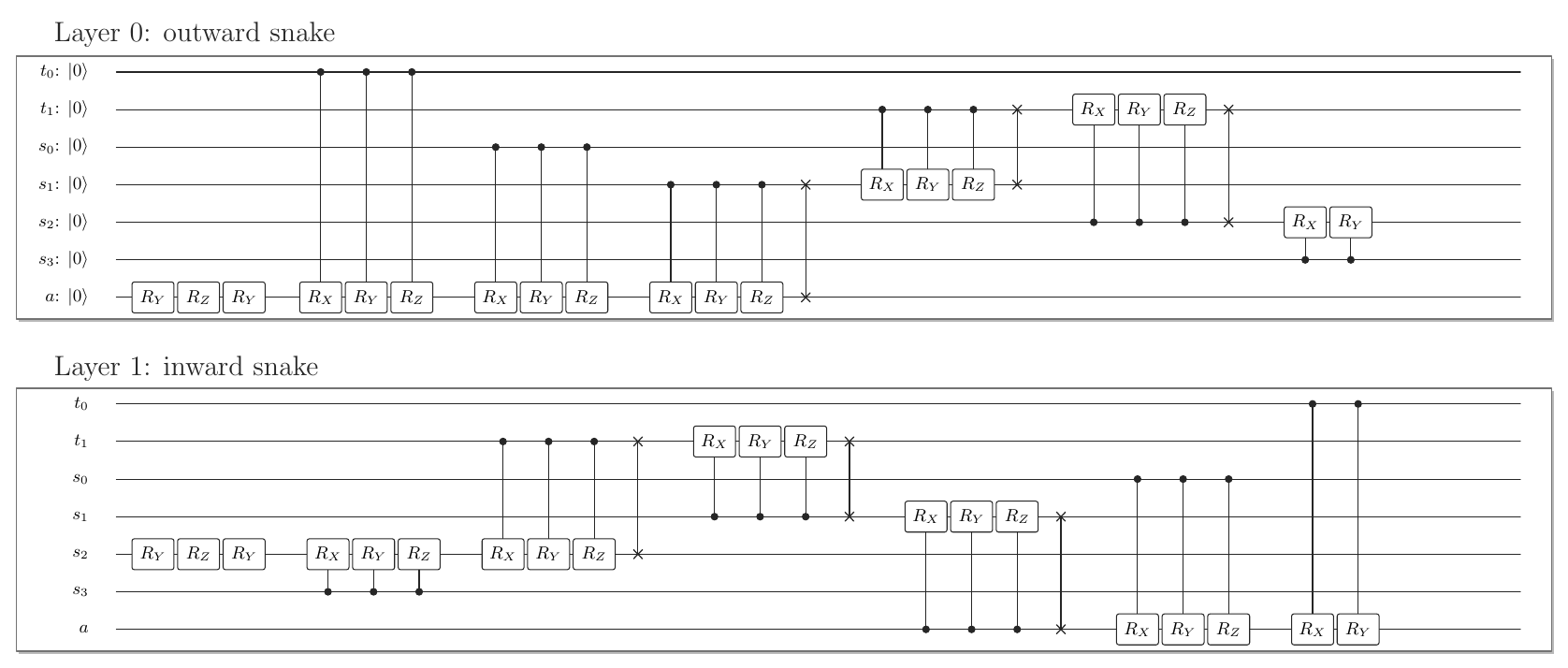}
    \caption{CRCA used for the exposure rotation $R_v$. The displayed outward and inward
snake layers route information between the computational register and the exposure ancilla while preserving a
hardware-aware entangling structure.
    }
\label{fig:crca_heavy_hex_star_two_subcircuits}
\end{figure}

All circuit diagrams shown in this section for the QCBM and CRCA blocks
represent a single variational layer, namely $L=1$, of the corresponding ansatz. The full
trainable circuits used in the experiments are obtained by composing several
repetitions of these layers, with independent parameters in each repetition.

\paragraph{CVA reconstruction and amplitude estimation stage.} \label{par:noiseless_cva}
After training, the four circuit blocks are assembled into the full CVA
encoding unitary
\[
    A_{\Theta}
    =
    R_q(\phi_q^\star)\,
    R_p(\phi_p^\star)\,
    R_v(\phi_v^\star)\,
    G_{\theta^\star},
\]
where \(\Theta=(\theta^\star,\phi_v^\star,\phi_p^\star,\phi_q^\star)\) is the array containing the trained parameters.

Defining
\[
    \widehat{\mathcal P}_{i,\mathbf{j}}
    :=
    P_{\theta^\star}(i,\mathbf{j}),
    \qquad
    \widehat v_{i,\mathbf{j}}
    :=
    F_{\phi_v^\star}(i,\mathbf{j}),
    \qquad
    \widehat p_i
    :=
    F_{\phi_p^\star}(i),
    \qquad
    \widehat q_i
    :=
    F_{\phi_q^\star}(i),
\]
in the noiseless statevector setting, the success probability of the trained
CVA circuit is
\[
\begin{aligned}
    \widehat a_{\mathrm{SV}}(\Theta)
    &=
    \langle 0^{\otimes(m+n+3)}|
    A_{\Theta}^{\dagger}
    \Pi_{111}
    A_{\Theta}
    |0^{\otimes(m+n+3)}\rangle                                    \\
    &=
    \sum_{i,\mathbf{j}}
        \widehat{\mathcal P}_{i,\mathbf{j}}\,
        \widehat v_{i,\mathbf{j}}\,
        \widehat p_i\,
        \widehat q_i .
\end{aligned}
\]

The corresponding encoded CVA is recovered through
\[
    \widehat{\mathrm{CVA}}_{\mathrm{SV}}(\Theta)
    =
    M(1-R_{\mathrm{CVA}})C_vC_pC_q\,
    \widehat a_{\mathrm{SV}}(\Theta).
\]

The amplitude estimation experiment uses the same trained CVA encoding unitary
\(A_{\Theta}\), with no further modification of the encoder. The marked
subsystem is the three-ancilla register, and the good event is the bitstring
\(111\).  An amplitude estimation routine returns
\(\widehat{a}_{\mathrm{AE}}(\Theta)\), which is converted back to monetary CVA
units as
\[
    \widehat{\mathrm{CVA}}_{\mathrm{AE}}(\Theta)
    =
    M(1-R_{\mathrm{CVA}})C_vC_pC_q\,
    \widehat{a}_{\mathrm{AE}}(\Theta).
\]

CABIQAE and BIQAE are then compared with direct circuit sampling (DCS) on this common trained
amplitude, using the same query-cost metric \(N_q\). The latter baseline
corresponds to repeated unamplified sampling of the circuit success bit, not to
a classical Monte Carlo simulation of the financial model. In the ideal regime, measurement outcomes
are sampled from the exact Grover amplification law \eqref{eq:cabiqae_ideal_probability}
equivalently from the exact statevector probability. In the hardware-replay
regime, amplified circuits are first executed on the selected backend for a
fixed grid of Grover powers. The resulting empirical success probabilities are
then used to replay the adaptive trajectories of CABIQAE and BIQAE under
identical hardware response data, following the calibration methodology of
Section~\ref{subsec:cabiqae_test}. This avoids running every adaptive trajectory
directly on quantum hardware, which would require a prohibitive amount of
quantum compute time, while still preserving the experimentally observed
depth-dependent response of the device. For the hardware runs, we use the IBM Quantum backends described in
Appendix~\ref{app:hardware}. The CVA amplitude estimation experiments further
employ Q-CTRL's Performance Management Qiskit Function to reduce the
effect of hardware errors \parencite{QCTRLPerformanceManagement2026}.
Implementation details are provided in the project
repository \parencite{AuthorRepo2026}.

\section{Experimental results}
\subsection{Proposed algorithm validation results}
\label{subsec:noise_aware_ae_results}

All validation results use \(R=300\) independent trajectories, as described in
Section~\ref{subsec:cabiqae_test}. In the ideal regime these trajectories cover
several target amplitudes, whereas the hardware-replay regime uses a single
hardware-calibrated target amplitude. These numerical controls and the
algorithmic hyperparameters are reported in
Table~\ref{tab:ae_hyperparameters} of
Appendix~\ref{appsec:classical_market_inputs}. Table~\ref{tab:reduced_instance_summary_compact}
and Figure~\ref{fig:ae_reduced_main_results} summarise the reduced instance
results in the ideal and hardware-replay regimes, with resource requirements
reported in Appendix~\ref{appsec:amplified_resource_tables}, Table \ref{tab:atest_amplified_logical_transpiled_resources}. The direct circuit sampling (DCS) curve corresponds to unamplified sampling of
the original circuit, and is included as the circuit-level analogue of a Monte
Carlo probability baseline. In noisy regimes, unlike a purely classical Monte
Carlo estimator, it also includes hardware-induced effects.

In the noiseless regime, Figure~\ref{fig:ae_reduced_main_results} (top) shows the expected separation between amplified amplitude estimation and direct sampling. BAE, BIQAE and CABIQAE follow the \(O(1/N_q)\) reference trend at large query budgets, while DCS follows the slower \(O(1/\sqrt{N_q})\) Monte Carlo behaviour. The three amplified routines remain close throughout most of the trajectory: CABIQAE attains the lowest median final error, \(8.02\times10^{-5}\), BIQAE uses the smallest median query budget among the amplified routines, and all three remain well below the DCS error level. The log--log fits in Table~\ref{tab:empirical_error_scaling_fit_metrics} confirm this separation, with fitted exponents close to the ideal amplified rate for BAE, BIQAE and CABIQAE,
\[
    \beta=-0.96\pm0.02,\qquad
    \beta=-0.93\pm0.04,\qquad
    \beta=-0.92\pm0.04,
\]
respectively, whereas DCS gives \(\beta=-0.51\pm0.01\), consistent with standard Monte Carlo sampling.

The hardware-replay regime in Figure~\ref{fig:ae_reduced_main_results} (bottom) is qualitatively different. The noise-naive BIQAE curve stops improving once amplified circuits enter the high-\(K\) noisy regime and settles near a visible noise floor, in line with the warning by \textcite[p.~15]{Li2026BIQAE} that large Grover depths on near-term devices may prevent the predicted gains from being realised. This behaviour is also consistent with the calibrated contrast scale in Table~\ref{tab:combined_calibration_summary}: the exponential contrast model predicts substantial signal degradation when the amplification factor reaches \(K\simeq\tau_c\simeq34\). DCS shows a related degradation, although it remains below BIQAE over the reported range.

By contrast, BAE and CABIQAE continue reducing the error over a broader query interval, though with degraded effective convergence. Their fitted exponents soften from near-\(N_q^{-1}\) behaviour to
\[
    \beta=-0.68\pm0.03,
    \qquad
    \beta=-0.69\pm0.03,
\]
respectively, while still improving faster than hardware-replay DCS, \(\beta=-0.30\pm0.05\). This average hides two regimes: CABIQAE, more clearly than BAE, preserves an amplified-improvement window from roughly \(N_q\simeq10^3\) to just below \(10^4\), where additional queries still exploit informative moderate-depth Grover circuits. Beyond this window, the curves flatten or deteriorate as queries increasingly come from low-contrast high-\(K\) circuits, and the advantage over Monte Carlo is lost. BIQAE, being noise-naive, collapses to \(\beta=-0.25\pm0.04\) because it continues treating deep amplified circuits as informative.

\begin{table}[H]
\centering
\normalsize
\setlength{\tabcolsep}{7pt} 
\renewcommand{\arraystretch}{1.3} 

\caption{
Reduced test-instance summary for the ideal and hardware-replay regimes.
All errors are reported as the median relative error of the final estimates across trajectories. The quantities
$N_q^{50}$, $t_{\mathrm{cl}}^{50}$, and $K_{50}$ denote, respectively,
the median final $A_{\mathrm{test}}$-query cost, the median classical wall-clock runtime,
and the median maximum Grover amplification factor reached during an
independent run. The coverage column (Cov.) reports the empirical fraction
of runs for which the true amplitude lies inside the final interval returned
by the algorithm. Classical runtimes refer only to classical post-processing
and exclude quantum execution and queueing times.
}
\label{tab:reduced_instance_summary_compact}

\begin{tabular}{@{} l c c c c c @{}}
\toprule[1.2pt]
\multicolumn{6}{c}{\textbf{Noiseless}} \\
\midrule[0.8pt]
Algorithm & $\epsilon^{50}$ & $N_q^{50}$ & $t_{\mathrm{cl}}^{50}$ & Cov. & $K_{50}$ \\
\midrule
BAE     & $9.40 \times 10^{-5}$          & $6.88 \times 10^4$ & $17.4$                  & $0.96$ & $1531$ \\
BIQAE   & $1.07 \times 10^{-4}$          & $3.88 \times 10^4$ & $3.91$                  & $0.99$ & $1383$ \\
CABIQAE & $8.02 \times 10^{-5}$ & $5.20 \times 10^4$ & $4.84$                  & $0.92$ & $1911$ \\
DCS     & $1.91 \times 10^{-3}$          & $8.10 \times 10^4$ & $7.04 \times 10^{-3}$   & --     & --     \\
\bottomrule[1.2pt]
\end{tabular}

\vspace{0.8em} 

\begin{tabular}{@{} l c c c c c @{}}
\toprule[1.2pt]
\multicolumn{6}{c}{\textbf{Hardware-replay}} \\
\midrule[0.8pt]
Algorithm & $\epsilon^{50}$ & $N_q^{50}$ & $t_{\mathrm{cl}}^{50}$ & Cov. & $K_{50}$ \\
\midrule
BAE     & $1.77 \times 10^{-3}$ & $4.80 \times 10^4$ & $45.72$                 & $0.82$  & $33$  \\
BIQAE   & $1.99 \times 10^{-2}$ & $5.87 \times 10^4$ & $1.61$                  & $0.015$ & $213$ \\
CABIQAE & $1.82 \times 10^{-3}$ & $4.95 \times 10^4$ & $1.25$                  & $0.724$ & $23$  \\
DCS     & $1.56 \times 10^{-2}$ & $6.55 \times 10^4$ & $2.99 \times 10^{-5}$   & --      & --    \\
\bottomrule[1.2pt]
\end{tabular}
\end{table}

Overall, Table~\ref{tab:reduced_instance_summary_compact} shows that BAE attains a slightly smaller hardware-replay median final error than CABIQAE, \(1.77\times10^{-3}\) versus \(1.82\times10^{-3}\), at a comparable median query budget. The practical difference is the classical and circuit-depth cost: CABIQAE uses shallower amplified circuits, \(K_{50}=23\) versus \(33\) for BAE, and reduces the median classical post-processing runtime from \(45.72\) seconds to \(1.25\) seconds. The runtime panels in Appendix~\ref{appsec:ae_extended_results}, Figure~\ref{fig:ae_validation_final_error_density}, show the same separation at the trajectory level: BAE occupies a much slower classical-runtime region for essentially the same final-error scale. Thus, although BAE gives the lowest final median error in the reduced hardware-replay benchmark, CABIQAE is the more suitable routine for repeated noisy hardware executions under realistic runtime constraints.

BAE's classical post-processing overhead can become a practical bottleneck in
larger noisy amplitude estimation tasks. Table~\ref{tab:reduced_instance_summary_compact}
and Figure~\ref{fig:ae_validation_final_error_density} show that CABIQAE reaches
essentially the same hardware-replay final-error scale as BAE with a much lower
classical runtime. These results therefore motivate using CABIQAE in the full CVA
experiment, where accuracy must be balanced against quantum query cost,
classical runtime and interval reliability.

\begin{figure}[H]
    \centering
    \begin{subfigure}{0.8\textwidth}
        \centering
        \includegraphics[width=\linewidth]{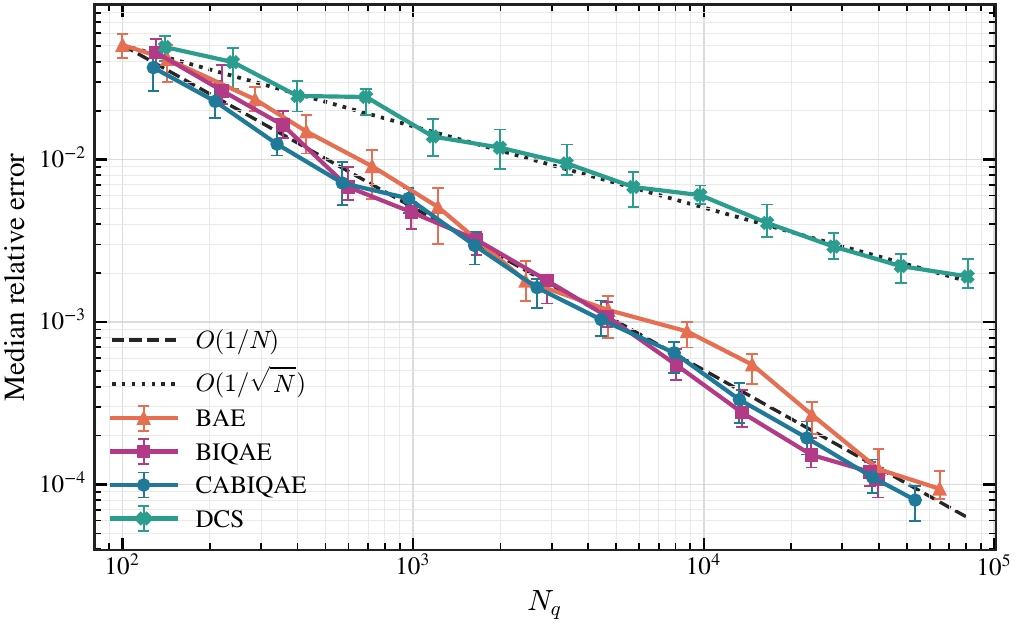}
    \end{subfigure}
    
    \vspace{0.5cm} 
    
    \begin{subfigure}{0.8\textwidth}
        \centering
        \includegraphics[width=\linewidth]{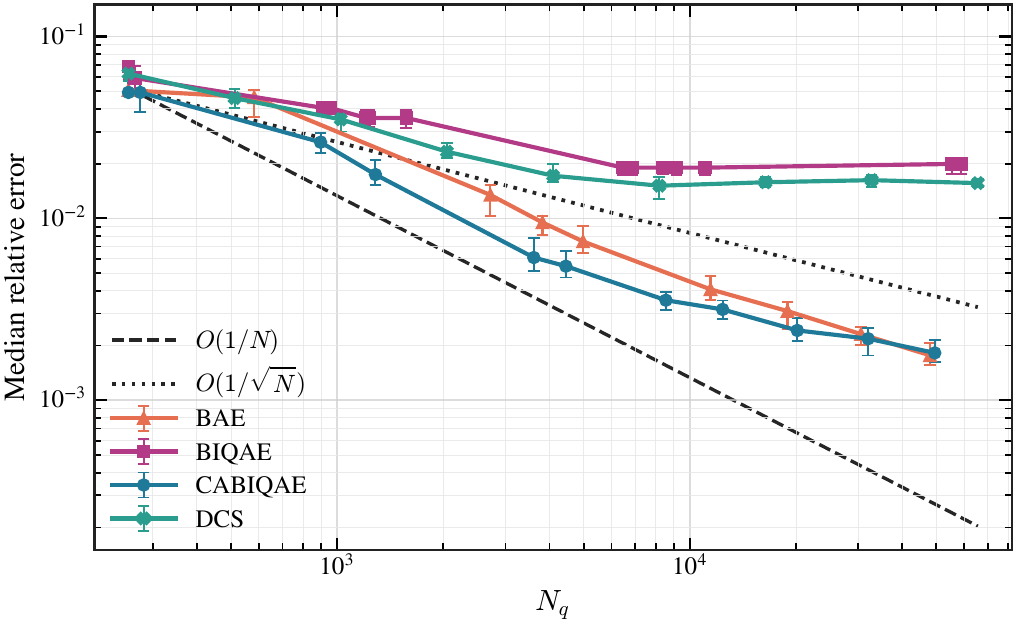}
    \end{subfigure}
    
    \caption{
    Reduced test-instance amplitude estimation results in the noiseless (top) and
    hardware-replay (bottom) regimes.
    }
    \label{fig:ae_reduced_main_results}
\end{figure}

\begin{table}[H]
\centering
\normalsize
\setlength{\tabcolsep}{3.4pt}
\renewcommand{\arraystretch}{1.12}
\caption{
Empirical query-scaling fits of the median relative error using
$\log \varepsilon=\alpha+\beta\log N_q$. Values near $\beta=-1$ indicate
amplified QAE-like scaling, while $\beta\simeq-1/2$ corresponds to Monte Carlo
sampling. Uncertainties are standard errors; RMSE denotes the fit root mean square error.
}
\label{tab:empirical_error_scaling_fit_metrics}
\begin{tabular}{@{} l c c c c @{}}
\toprule[1.2pt]
\multicolumn{5}{c}{\textbf{Noiseless}} \\
\midrule[0.8pt]
Algorithm & $\beta$ & $\alpha$ & RMSE & $R^2$ \\
\midrule
BAE     & $-0.96\pm 0.02$ & $1.47\pm 0.19$ & 0.15 & 0.994 \\
BIQAE   & $-0.93\pm 0.04$ & $0.89\pm 0.29$ & 0.35 & 0.980 \\
CABIQAE & $-0.92\pm 0.04$ & $0.80\pm 0.29$ & 0.35 & 0.982 \\
DCS     & $-0.51\pm 0.01$ & $-0.53\pm 0.08$ & 0.10 & 0.995 \\
\bottomrule[1.2pt]
\end{tabular}

\vspace{0.8em}

\begin{tabular}{@{} l c c c c @{}}
\toprule[1.2pt]
\multicolumn{5}{c}{\textbf{Hardware-replay}} \\
\midrule[0.8pt]
Algorithm & $\beta$ & $\alpha$ & RMSE & $R^2$ \\
\midrule
BAE     & $-0.68\pm 0.03$ & $0.94\pm 0.29$ & 0.16 & 0.981 \\
BIQAE   & $-0.25\pm 0.04$ & $-1.44\pm 0.32$ & 0.17 & 0.853 \\
CABIQAE & $-0.69\pm 0.03$ & $0.77\pm 0.29$ & 0.18 & 0.976 \\
DCS     & $-0.30\pm 0.05$ & $-1.28\pm 0.39$ & 0.19 & 0.869 \\
\bottomrule[1.2pt]
\end{tabular}
\end{table}

\subsection{Classical CVA benchmark and discretisation analysis}
\label{sec:classical_benchmark_results}

 After considering several alternative derivative-portfolio configurations, we select as the baseline case a portfolio composed of two derivatives written on two correlated underlying assets. Specifications of this two-asset netting set can be found in Table \ref{tab:portfolio_two_assets_instance}. This choice reflects a compromise between trainability, circuit depth, and the need to retain a sufficiently realistic use case for credit valuation adjustment.

\begin{table}[htbp]
\centering
\caption{Portfolio composition for the two-asset CVA instance.}
\label{tab:portfolio_two_assets_instance}

\setlength{\tabcolsep}{5pt}
\renewcommand{\arraystretch}{1.15}

\begin{tabularx}{\textwidth}{@{}
    c
    l
    >{\raggedright\arraybackslash}X
    c
    c
    c
    c
@{}}
\toprule

\multicolumn{7}{@{}l@{}}{\textbf{Portfolio composition}} \\
\midrule

Instrument & Underlying & Type & Position & Multiplier & Strike & Maturity \\
\midrule

1
& EURO STOXX 50
& European call
& Long
& 4
& 4500
& 0.25 \\

2
& SMI
& European put
& Short
& 2
& 12500
& 0.50 \\

\bottomrule
\end{tabularx}
\end{table}
\vspace{-5pt}
\paragraph{Classical Monte Carlo CVA.}
For the portfolio in Table~\ref{tab:portfolio_two_assets_instance}, the
continuous-underlying reference value is computed by Monte Carlo simulation
under the calibrated risk-neutral multi-asset model described in
Section~\ref{sec:classical_cva_valuation}. Using the numerical controls
reported in Appendix~\ref{app:classical_benchmark_details}, we obtain
\begin{equation}\label{val:mc_cont_cva}
    \widehat{\mathrm{CVA}}^{\mathrm{cont}}_{\mathrm{MC}}
    =
    1.091 \pm 0.001.
\end{equation}
This estimate is used as the baseline reference for the subsequent analysis.
It is not affected by finite-grid discretisation, domain truncation, quantum
encoding error or amplitude estimation error. 

\paragraph{Classical finite-grid CVA.}
The same Monte Carlo simulation is then projected onto the finite
time--market grid used by the quantum CVA pipeline. In the full-pipeline
instance analysed below, each underlying is encoded with \(n_1=n_2=2\)
price-register qubits, corresponding to \(N_1=N_2=4\) price bins and a
\(4\times4\) joint market grid at each exposure date. Together with the
\(m=2\) time-register qubits required by the four exposure dates, this gives a
six-qubit time--market register, before adding the ancillas used in the CVA
encoding.

\begin{table}[H]
\centering
\normalsize
\caption{Finite-grid numerical inputs used in the two-asset quantum CVA pipeline in the six-qubit instance considered.}
\label{tab:finite_grid_cva_instance}

\setlength{\tabcolsep}{5pt}
\renewcommand{\arraystretch}{1.15}

\begin{tabularx}{\textwidth}{@{}
    c
    l
    c
    c
    >{\raggedright\arraybackslash}X
@{}}
\toprule

\multicolumn{5}{@{}l@{}}{\textbf{Underlying discretisation}} \\
\midrule

\(k\) & Underlying
& \(\mu_{S,k}(0,T)\)
& \(\eta_{S,k}(0,T)\)
& Bin edges \(\mathbf b^{(k)}\) \\
\midrule

1
& EURO STOXX 50
& 5716.98
& 779.15
& \makecell[l]{\((3379.53,\ 4548.26,\ 5716.98,\) \\
                 \(\phantom{(}6885.70,\ 8054.42)\)} \\

2
& SMI
& 14150.69
& 1562.06
& \makecell[l]{\((9464.52,\ 11807.60,\ 14150.69,\) \\
                 \(\phantom{(}16493.77,\ 18836.86)\)} \\

\midrule

\multicolumn{5}{@{}l@{}}{\textbf{Time grid, discounting and scaling}} \\
\midrule

\multicolumn{2}{@{}l}{Input}
& Symbol
& \multicolumn{2}{l@{}}{Value} \\
\midrule

\multicolumn{2}{@{}l}{Exposure dates}
& \(\mathbf t\)
& \multicolumn{2}{l@{}}{\((0.125,\ 0.250,\ 0.375,\ 0.500)\)} \\

\multicolumn{2}{@{}l}{Discount factors}
& \(\mathbf p\)
& \multicolumn{2}{l@{}}{\((0.9975,\ 0.9950,\ 0.9923,\ 0.9895)\)} \\

\multicolumn{2}{@{}l}{Default increments}
& \(\Delta \mathbf q\)
& \multicolumn{2}{l@{}}{\((1.9709,\ 1.9705,\ 1.9701,\ 1.9697)\times 10^{-4}\)} \\

\multicolumn{2}{@{}l}{Scaling constants}
& \((C_p,C_q,C_v)\)
& \multicolumn{2}{l@{}}{\((0.9975,\ 1.9709\times 10^{-4},\ 9581.04)\)} \\

\bottomrule
\end{tabularx}
\end{table}

The resulting finite-grid inputs are summarised in
Table~\ref{tab:finite_grid_cva_instance}. These inputs define the conditional
market probabilities, joint probability tensor, positive-exposure table,
discount vector, default-increment vector and scaling constants encoded in the
quantum circuit.\footnote{For brevity, the numerical values of the joint probability tensor
\(P_{i,\mathbf{j}}\) and positive-exposure tensor \(v_{i,\mathbf{j}}\) are not
reported here, as each contains \(64\) entries in the present instance. The
complete tensors, together with the code used to generate them, are available
in the project repository \parencite{AuthorRepo2026}.}

From these finite-grid objects, the tabulated CVA associated with the
\(n_1=n_2=2\) discretisation is
\begin{equation}\label{value:cva_tab_n2}
    \widehat{\mathrm{CVA}}_{\Delta}^{\mathrm{tab}}(n=4)
    =
    0.522 .
\end{equation}
This is the classical reference that the quantum CVA circuit must reproduce,
since it is computed from exactly the same finite-grid inputs encoded in the
pipeline. Relative to the continuous-underlying Monte Carlo benchmark, this
implies a relative finite-grid error, including both price-domain truncation
and discretisation effects, of
\begin{equation}
    \varepsilon_{\mathrm{grid}}(n=4)
    =
    \frac{
    \left|
    \widehat{\mathrm{CVA}}_{\Delta}^{\mathrm{tab}}(n=4)
    -
    \widehat{\mathrm{CVA}}_{\mathrm{MC}}^{\mathrm{cont}}
    \right|}
    {
    \widehat{\mathrm{CVA}}_{\mathrm{MC}}^{\mathrm{cont}}
    }
    =
    \frac{|0.522-1.091|}{1.091}
    =
    52.15\%.
\end{equation}
In addition, we assess the finite price-grid discretisation error convergence following
\textcite{Alcazar2021CVA}. The truncated domains \(\mathcal D_k\) are kept fixed,
while the number of market-register qubits is increased. The resulting
convergence profile is shown in Figure~\ref{fig:finite_grid_convergence}, with numerical results in Table \ref{tab:grid_cva_resource_scaling} of Appendix \ref{app:convergence_analysis}.

\begin{figure}[htbp]
    \centering
    \includegraphics[width=0.60\textwidth]{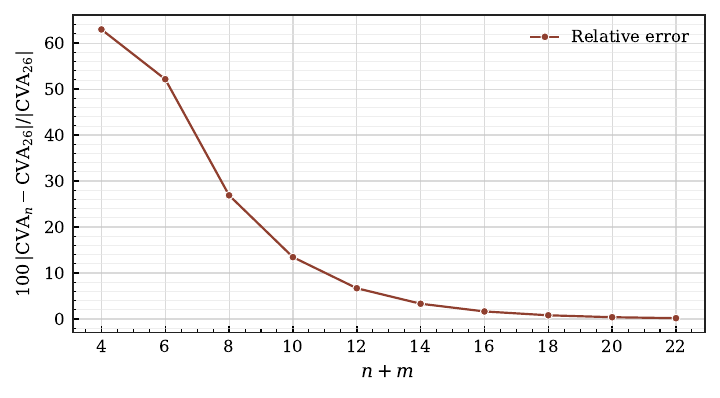}
    \caption{Relative error of the tabulated CVA
    as the number of market-register qubits increases while keeping $m=2$ fixed, measured against a
    high-resolution finite-grid reference. }
    \label{fig:finite_grid_convergence}
\end{figure}

The profile decreases monotonically as the price-register size increases,
showing that the error measured in Figure~\ref{fig:finite_grid_convergence} is
mainly driven by the coarseness of the market discretisation. For the
highest-resolution grid considered, \(n=26\), the tabulated value is
\begin{equation}\label{fine_grid_cva}
     \widehat{\mathrm{CVA}}_{\Delta}^{\mathrm{tab}}(n=26) :=\widehat{\mathrm{CVA}}_{\Delta}^{\mathrm{tab}}(26)
    =
    1.090,
\end{equation}
to be compared with the continuous-underlying Monte Carlo benchmark \eqref{val:mc_cont_cva}. Since the grid-convergence error at this resolution is already negligible, the
remaining relative gap,
\begin{equation}
    \varepsilon_{\mathrm{trunc}}
    :=
    \frac{
    \left|
    \widehat{\mathrm{CVA}}_{\Delta}^{\mathrm{tab}}(26)
    -
    \widehat{\mathrm{CVA}}_{\mathrm{MC}}^{\mathrm{cont}}
    \right|}
    {
    \widehat{\mathrm{CVA}}_{\mathrm{MC}}^{\mathrm{cont}}
    }
    =
    9.17\times 10^{-4},
\end{equation}
can be attributed to the finite truncation of the price domains
\(\mathcal D_k\), rather than to price-grid discretisation. 
\subsection{Quantum encoding of the finite-grid CVA}
 As discussed in Section~\ref{subsec:quantum_pipeline}, the goal of the encoding stage is to train a parametrised circuit $A_{\Theta}$ that prepares the finite-grid CVA instance for amplitude estimation. Since the final amplitude estimation step is intended to run under noisy hardware conditions, we first investigate whether the encoding circuits should also be trained under noisy conditions.

\begin{table}[htbp]
\centering
\normalsize
\caption{Circuit resources and training performance for the QCBM and CRCA blocks used in the finite-grid CVA encoding. The column 1Q reports the number of single-qubit gates, while 2Q counts are obtained after decomposing controlled rotations and $R_{ZZ}$ gates into elementary CNOT gates. Performance metrics are evaluated at the trained parameter values $\Theta$.} 
\label{tab:trained_circuit_specs}

\setlength{\tabcolsep}{4pt}
\renewcommand{\arraystretch}{1.2}

\setlength{\aboverulesep}{0pt}
\setlength{\belowrulesep}{0pt}

\begin{tabularx}{\textwidth}{@{} 
    >{\raggedright\arraybackslash}X 
    c 
    c 
    c 
    c 
    c 
    c 
    c 
@{}}
\toprule

\multicolumn{1}{@{} >{\raggedright\arraybackslash}X}{} 
& 
& 
& \multicolumn{3}{c}{\textbf{Logical Resources}} 
& \multicolumn{2}{c@{}}{\textbf{Performance Metrics}} \\

\cmidrule{4-6} \cmidrule{7-8}

\multicolumn{1}{@{} >{\raggedright\arraybackslash}X}{\textbf{}}
& \textbf{Layers}
& \textbf{Params}
& \textbf{1Q}
& \textbf{2Q}
& \textbf{Depth}
& \textbf{Noiseless}
& \multicolumn{1}{c@{}}{\textbf{Noise + Shots}} \\
\midrule

QCBM $G_{\theta}$
& $6$
& $51$
& $51$
& $30$
& $33$
& $\mathrm{KL}_{\epsilon}=1.059\times 10^{-2}$
& $\mathrm{KL}_{\epsilon}=5.006\times 10^{-2}$ \\

CRCA $R_q(\phi_q)$
& $1$
& $6$
& $6$
& $4$
& $10$
& $\mathcal{L}_q=2.918\times 10^{-8}$
& $\mathcal{L}_q=7.368\times 10^{-5}$ \\

CRCA $R_p(\phi_p)$
& $1$
& $6$
& $6$
& $4$
& $10$
& $\mathcal{L}_p=3.719\times 10^{-6}$
& $\mathcal{L}_p=8.207\times 10^{-5}$ \\

CRCA $R_v(\phi_v)$
& $2$
& $40$
& $98$
& $86$
& $184$
& $\mathcal{L}_v=3.197\times 10^{-3}$
& $\mathcal{L}_v=1.019\times 10^{-2}$ \\

CVA $A_{\Theta}$
& --
& $103$
& $161$
& $124$
& $213$
& --
& -- \\
\bottomrule
\end{tabularx} 
\end{table}

Training only in the ideal statevector regime may give a circuit that fits the target distribution well in simulation, but performs differently once finite shots and backend noise are introduced. To quantify this effect before fixing the final $n+m=6$ encoding, we compare ideal training against training under a simulated noise model, afterwards evaluating each circuit under both the noiseless and noisy finite-shot regimes. The
results for $G_\theta$, summarised in Figure \ref{fig:training_comparison}, indicate that noisy training does not provide a robustness advantage for
this instance: the optimisation landscape becomes harder, and the optimiser is
often unable to make consistent progress beyond the noise-induced floor. We
therefore select the statevector-trained QCBM circuit with \(L=6\), which
achieves a noisy finite-shot KL divergence of \(5.006\times10^{-2}\),
substantially below the noisy evaluation of the noisy-trained \(L=6\) circuit,
\(4.039\times10^{-1}\). An analogous layer-depth analysis is carried out for the remaining encoding
blocks. Based on this study, we choose \(L=1\) for \(R_p(\phi_p)\) and
\(R_q(\phi_q)\), and \(L=2\) for \(R_v(\phi_v)\), as a practical compromise
between expressivity, trainability and circuit depth.

Table~\ref{tab:trained_circuit_specs} summarises the selected blocks used in
the final finite-grid encoding, while Figure~\ref{fig:trained_circuits} shows
the corresponding training trajectories and fitted functions. The QCBM block
captures the joint time--market distribution with a statevector KL divergence
of \(1.059\times10^{-2}\), which increases to \(5.006\times10^{-2}\) under
finite shots and noise. This degradation is visible in the measured
distribution, but the dominant modes of the target distribution are still
preserved. The time-dependent discount and default-probability rotations are
essentially learned exactly in the ideal regime and remain stable under noisy
evaluation, with noisy losses below \(10^{-4}\). The exposure rotation is the
most demanding block, both in circuit resources and in approximation error,
because it must represent a less regular payoff surface over the joint grid.
Even so, the noisy loss remains of order \(10^{-2}\), which is sufficient for
the subsequent finite-grid CVA experiment.

\begin{figure}[H]
    \centering

    \begin{subfigure}{0.45\textwidth}
        \centering
        \includegraphics[width=\linewidth]{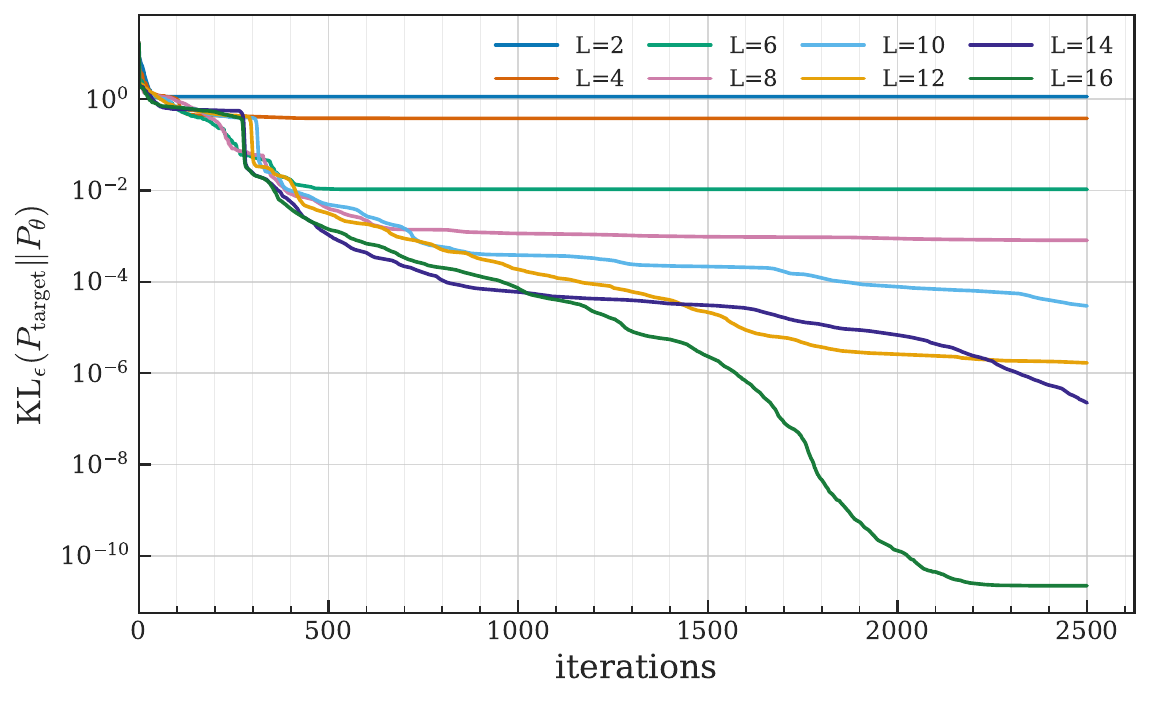}
        \label{fig:subplot_a}
    \end{subfigure}
    \hspace{-0.01\textwidth}
    \begin{subfigure}{0.45\textwidth}
        \centering
        \includegraphics[width=\linewidth]{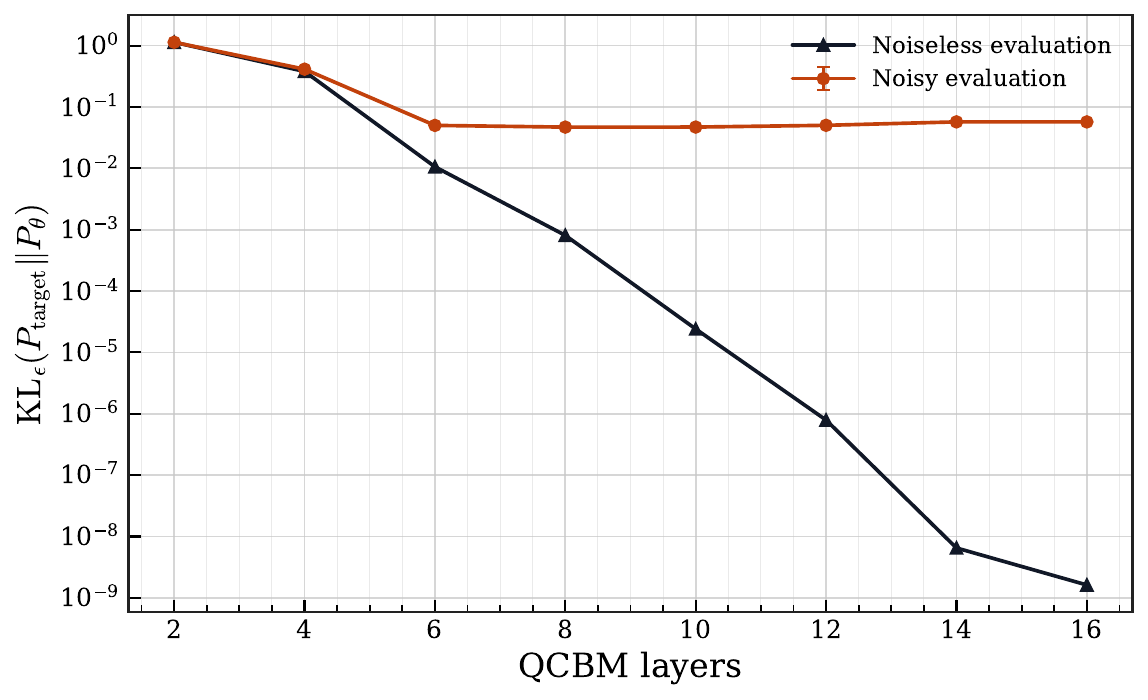}
        \label{fig:subplot_b}
    \end{subfigure}

    \vspace{-0.10cm}

    \begin{subfigure}{0.45\textwidth}
        \centering
        \includegraphics[width=\linewidth]{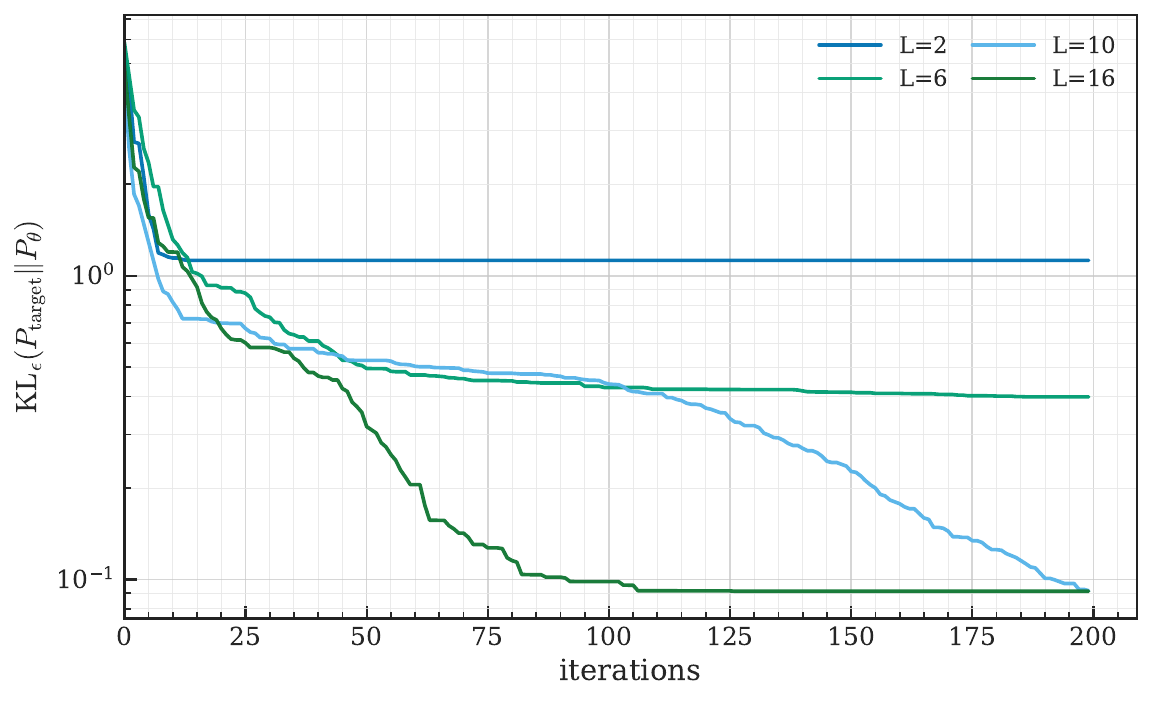}
        \label{fig:subplot_c}
    \end{subfigure}
    \hspace{-0.01\textwidth}
    \begin{subfigure}{0.45\textwidth}
        \centering
        \includegraphics[width=\linewidth]{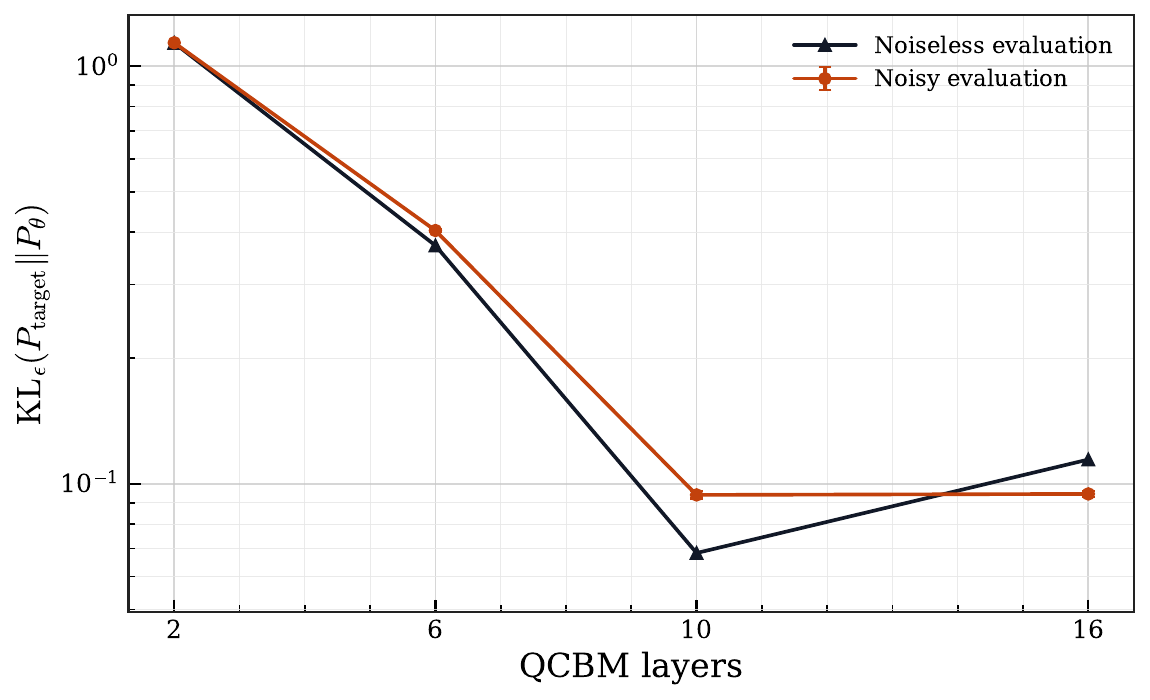}
        \label{fig:subplot_d}
    \end{subfigure}

    \caption{
KL trajectories and post-training evaluations for ideal and noise-aware QCBM training.
Top: ideal training and its clean/noisy evaluation; bottom: noisy finite-shot training and its clean/noisy evaluation.
}
    \label{fig:training_comparison}
\end{figure}

\paragraph{Noiseless quantum CVA value.}
Using the trained circuit \(A_{\Theta}\), the noiseless statevector evaluation
gives
\begin{equation}
    \widehat{\mathrm{CVA}}_{\mathrm{SV}}(\Theta) = 0.670 .
    \label{val:cva_statevector}
\end{equation}
We compare this value with the tabulated finite-grid benchmark
\(\widehat{\mathrm{CVA}}_{\Delta}^{\mathrm{tab}}(n=4)\) through the relative encoding
error
\begin{equation}
    \varepsilon_{\mathrm{enc}}(n=4)
    =
    \frac{
    \left|
        \widehat{\mathrm{CVA}}_{\mathrm{SV}}(\Theta)
        -
        \widehat{\mathrm{CVA}}_{\Delta}^{\mathrm{tab}}(n=4)
    \right|
    }{
        \widehat{\mathrm{CVA}}_{\Delta}^{\mathrm{tab}}(n=4)
    }
    =
    28.35\%.
    \label{eq:relative_encoding_error_result}
\end{equation}

This error should be interpreted as the residual training error of the quantum
encoding: it measures how far the trained QCBM and CRCA blocks are from the
tabulated finite-grid object when the encoded amplitude is read out exactly at
the statevector level.

The robustness analysis reported in Section~\ref{appsec:robust_analysis}
shows that the resulting quantum CVA estimator remains stable under the considered perturbations of the underlying risk factors.

\begin{figure}[H]
    \centering
    
    \includegraphics[width=0.85\textwidth]{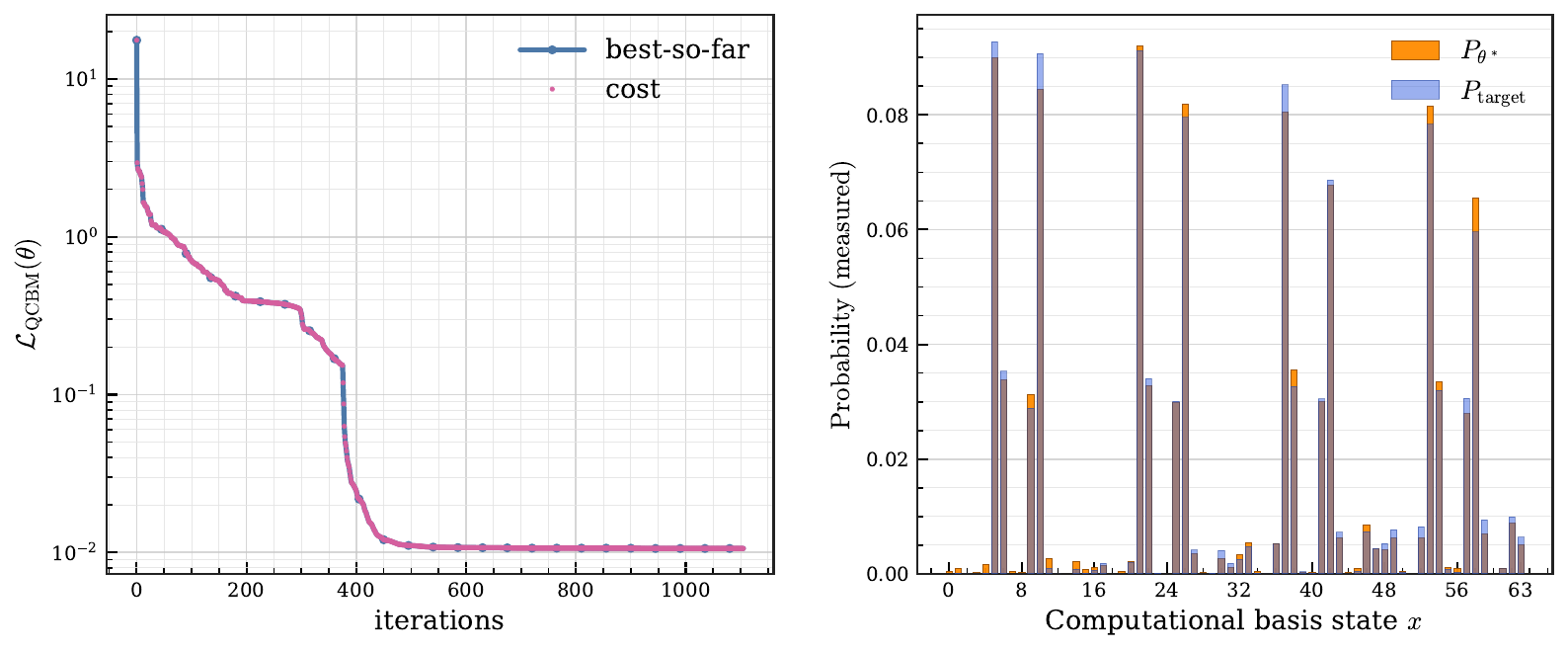}
    
    \vspace{0.05cm} 
    
    \includegraphics[width=0.85\textwidth]{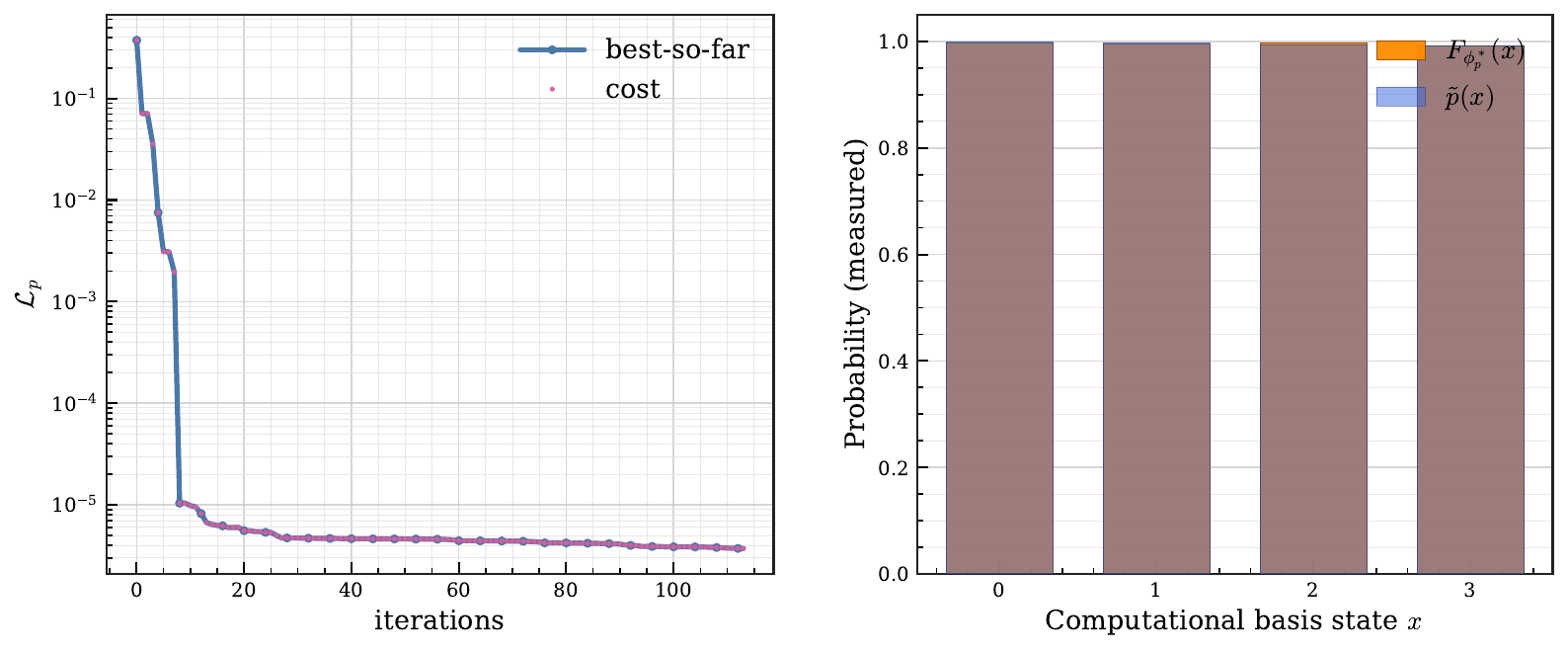}
    
    \vspace{0.05cm} 
    
    \includegraphics[width=0.85\textwidth]{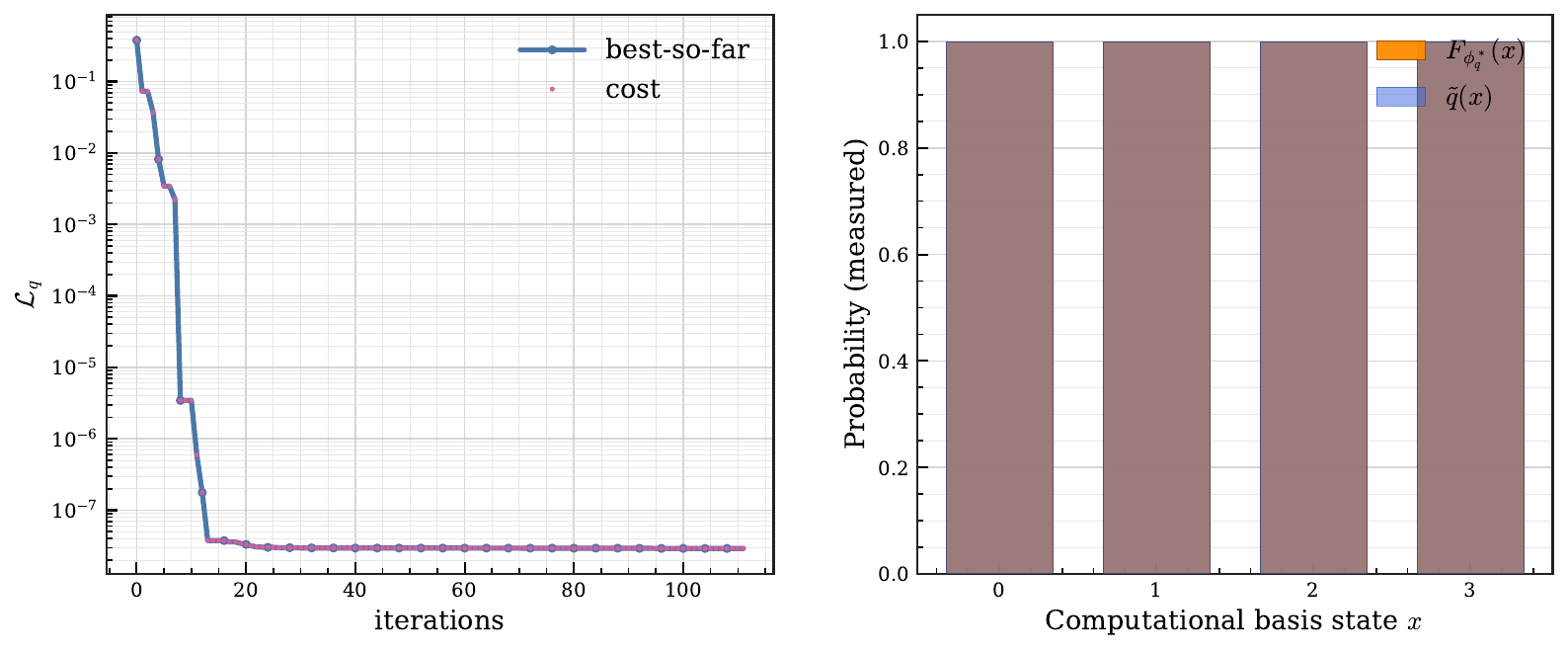}
    
    \vspace{0.05cm} 
    
    \includegraphics[width=0.85\textwidth]{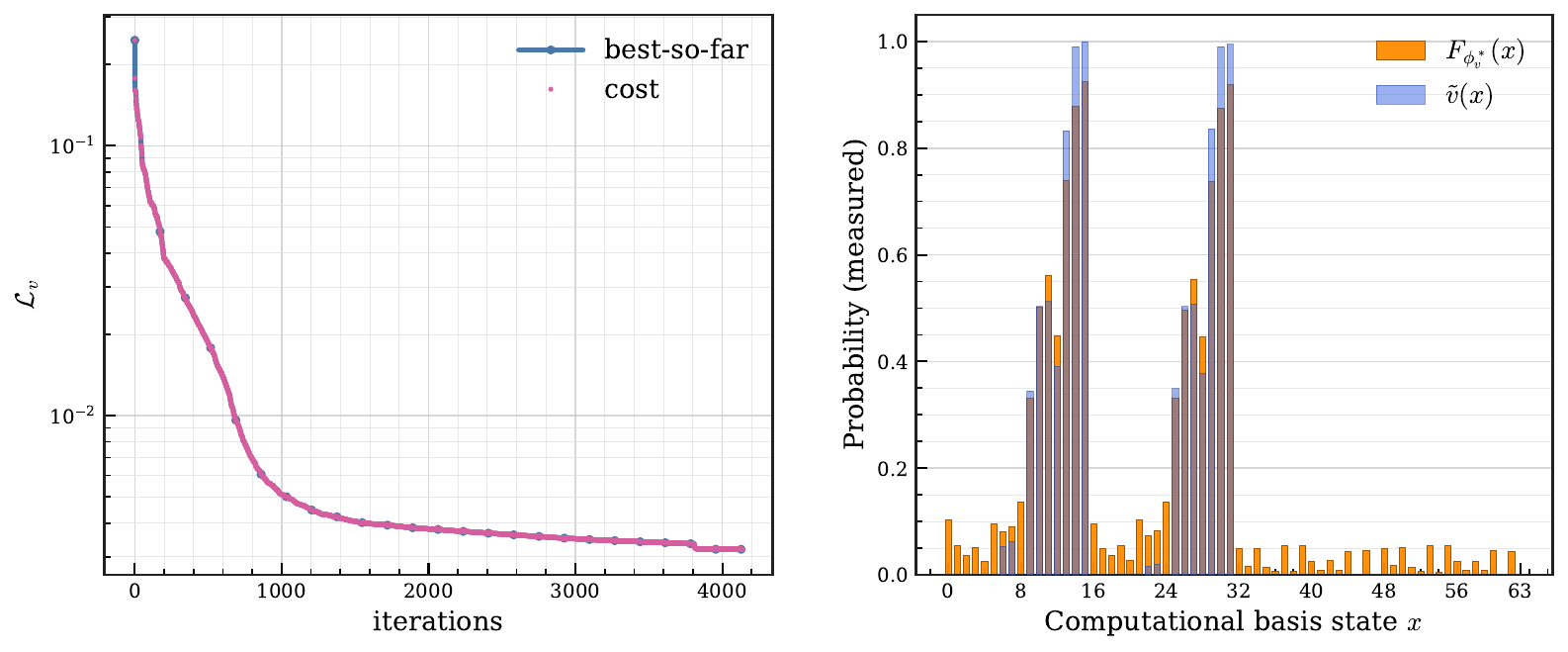}
    
    \caption{
Training diagnostics and final distribution matching for the quantum-encoding
blocks. The left column shows the evolution of the optimisation cost, whereas the right column
compares the target distributions with the probabilities measured from the
corresponding trained circuits.
}
\label{fig:encoding_training_diagnostics}
    \label{fig:trained_circuits}
\end{figure}

\subsection{Quantum CVA estimation using amplitude estimation}
\label{subsec:quantum_cva_ae_results}
Results reported in this section are obtained from $R=300$ independent estimation trajectories, while algorithmic hyperparameters are again reported in
Table~\ref{tab:ae_hyperparameters} of
Appendix~\ref{appsec:classical_market_inputs}. Trajectory-to-trajectory variability arises from independent finite-shot sampling of the corresponding measurement probabilities and, for adaptive AE algorithms, may result in different amplification schedules and actual query costs. BAE is omitted from this final CVA comparison because the validation study in Table~\ref{tab:reduced_instance_summary_compact} and Figure~\ref{fig:ae_validation_final_error_density} shows that its Bayesian post-processing can become a practical bottleneck: in hardware-replay it reaches a final-error scale comparable to CABIQAE, but with a median classical runtime of \(45.72\) seconds instead of \(1.25\) seconds. For the much deeper CVA unitary, repeating BAE across hundreds of trajectories and several precision targets would therefore be computationally impractical.

\begin{table}[H]
\centering
\normalsize
\setlength{\tabcolsep}{7pt}
\renewcommand{\arraystretch}{1.3}
\caption{
Six-qubit CVA instance summary for the noiseless and hardware-replay regimes.
All errors reported are relative CVA errors. Both regimes are evaluated over
\(R=300\) independent estimation trajectories.
}
\label{tab:cva_instance_summary_compact}
\begin{tabular}{@{} l c c c c c @{}}
\toprule[1.2pt]
\multicolumn{6}{c}{\textbf{Noiseless}} \\
\midrule[0.8pt]
Algorithm & \(\epsilon^{50}\) & \(N_q^{50}\) & \(t_{\mathrm{cl}}^{50}\) & Cov. & \(K_{50}\) \\
\midrule
BIQAE
& \(2.42{\times}10^{-4}\)
& \(5.50{\times}10^{4}\)
& \(8.98{\times}10^{-3}\)
& 0.983
& 1534 \\

CABIQAE
& \(2.27{\times}10^{-4}\)
& \(5.40{\times}10^{4}\)
& \(2.20{\times}10^{-3}\)
& 0.897
& 1711 \\

DCS
& \(6.72{\times}10^{-3}\)
& \(6.55{\times}10^{4}\)
& \(6.95{\times}10^{-5}\)
& --
& -- \\
\bottomrule[1.2pt]
\end{tabular}

\vspace{1.5em}

\begin{tabular}{@{} l c c c c c @{}}
\toprule[1.2pt]
\multicolumn{6}{c}{\textbf{Hardware-replay}} \\
\midrule[0.8pt]
Algorithm & \(\epsilon^{50}\) & \(N_q^{50}\) & \(t_{\mathrm{cl}}^{50}\) & Cov. & \(K_{50}\) \\
\midrule
BIQAE
& \(8.92{\times}10^{-1}\)
& \(1.46{\times}10^{4}\)
& \(4.35{\times}10^{-1}\)
& 0.003
& 89 \\
CABIQAE
& \(5.15{\times}10^{-2}\)
& \(9.98{\times}10^{4}\)
& \(1.96{\times}10^{0}\)
& 0.617
& 3 \\
DCS
& \(4.21{\times}10^{-1}\)
& \(1.00{\times}10^{5}\)
& \(1.12{\times}10^{-4}\)
& --
& -- \\
\bottomrule[1.2pt]
\end{tabular}
\end{table}

For the hardware-replay regime in the CVA amplitude estimation experiment, the
hardware probabilities used to construct the replay model were obtained from
hardware executions carried out on the \texttt{ibm\_basquecountry} backend using
Q-CTRL's Performance Management Qiskit Function. This error-mitigation layer was used to improve the reliability of the
measured success probabilities before replaying the adaptive CABIQAE
trajectories.

\begin{figure}[H]
    \centering
    \begin{subfigure}{0.8\textwidth}
        \centering
        \includegraphics[width=\linewidth]{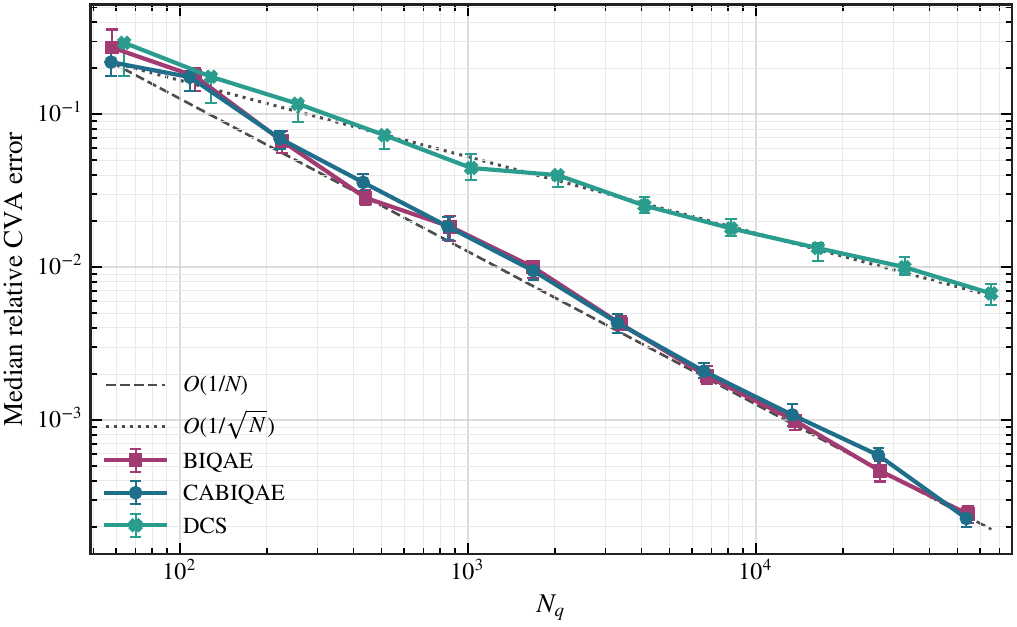}
    \end{subfigure}
    
    \vspace{0.5cm} 
    
    \begin{subfigure}{0.8\textwidth}
        \centering
        \includegraphics[width=\linewidth]{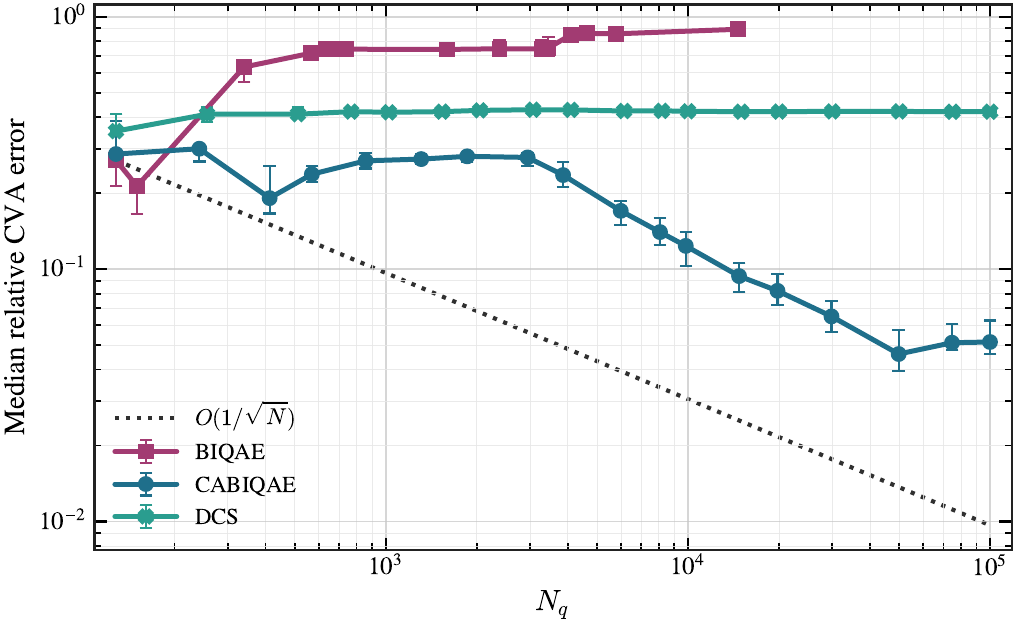}
    \end{subfigure}
    
    \caption{
    Quantum CVA (amplitude) estimation results in the noiseless (top) and hardware-replay (bottom) regimes.
    }
    \label{fig:cva_ae_reduced_main_results}
\end{figure}

Detailed logical and transpiled resource metrics for the amplified CVA circuits
are reported in Appendix~\ref{appsec:amplified_resource_tables}, Table \ref{tab:cva_amplified_logical_transpiled_resources}. Already at
\(k=1\), the transpiled circuit contains \(1122\) two-qubit gates and reaches
a transpiled depth of \(2691\), with both quantities increasing progressively with
the Grover power. This makes the full CVA circuit an exceptionally demanding
hardware workload. With Q-CTRL's Performance Management, the real-device measurements retain a
statistically resolved amplified signal through \(k=2\). The contrast model is
calibrated from four fitted points over the range \(k=0,\ldots,4\), with one
excluded point, yielding \(c_0=0.67\pm0.31\) and
\(\tau_{\mathrm c}=2.2\pm0.6\). This very short contrast length is
substantially smaller than the validation-circuit value
\(\tau_{\mathrm c}=33.96\pm2.80\), and is consistent with the much larger
transpiled depth of the full CVA amplification circuits. Full calibration
details can be found in Appendix~\ref{appsec:calibration},
Table~\ref{tab:cva_hardware_combined_calibration_summary}.

Table~\ref{tab:cva_instance_summary_compact} and
Figure~\ref{fig:cva_ae_reduced_main_results}, together with other complementary results in Appendix \ref{appsec:ae_extended_results}, report the final end-to-end
amplitude estimation experiment for the trained quantum CVA circuit. In the noiseless regime (Figure~\ref{fig:cva_ae_reduced_main_results} (top)), both adaptive AE routines recover the expected query-complexity advantage. BIQAE attains a median relative CVA error of
\(2.42\times10^{-4}\), while CABIQAE reaches \(2.27\times10^{-4}\), both far
below direct circuit sampling over comparable query budgets.

This quadratic improvement is not reproduced in hardware-replay. Hardware noise
strongly degrades both the error levels and the observed query scaling:
DCS remains essentially flat around \(4.21\times10^{-1}\) as the query budget
increases, while BIQAE deteriorates rather than converges, ending with median
error \(8.92\times10^{-1}\) and almost zero coverage. For this reason the comparison is
reported directly in terms of final error, query budget, maximum amplification,
classical runtime and coverage, rather than through a single aggregate index.

CABIQAE follows a markedly different trend in Figure~\ref{fig:cva_ae_reduced_main_results}
(bottom). Although the noisy replay does not recover a realistic quadratic
improvement, the contrast-aware scheduler prevents the progressive deterioration
seen in BIQAE and produces a clear downward error trend over the accessible
query range. Its median error reaches \(5.15\times10^{-2}\) at
\(N_q^{50}=9.98\times10^4\), with \(K_{50}=3\) and coverage \(0.617\),
substantially improving over both BIQAE and the nearly flat DCS baseline. The
curve nevertheless remains above the dashed \(O(1/\sqrt{N})\) reference, which
corresponds to the scaling expected from a classical Monte Carlo estimator in
the absence of hardware bias.

\subsection{Formal error decomposition}
\label{subsec:error_budget}

The final CVA estimate differs from the continuous Monte Carlo benchmark
through three successive approximations: replacing the continuous market model
by a finite grid, encoding the resulting probability table in a trained quantum
circuit, and estimating the marked amplitude from amplified circuit
measurements. Let \(\widehat{\mathrm{CVA}}_{\mathrm{AE}}(\Theta)\) denote the
amplitude estimation (AE) output after conversion to monetary units using the
rescaling factor \(M(1-R_{\mathrm{CVA}})C_vC_pC_q\). By inserting the exact statevector value
\(\widehat{\mathrm{CVA}}_{\mathrm{SV}}(\Theta)\) and the finite-grid benchmark
\(\widehat{\mathrm{CVA}}_{\Delta}^{\mathrm{tab}}(n)\) between this estimate
and \(\widehat{\mathrm{CVA}}_{\mathrm{MC}}^{\mathrm{cont}}\), the triangle
inequality yields
\begin{equation}
\varepsilon_{\mathrm{tot}}
:=
\frac{
  \left|
    \widehat{\mathrm{CVA}}_{\mathrm{AE}}(\Theta)
    -
    \widehat{\mathrm{CVA}}_{\mathrm{MC}}^{\mathrm{cont}}
  \right|
}{
  \widehat{\mathrm{CVA}}_{\mathrm{MC}}^{\mathrm{cont}}
}
\;\leq\;
\underbrace{\varepsilon_{\mathrm{grid}}}_{\text{(i)}}
\;+\;
\underbrace{\alpha_n\,\varepsilon_{\mathrm{enc}}}_{\text{(ii)}}
\;+\;
\underbrace{\beta_\Theta\,\varepsilon_{\mathrm{AE}}}_{\text{(iii)}} .
\label{eq:relative_error_decomposition}
\end{equation}
The dimensionless scale factors
\begin{equation}
\alpha_n
:=
\frac{\widehat{\mathrm{CVA}}_{\Delta}^{\mathrm{tab}}(n)}
     {\widehat{\mathrm{CVA}}_{\mathrm{MC}}^{\mathrm{cont}}},
\qquad
\beta_\Theta
:=
\frac{\widehat{\mathrm{CVA}}_{\mathrm{SV}}(\Theta)}
     {\widehat{\mathrm{CVA}}_{\mathrm{MC}}^{\mathrm{cont}}}
\label{eq:rescaling_factors}
\end{equation}
map each component-level relative error onto the common scale of the
continuous-model benchmark. Term~(i) captures the joint contribution of
price-domain truncation and discretisation; term~(ii) measures the
residual training error of the QCBM and CRCA blocks relative to the tabulated
finite-grid target; and term~(iii) corresponds to the statistical residual of
the amplitude estimator with respect to the probability encoded by the trained
circuit $A_\Theta$.  For the $n=4$ trained instance, the scale factors are
$\alpha_4 = 0.522/1.091 = 0.4785$ and
$\beta_\Theta = 0.670/1.091 = 0.6141$.
\begin{table}[htbp]
\centering
\caption{
Numerical error budget for the six-qubit trained CVA instance. The aggregate quantities are conservative propagated bounds and should not be interpreted as realised signed errors.  All errors are reported as percentages. CVA-scale contributions use
$\alpha_4 = 0.4785$ and $\beta_\Theta = 0.6141$.
}
\label{tab:relative_error_budget_compact}
\setlength{\tabcolsep}{5pt}
\renewcommand{\arraystretch}{1.12}
\begin{tabular}{@{} l c c @{}}
\toprule
\textbf{Contribution}
& \textbf{Component error}
& \textbf{CVA-scale contribution} \\
\midrule
Finite grid / truncation  (i)
  & $\varepsilon_{\mathrm{grid}} = 52.15\%$
  & $52.15\%$ \\
Encoding / training  (ii)
  & $\varepsilon_{\mathrm{enc}} = 28.35\%$
  & $13.57\%$ \\
\midrule
\textbf{Structural bound (i)\,+\,(ii)}
  & & $65.72\%$ \\
\midrule
Noiseless AE -- CABIQAE  (iii)
  & $\varepsilon_{\mathrm{AE}}^{\mathrm{id}} = 0.0227\%$
  & $0.0139\%$ \\
Hardware-replay -- CABIQAE  (iii)
  & $\varepsilon_{\mathrm{AE}}^{\mathrm{hw}} = 5.15\%$
  & $3.16\%$ \\
\midrule
\textbf{Total bound -- noiseless}
  & & $\approx 65.73\%$ \\
\textbf{Total bound -- hardware}
  & & $\approx 68.88\%$ \\
\bottomrule
\end{tabular}
\end{table}

Table~\ref{tab:relative_error_budget_compact} reports all contributions on the
common scale of the continuous Monte Carlo benchmark. The structural contribution, which combines discretisation and trained-circuit encoding errors, is therefore a conservative triangle-inequality upper bound,
\begin{equation}
\varepsilon_{\mathrm{struct}}^{\mathrm{bound}}
=
\varepsilon_{\mathrm{grid}}
+
\alpha_4\,\varepsilon_{\mathrm{enc}}
=
52.15\% + 13.57\%
=
65.72\%.
\label{eq:structural_error_budget}
\end{equation}
The realised statevector discrepancy is smaller,
\(|0.670-1.091|/1.091=38.59\%\), because the grid and encoding errors partially
offset each other. For context, in the four-qubit CVA instance of \textcite{Alcazar2021CVA},
the values reported in their work,
\(
\widehat{\mathrm{CVA}}_{\mathrm{MC}}^{\mathrm{cont},\mathrm{Alcazar}}
= 5.599 \times 10^{-5}
\)
and
\(
\mathrm{CVA}_{\mathrm{SV}}^{\mathrm{Alcazar}}(\Theta)
= 1.987 \times 10^{-5}
\),
imply a realised quantum-circuit discrepancy of \(64.5\%\),
illustrating that large low-resolution structural discrepancies are also
present in the reference circuit-level CVA study.

In the noiseless regime, CABIQAE reaches a median relative AE error of
\(0.0227\%\) with a median query count of \(5.40\times10^4\). Its contribution
on the common CVA scale is only \(0.0139\%\), leaving the propagated bound
numerically unchanged:
\begin{equation}
\varepsilon_{\mathrm{tot}}^{\mathrm{id}}
\leq
65.72\% + 0.0139\%
\approx
65.73\%.
\label{eq:ideal_total_error_budget}
\end{equation}

Under hardware-replay, CABIQAE reaches a median AE error of \(5.15\%\) with a
median query count of \(9.98\times10^4\). After rescaling, this corresponds to
a CVA-scale contribution of only \(3.16\%\). Thus, although hardware noise
clearly prevents the AE layer from reproducing its ideal behaviour, the noisy
AE residual is no longer a leading term in the end-to-end budget: it remains
well below both the finite-grid error (\(52.15\%\)) and the encoding and training
contribution (\(13.57\%\)). The amplitude estimation layer therefore remains the
most accurate stage of the evaluated noisy pipeline. The resulting bound is
\begin{equation}
\varepsilon_{\mathrm{tot}}^{\mathrm{hw}}
\leq
65.72\% + 3.16\%
\approx
68.88\%.
\label{eq:hardware_total_error_budget}
\end{equation}

The result depends materially on the noise-aware estimator. Under the same
hardware-replay conditions, BIQAE would contribute \(54.8\%\) on the common
CVA scale and increase the aggregate bound to roughly \(120.5\%\). By contrast,
CABIQAE limits the additional error to \(3.16\%\) by exploiting the calibrated
shallow-depth likelihood instead of treating degraded amplified circuits as
fully informative.

The closeness of the ideal and hardware aggregate bounds is therefore driven by
the large structural terms, not by identical AE behaviour. In the ideal regime,
the CABIQAE residual continues to decrease with the query budget, consistently
with the expected quadratic AE scaling. Under hardware-replay, CABIQAE still
improves over the accessible query range, but with a much weaker, noise-limited
trend: additional queries are useful only while the calibrated shallow-depth
response remains informative. This difference is largely masked at the present
discretisation level, but it would become more visible if the structural errors
were reduced.
\vspace{-5pt}
\paragraph{Aggregate pipeline error.}
The principal numerical result of the error decomposition is obtained by
combining the three pipeline contributions. Using CABIQAE for the amplitude
estimation layer and
\(\widehat{\mathrm{CVA}}_{\mathrm{MC}}^{\mathrm{cont}}=1.091\), the propagated
bounds are
\begin{equation}
\begin{aligned}
\varepsilon_{\mathrm{tot}}^{\mathrm{id}}
&\leq 65.73\%,
&
\left|
\widehat{\mathrm{CVA}}_{\mathrm{AE}}^{\mathrm{id}}
-
\widehat{\mathrm{CVA}}_{\mathrm{MC}}^{\mathrm{cont}}
\right|
&\lesssim 0.717
\\[2pt]
\varepsilon_{\mathrm{tot}}^{\mathrm{hw}}
&\leq 68.88\%,
&
\left|
\widehat{\mathrm{CVA}}_{\mathrm{AE}}^{\mathrm{hw}}
-
\widehat{\mathrm{CVA}}_{\mathrm{MC}}^{\mathrm{cont}}
\right|
&\lesssim 0.752
\end{aligned}
\label{eq:aggregate_pipeline_error}
\end{equation}
These bounds make clear that the present end-to-end error is dominated by the
grid and encoding stages, while the amplitude estimation layer residual remains secondary even
under hardware-replay.

\section{Conclusions and future work}
\label{sec:conclusions_future_work}

This work has studied quantum amplitude estimation for Credit Valuation
Adjustment as an end-to-end workflow, not as an isolated query-complexity
primitive. The relevant object is the full map from market calibration,
finite-grid approximation and variational quantum encoding to amplified
finite-shot measurements and statistical inference. Practical quantum advantage
for CVA therefore requires both a CVA-faithful encoded oracle and amplified
circuits with enough contrast for Grover queries to remain statistically
informative.

The work makes three main contributions. First, it develops a general software
pipeline for quantum CVA estimation. Although the numerical study focuses on a
calibrated two-asset netting set, the implementation is modular: market
calibration, default and discount curves, finite-grid construction, QCBM-based
state preparation, controlled payoff rotations, amplitude-estimation routines
and diagnostics are reusable components. The software therefore goes beyond the
benchmark instance and provides a framework for studying correlated,
market-calibrated derivative portfolios.

Second, the work introduces contrast-aware Bayesian iterative quantum
amplitude estimation CABIQAE, a noise-aware extension of the Beta-BIQAE
framework of \textcite{Li2026BIQAE}. The contribution is not a universal
complexity claim that CABIQAE dominates all Bayesian QAE variants in every
noise model or amplitude regime; rather, the validation experiments show that,
in the hardware-replay regime studied here, CABIQAE achieves a more favourable
accuracy--cost--depth trade-off than the competing estimators. It improves on
the strongest noise-aware baseline, BAE \parencite{Ramoa2025bayesianquantum},
by retaining comparable noisy-replay accuracy with substantially lower
classical post-processing cost and shallower amplified circuits. For the full
CVA oracle, where unrestricted BAE becomes impractical as a production routine,
CABIQAE is the most effective estimator in the benchmarked pipeline: it avoids
the noise-induced degradation of BIQAE, outperforms direct circuit sampling at
comparable query budgets and uses the limited contrast-preserving amplification
window more efficiently.

Third, the work carries the quantum CVA estimate beyond exact statevector
evaluation. In the pioneering circuit-level construction of
\textcite{Alcazar2021CVA}, the CVA observable was formulated as a projector
expectation and the final quantum value was obtained through noiseless
simulation and resource modelling. Here, the trained CVA oracle is evaluated
through finite-shot amplitude-estimation trajectories: in the ideal regime from
the exact amplified probability law, and in hardware-replay from empirical
amplified-circuit responses measured on IBM quantum hardware. Hardware-replay
does not mean that every adaptive trajectory was executed live on the device,
but that real hardware measurement data are reused for robust \(R=300\)
trajectory-level comparisons without prohibitive quantum-compute cost. Thus,
the work does not only report a statevector CVA amplitude, but studies CVA as
a measured quantity with error-versus-query curves, coverage diagnostics and
depth-dependent hardware response.

The error budget identifies the present bottlenecks. The continuous Monte Carlo
benchmark is \(\widehat{\mathrm{CVA}}_{\mathrm{MC}}^{\mathrm{cont}}=1.091\),
while the six-qubit finite-grid benchmark is
\(\widehat{\mathrm{CVA}}_{\Delta}^{\mathrm{tab}}=0.522\), a \(52.15\%\)
discretisation and truncation gap. The trained QCBM/CRCA oracle gives
\(\widehat{\mathrm{CVA}}_{\mathrm{SV}}(\Theta)=0.670\), corresponding to a
\(38.59\%\) realised discrepancy and a conservative structural bound of
\(65.72\%\). By comparison, the CABIQAE amplitude-estimation residual is only
\(0.0139\%\) noiseless and \(3.16\%\) under hardware-replay. Thus, the limiting factor is not
statistical estimation of the encoded amplitude, but constructing an encoded
observable whose expectation remains faithful to the continuous CVA functional. This distinction is important: the circuit approximates not merely a probability
distribution, but a risk functional weighted by positive exposure, discounting
and default increments, so low-probability regions may still matter when they
coincide with large exposure or default-weighted contribution. The finite-grid
convergence results identify the coarse \(n=4\) market register as the leading
source of bias, while training diagnostics make the exposure rotation the
hardest encoding component. Future oracle design should therefore be driven by
observable fidelity, not only by global distributional metrics such as KL
divergence or pointwise rotation losses.

The hardware results explain why the ideal amplified advantage does not survive
unchanged. In the reduced validation circuit, the calibrated contrast length is
\(\tau_{\mathrm c}=33.96\pm2.80\), whereas for the full CVA oracle it collapses
to \(\tau_{\mathrm c}=2.2\pm0.6\). This matches the amplified financial oracle's
compiled resources: already at \(k=1\), \(Q A_{\Theta}\) contains \(1122\)
transpiled two-qubit gates and reaches depth \(2691\), with the global
reflection \(S_0\) dominating the compiled depth. The obstruction is the
interaction between Grover amplification, non-local reflections and the
compiled CVA-oracle structure, not generic hardware noise alone.

These results give a cautious but constructive answer to the research question:
a coherent CVA circuit amplitude can be constructed for a calibrated multi-asset netting
set; ideal QAE recovers the expected amplified query behaviour on the trained
oracle; and contrast-aware Bayesian inference materially improves the
hardware-replay regime of the present implementation. This is not an end-to-end
quantum advantage for CVA; instead, the work makes the obstruction
measurable: financial discretisation and encoding dominate the current CVA bias,
while hardware contrast loss restricts the useful amplification window.

Future work should co-design the financial approximation, oracle and estimator.
First, discretisation and training objectives should become CVA-aware: adaptive
price grids, payoff-weighted losses, correlation-aware state preparation, and exposure-focused rotation training
should be evaluated by final-CVA error, not only by generic distributional or
circuit-level metrics. Higher market-register resolution can reduce bias, but
naively increases oracle depth and may destroy the contrast required by
amplitude estimation.

Second, the amplified oracle must be redesigned around hardware-native
structure. The most urgent target is the global reflection \(S_0\), with
possible directions including relative-phase constructions, layout-aware
synwork or approximate reflections. These changes should be assessed through
CABIQAE's likelihood-level contrast model, since a logical gate-count reduction
is valuable only if it preserves the statistical relation between Grover depth,
amplitude and observed success probability.

Third, CABIQAE should become a self-calibrating Bayesian estimator. Here, the
contrast scale is externally calibrated from the known ideal amplitude, suitable
for controlled benchmarking but not deployment. A production version should
jointly infer amplitude and contrast parameters, select Grover powers by
posterior utility, and include posterior predictive checks, model-evidence
comparisons or interleaved reference circuits to test contrast identifiability.
A theoretical complexity analysis under explicit contrast-decay assumptions
should also characterise the transition between the ideal \(O(1/N_q)\)
amplified regime and the hardware-observed contrast-limited saturation regime.

Finally, the methodology should move from hardware-replay to larger direct
hardware experiments as devices improve. Hardware-replay is right for robust
present day statistics, but future experiments should execute adaptive
trajectories directly when queueing, calibration drift and quantum-compute
budgets permit. Financially, the framework should be tested on larger netting
sets, richer exposure profiles, wrong-way risk or stochastic rates to assess stability for production-like
CVA oracles.

Overall, this work shows that practical quantum CVA cannot be validated by
statevector oracle evaluation or QAE asymptotics alone. It requires a
quantitative error chain linking financial modelling error, variational encoding
error, amplified-circuit resources and hardware-calibrated inference. Within
that chain, CABIQAE improves the amplitude estimation stage of the pipeline by
incorporating realistic contrast loss into both Bayesian inference and
Grover-depth selection. The next step is to make the encoded CVA observable and amplified oracle
equally reliable; only with lower-error devices and/or longer
contrast-preserving circuit depths can the theoretical advantage of quantum
amplitude estimation become a credible numerical advantage for
counterparty-credit valuation.

\phantomsection
\section*{Author contributions}
\addcontentsline{toc}{section}{Author contributions}
All authors contributed to this work. Artificial intelligence tools were used to refine some early versions of this manuscript, including aspects of language, formatting and manuscript structure. These tools did not determine the scientific content, analysis, conclusions or interpretation of the results. Authorship of the work belongs to the authors, and only to the authors, who reviewed the manuscript and take full responsibility for its final content.

\phantomsection
\section*{Acknowledgements}
\addcontentsline{toc}{section}{Acknowledgements}
The authors would like to express their sincere gratitude to Fernando de Lope Contreras for his invaluable guidance, support and involvement throughout the development of this work. His technical insight, availability and commitment were instrumental in addressing many of the challenges encountered during the project.

The authors are also especially grateful to Arantza Gorostiaga for her continued support, trust in the project and assistance in enabling its development from an initial proposal into a fully realised research study.

We further acknowledge the Basque Quantum team for their collaboration and for facilitating access to quantum hardware. Their support was essential to the experimental component of this work.

Finally, the authors thank the technical teams at IBM and Q-CTRL for their assistance in resolving the issues encountered during the execution of the quantum hardware experiments.

\newpage

\begingroup
\normalsize
\printbibliography[heading=bibintoc,title={References}]
\endgroup
\clearpage

\newgeometry{left=1.5cm,right=1.5cm,top=1.5cm,bottom=1.5cm}
\fancyhfoffset{0pt}
\setlength{\headwidth}{\textwidth}
\appendix
\section{Algorithmic and implementation details}
\label{app:classical_benchmark_details}

\subsection{Market data and numerical controls}
\label{appsec:classical_market_inputs}
\begin{table}[htbp]
\centering
\normalsize
\setlength{\tabcolsep}{3pt}
\renewcommand{\arraystretch}{0.95}
\caption{Market data and risk-driver inputs used in the classical CVA benchmark. The full set of OIS pillars is shown for market-data completeness. In the CVA calculation, however, discounting is performed with a flat continuously compounded proxy rate calibrated from the one-year discount factor.}
\label{tab:classical_market_inputs_condensed}
\begin{tabularx}{\textwidth}{@{}p{2.7cm}p{5.0cm}X@{}}
\toprule
\textbf{Block} & \textbf{Input} & \textbf{Numerical details / role} \\
\midrule

Equity data &
Historical closes, spots and dividends &
Closes from 13-Mar-2021 to 13-Mar-2026 estimate log-return correlations. 
Spots/dividends: EURO STOXX 50 \(5716.61\), \(2.88\%\); SMI \(14214.72\), \(3.16\%\).\\

Volatility data &
ATM (at-the-money) implied-volatility slices &
EURO STOXX 50 surface: 35 expiries, 2026--2035; SMI surface: 18 expiries, 2026--2030. ATM slices define piecewise-constant volatility buckets on the exposure dates. \\

Dependence &
Retained correlation matrix &
Two-asset portfolio matrix with unit diagonal and
\(\rho_{\mathrm{SX5E},\mathrm{SMI}}=0.6872\). \\

Interest rates &
EUR OIS curve \((\mathrm{EURESTOIS})\) &
Pillars 7D--60Y; mid rates from \(1.94\%\) to \(2.73\%\); discount factors from \(0.9996\) to \(0.3754\). Used to construct the benchmark discount curve. \\

Credit data &
Iberdrola CDS par-spread curve &
Tenors 6M--30Y; spreads from \(9.46\) bps to \(120.37\) bps. Used to bootstrap the deterministic CDS-implied survival curve. \\

Recovery &
CDS and CVA recovery assumptions &
CDS bootstrap uses \(R_{\mathrm{CDS}}=40\%\); CVA loss calculation uses \(R_{\mathrm{CVA}}=41.5\%\). \\

\bottomrule
\end{tabularx}
\end{table}

\label{appsec:classical_benchmark_controls}

\begin{table}[H]
\normalsize
\centering
\normalsize
\caption{Numerical controls for the continuous-underlying benchmark.}
\label{tab:classical_benchmark_controls_condensed}
\begin{tabularx}{\textwidth}{
@{}
>{\hsize=0.7\hsize\raggedright\arraybackslash}X
>{\hsize=0.9\hsize\raggedright\arraybackslash}X
>{\hsize=1.4\hsize\raggedright\arraybackslash}X
@{}}
\toprule
\textbf{Component} & \textbf{Implementation choice} & \textbf{Details} \\
\midrule
Valuation setup &
Valuation date and time grid &
Valuation date: 13-Mar-2026. CVA maturity: \(T=6\) months. Exposure dates are uniformly placed on \((0,T]\), excluding \(t=0\). Number of exposure intervals: \(M=4\). \\
\addlinespace

Market data &
Input blocks &
Discount curve, Iberdrola CDS data, historical equity series, deterministic dividend yields, and ATM volatility-surface slices. \\
\addlinespace

Equity dynamics &
Volatility and dependence &
Piecewise-constant ATM volatility grid on the exposure dates and empirical log-return correlation matrix from historical data. \\
\addlinespace

Discounting &
Flat continuously compounded curve &
Benchmark discounting uses \(D(0,t)=e^{-rt}\), with constant \(r\). The risk-free rate used in instrument MTM is \(r=-\log D(0,1)\). \\
\addlinespace

Default model &
CDS-implied deterministic piecewise-constant survival curve &
Survival probabilities are bootstrapped from CDS par spreads with quarterly payments. The CVA recovery rate is \(R_{\mathrm{CVA}}=0.415\). \\
\addlinespace

Monte Carlo controls &
Sampling and reproducibility &
The benchmark uses \(N_{\mathrm{MC}}=2\times10^5\) base normal draws and random seed $105$. \\
\bottomrule
\end{tabularx}
\end{table}

\begin{table}[H]
\centering
\small
\setlength{\tabcolsep}{2.5pt}
\renewcommand{\arraystretch}{1.02}

\caption{Numerical controls and hyperparameter configuration used in the amplitude estimation experiments. The left panel corresponds to the validation experiment in Section \ref{subsec:noise_aware_ae_results}, whereas the right panel reports the quantum CVA experiment in Section \ref{subsec:quantum_cva_ae_results}. In the multi-asset CVA experiment, the BIQAE target half-width is set smaller than that of CABIQAE to obtain comparable query budgets, since the noise-naive BIQAE scheduler closes its internal interval prematurely by treating noisy high-\(K\) circuits as informative. For the meaning of the CABIQAE/BIQAE parameters, see
Appendix~\ref{app:cabiqae_modules}; for the BAE-specific hyperparameters, see
  \textcite{Ramoa2025bayesianquantum}.}
\label{tab:ae_hyperparameters}

\begin{tabular}{@{}lccc@{\qquad}cc@{}}
\toprule
&
\multicolumn{3}{c}{\textbf{Validation}} &
\multicolumn{2}{c}{\textbf{Quantum CVA}} \\
\cmidrule(lr){2-4}
\cmidrule(l){5-6}
\textbf{Control / hyperparameter}
& \makecell[c]{CABIQAE}
& BIQAE
& BAE
& \makecell[c]{CABIQAE}
& BIQAE \\
\midrule

Target half-width, \(\varepsilon\)
    & \(10^{-3}\) & \(10^{-3}\) & \(10^{-3}\)
    & \(10^{-2}\) & \(10^{-3}\) \\

Failure probability, \(\alpha\)
    & \(0.10\) & \(0.10\) & \(0.10\)
    & \(0.10\) & \(0.10\) \\

Shots per circuit batch, \(n_{\mathrm{batch}}\)
    & \(256\) & \(256\) & \(256\)
    & \(128\) & \(128\) \\

Validation amplitudes, \(a_{\mathrm{true}}\)
    & \makecell[c]{Noiseless: \(0.36027,0.49217\),\\
      \(0.53032,0.56824\),\\
      \(0.60567,0.64237\),\\
      \(0.66927,0.70413\);\\
      Hardware: \(0.36027\)}
    & Same & Same
    & \textemdash & \textemdash \\

Credible-interval method
    & Beta & Beta & \textemdash
    & Beta & Beta \\

Minimum stage-growth ratio, \(\rho_{\min}\)
    & \(2\) & \(2\) & \textemdash
    & \(2\) & \(2\) \\

Contrast-decay scale, \(\tau_{\mathrm c}\)
    & \(33.96\) & \textemdash & \(33.96\)
    & \(2.20\) & \textemdash \\

Noise floor, \(b\)
    & \(0.5\) & \textemdash & \(0.5\)
    & \(0.15\) & \textemdash \\

Contrast prefactor (noiseless/hardware), \(c_0\)
    & \(1.0/1.16\) & \textemdash & \textemdash
    & \(1.0/0.67\) & \textemdash \\

Prior-transport samples, \(N_{\mathrm{prior}}\)
    & \(2000\) & \textemdash & \textemdash
    & \(2000\) & \textemdash \\

Online estimation of \(\tau_{\mathrm c}\)
    & No & \textemdash & No
    & No & \textemdash \\

BAE warm-up shots, \(wN_s\)
    & \textemdash & \textemdash & \(256\) \\

BAE particles, \(N_{\mathrm{part}}\)
    & \textemdash & \textemdash & \(300\) \\

BAE resampling threshold, \(\mathrm{thr}\)
    & \textemdash & \textemdash & \(0.4\) \\

BAE initial control, \(k_0\)
    & \textemdash & \textemdash & \(1\)  \\

BAE refinement parameters, \((e_{\mathrm{refs}},e_{\mathrm{thr}})\)
    & \textemdash & \textemdash & \((1,1)\) \\

BAE stochastic control
    & \textemdash & \textemdash & False \\

\bottomrule
\end{tabular}
\end{table}

\subsection{Quantum hardware and error-mitigation setup}
\label{app:hardware}

The final hardware experiments reported in this work were executed on
\texttt{ibm\_basquecountry}, the 156-qubit BasQ--IBM Quantum System
Two QPU installed in Donostia--San Sebastián and equipped with an IBM Heron r2
superconducting processor based on the heavy-hexagonal architecture \parencite{IBMQuantumProcessorTypes2026}.
Access to this backend was enabled by Basque Quantum, whose support was
essential for the hardware experiments reported in this work. Before selecting
the final backend, exploratory tests were also carried out on the IBM heavy-hex
backends \texttt{ibm\_aachen}, \texttt{ibm\_pittsburgh} and
\texttt{ibm\_boston}. The observed stability and results made
\texttt{ibm\_basquecountry} the best option for the reported validation and CVA
experiments.

For the CVA amplitude estimation experiment, \texttt{ibm\_basquecountry} was
accessed through Q-CTRL's Performance Management Qiskit
Function \parencite{QCTRLPerformanceManagement2026}. This function was used as a
managed hardware-execution layer designed to improve the reliability of the hardware samples returned by the
processor. Operationally, the function accepts abstract, untranspiled circuits
and applies an automated error-suppression workflow internally, including
hardware-aware layout selection, routing, dynamical decoupling and
noise-suppression techniques. Similar deterministic error-suppression workflows
underlying Q-CTRL's Fire Opal framework have been experimentally benchmarked on
IBM hardware \parencite{Mundada2023ErrorSuppression}. 

This distinction is important for amplitude estimation because the Grover-power
circuits used in the experiments are very deep (Table \ref{tab:cva_amplified_logical_transpiled_resources}), especially as the amplification
power \(k\) increases. The Q-CTRL workflow was
therefore used on \texttt{ibm\_basquecountry} to make the hardware execution of
the amplified CVA circuits as stable as possible, while keeping the algorithmic
comparison between BIQAE and CABIQAE unchanged. 

\paragraph{Heavy-hex compatibility.}
The variational circuits introduced in Section~\ref{subsec:quantum_pipeline}
were designed to be \emph{heavy-hex friendly} in a strict graph-theoretic
sense. Let \(G_{\mathrm{HH}}=(V_{\mathrm{HH}},E_{\mathrm{HH}})\) be the
coupling graph of the target backend and
\(G_{\mathrm{var}}=(V_{\mathrm{var}},E_{\mathrm{var}})\) the logical
interaction graph of a variational block. The block is heavy-hex friendly if
there exists an injective map
\(\phi: V_{\mathrm{var}}\to V_{\mathrm{HH}}\) such that
\begin{equation}
    \{u,v\}\in E_{\mathrm{var}}
    \;\Longrightarrow\;
    \{\phi(u),\phi(v)\}\in E_{\mathrm{HH}}.
    \label{eq:heavy_hex_embedding}
\end{equation}
When \eqref{eq:heavy_hex_embedding} holds for every entangling layer,
the transpiler can realise that layer with zero SWAP gate overhead. This criterion
is operationally decisive for the amplitude estimation experiments: every
avoidable SWAP inserted at the base-circuit level is replicated inside each
\(Q^k A\) application, generating a multiplicative depth penalty that
directly degrades the calibratable contrast.

\begin{table}[htbp]
\centering
\normalsize 
\caption{Heavy-hex compatibility of the three variational blocks.}
\label{tab:heavyhex_compatibility}
\setlength{\tabcolsep}{5pt}
\renewcommand{\arraystretch}{1.15}
\begin{tabularx}{\textwidth}{@{} l c c X @{}}
\toprule
\textbf{Block} & \textbf{Logical qubits}
& \textbf{Max.\ degree} & \textbf{Embedding feature} \\
\midrule
CRCA \(R_p, R_q\) & 3 & 2
& 3-node path graph; embeds into any heavy-hex path of
length two with zero routing. \\
\addlinespace
CRCA \(R_v\) & 7 & 3
& Logical star realised through outward/inward snake layers
over a connected 7-qubit patch; ancilla physical degree
never exceeds three. \\
\addlinespace
QCBM \(G_{\mathcal{P}}\) & 6 & 3
& Acyclic \(ZZ\)-entangler graph with
\(|E_{\mathrm{QCBM}}|=5\); embeds into a connected
6-qubit heavy-hex subgraph with zero routing. \\
\bottomrule
\end{tabularx}
\end{table}

Table~\ref{tab:heavyhex_compatibility} summarises the compatibility
properties of the three variational blocks used in the work.
The three-qubit CRCA blocks for \(R_p\) and \(R_q\)
(Figure~\ref{fig:crca_default_probabilities_native_tree_ansatz}) act on the
two-qubit time register \((t_0, t_1)\) plus one ancilla \(a\). Their logical
interaction graph is a 3-node path \(t_0 - a - t_1\), a strict subgraph of
any heavy-hex 3-path; no dense couplings are required. Sparse connectivity
is appropriate because the target functions \(\tilde{p}_i\) and
\(\tilde{q}_i\) are 4-point deterministic maps: richer entanglement would
increase two-qubit exposure without increasing representational capacity.

The seven-qubit CRCA for \(R_v\)
(Figure~\ref{fig:crca_heavy_hex_star_two_subcircuits}) must
approximate a non-separable map over all six computational qubits
\((t_0, t_1, s_0, s_1, s_2, s_3)\) and one ancilla. A literal
ancilla-centred star with simultaneous degree-6 access is not physically
realisable on Heron hardware (\(\max\deg G_{\mathrm{HH}} = 3\)). The
adopted architecture resolves this through alternating outward and inward
snake layers: information is transported sequentially across the 7-qubit
patch, expanding the ancilla's effective receptive field through
nearest-neighbour interactions without violating the degree bound. This
converts a routing problem into a controlled depth cost, which is preferable
because snake-layer depth scales linearly and predictably with register size.

The six-qubit QCBM for \(G_{\mathcal{P}}\)
(Figure~\ref{fig:qcbm_heavyhex6_two_layer_pattern}) uses local
\(R_x, R_z\) rotations followed by a sparse \(ZZ\)-entangler layer with
interaction graph
\begin{equation}
    E_{\mathrm{QCBM}}
    =
    \bigl\{
    (t_0,t_1),\,(t_0,s_0),\,(t_0,s_1),\,(t_1,s_2),\,(s_0,s_3)
    \bigr\}.
    \label{eq:qcbm_entangler_graph}
\end{equation}

The graph is connected, acyclic, and has maximum degree three, admitting
a direct embedding into a 6-qubit heavy-hex subgraph with zero routing.
Equation~\eqref{eq:qcbm_entangler_graph} couples the time register
directly to both asset sub-registers, enabling cross-asset temporal
correlations without introducing interactions foreign to the hardware
topology.  

\begin{figure}[H]
    \centering
    \includegraphics[width=0.7\textwidth]{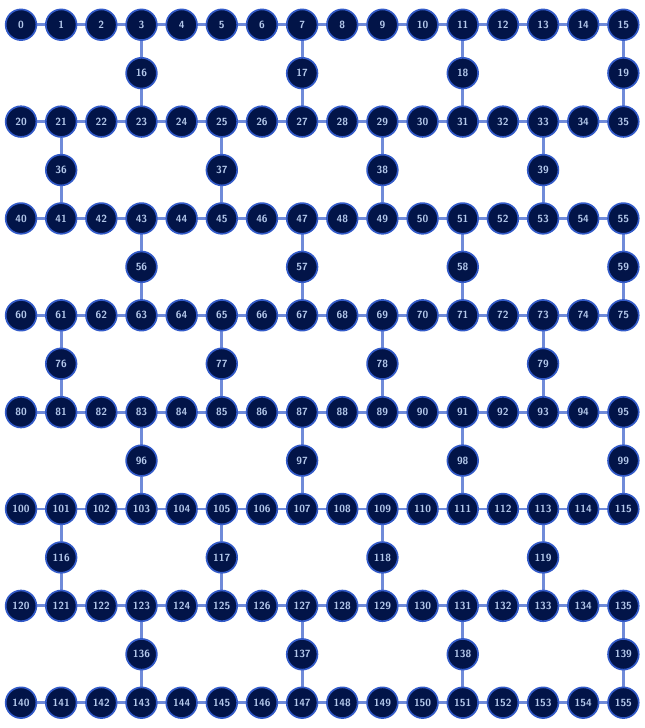}
    \caption{Heavy-hex connectivity map of the Heron-class backend. Every
    physical qubit has degree at most three. The circuit families used in
    this work were designed so that their logical entangler graphs embed
    into connected 3-, 6-, and 7-qubit subgraphs of this coupling map,
    eliminating routing overhead at the block level.}
    \label{fig:heavyhex_connectivity_map}
\end{figure}

\paragraph{Compilation protocol.}
For all direct experiments on \texttt{ibm\_basquecountry}, circuits were
compiled to the Heron ISA using fixed-layout transpilation at optimisation
level~3 with multi-seed selection, targeting a connected subgraph that
admits the embedding \eqref{eq:heavy_hex_embedding} with no SWAP
insertions. For Q-CTRL experiments on \texttt{ibm\_basquecountry}, abstract
circuits were submitted directly to the Performance Management Qiskit
Function, so the final ISA mapping, pulse-level scheduling choices and
error-suppression details were handled internally by the managed workflow.
Under this protocol, the ISA depth and two-qubit counts reported in
Tables~\ref{tab:atest_amplified_logical_transpiled_resources}
and~\ref{tab:cva_amplified_logical_transpiled_resources} are exact properties
of the directly compiled \texttt{ibm\_basquecountry} circuits; the corresponding
quantities for Q-CTRL-managed executions are internal to the pipeline and not
directly observable.

\subsection{CABIQAE algorithm and implementation details}
\label{app:cabiqae_modules}
CABIQAE follows
the Beta-BIQAE credible-interval--transport--schedule loop of \textcite{Li2026BIQAE}, but all
probabilistic transformations are performed through the calibrated
hardware-observation model
\[
    p_k(\vartheta)
    :=
    p_{\mathrm{obs}}(\vartheta,k)
    =
    b+c_k\left(\sin^2(K\vartheta)-b\right),
    \qquad
    K=2k+1,
    \qquad
    c_k=c_0e^{\!\left(-K/\tau_{\mathrm c}\right)},
    \label{eq:appendix_cabiqae_pobs}
\]
with \(\vartheta\in[0,\pi/2]\). The target amplitude is
\(a=\sin^2\vartheta\). The integer \(k\) denotes the Grover power, while
\(K=2k+1\) is the corresponding number of the state preparation unitary queries per circuit execution.

\begin{algorithm}[htbp]
\caption{Contrast-aware Bayesian iterative quantum amplitude estimation}
\label{alg:cabiqae_main}
\normalsize
\algrenewcommand\algorithmicindent{0.9em}

\begin{algorithmic}[1]

\Statex \textbf{Parameters:}
\(\varepsilon,\alpha,n_{\mathrm{batch}},(c_0,\tau_{\mathrm c},b),\rho_{\min}>1,N_{\mathrm{prior}}\).
\Statex \textbf{Input:}
state-preparation unitary $A$, Grover operator $Q$.
\Statex \textbf{Output:}
amplitude estimate \(\widehat a\), credible interval \([a_l,a_u]\).

\State Initialise
\(t=0\), \(k_0=0\), \(I_0=[0,\pi/2]\), \([a_l,a_u]=[0,1]\),
\(\alpha_{0,0}=\beta_{0,0}=1/2\), and \((N_0,Y_0)=(0,0)\).

\While{\(a_u-a_l>2\varepsilon\)}

    \State Run \(\mathcal Q^{k_t}\mathcal A\) for \(n_{\mathrm{batch}}\) shots;
    set \(N_t\leftarrow N_t+n_{\mathrm{batch}}\) and
    \(Y_t\leftarrow Y_t+\#\{\mathrm{good\ outcomes}\}\).

    \State
    \((\alpha_{\mathrm{post},t},\beta_{\mathrm{post},t})
    \leftarrow
    \textsc{BayesianUpdate}(\alpha_{0,t},\beta_{0,t},N_t,Y_t)\).

    \State
    \((I_t^{\mathrm{new}},[a_l,a_u],\pi_t)
    \leftarrow
    \textsc{ComputeCRI}(I_t,k_t,\alpha_{\mathrm{post},t},
    \beta_{\mathrm{post},t},\alpha)\).

    \If{\(a_u-a_l\leq2\varepsilon\)}
        \State \Return (\(\widehat a=(a_l+a_u)/2\), \([a_l,a_u]\)).
    \EndIf

    \State
    \(k_{\mathrm{next}}
    \leftarrow
    \textsc{FindNextK}(I_t^{\mathrm{new}},k_t,\tau_{\mathrm c},
    \rho_{\min},\pi_t)\).

    \If{\(k_{\mathrm{next}}>k_t\)}
        \State
        \((\alpha_{0,t+1},\beta_{0,t+1})
        \leftarrow
        \textsc{PreparePrior}(\pi_t,k_{\mathrm{next}},N_{\mathrm{prior}})\).

        \State Set
        \(k_{t+1}=k_{\mathrm{next}}\),
        \(I_{t+1}=I_t^{\mathrm{new}}\),
        \((N_{t+1},Y_{t+1})=(0,0)\),
        and \(t\leftarrow t+1\).
    \Else
        \State Set \(I_t\leftarrow I_t^{\mathrm{new}}\) and continue sampling
        at the same Grover power.
    \EndIf

\EndWhile

\State \Return \(\widehat a=(a_l+a_u)/2\), \([a_l,a_u]\).

\end{algorithmic}
\end{algorithm}

The implementation keeps, at each stage \(t\), an identifiable interval
\(I_t\subset[0,\pi/2]\), a Grover power \(k_t\), cumulative counts
\((N_t,Y_t)\), and a Beta prior $p_t\sim\mathrm{Beta}(\alpha_{0,t},\beta_{0,t})$, with $p_t(\vartheta):=p_{k_t}(\vartheta)$. The prior is placed on the observed success probability \(p_t\), not on the
ideal probability \(\sin^2(K_t\vartheta)\). This is the point at which
CABIQAE differs from ideal Beta-BIQAE.

\paragraph{BayesianUpdate.}
At stage \(t\), let \(X_{t,r}\in\{0,1\}\) be the indicator of a good-state
measurement outcome and
\[
    Y_t=\sum_{r=1}^{N_t}X_{t,r}.
\]
Conditionally on the latent angle and current Grover power,
\[
    X_{t,r}\mid\vartheta,k_t
    \sim
    \mathrm{Bernoulli}(p_t(\vartheta)),
    \qquad
    Y_t\mid\vartheta,k_t
    \sim
    \mathrm{Binomial}(N_t,p_t(\vartheta)).
\]
Using the Beta prior on the observed success probability, $p_t\sim\mathrm{Beta}(\alpha_{0,t},\beta_{0,t})$,
the conjugate posterior is 
\begin{equation}
p_t\mid Y_t
    \sim
    \mathrm{Beta}(\alpha_{\mathrm{post},t},\beta_{\mathrm{post},t})
    \label{eq:appendix_bayesian_update}
\end{equation}
with $\alpha_{\mathrm{post},t}
    =
    \alpha_{0,t}+Y_t$,
    $\beta_{\mathrm{post},t}
    =
    \beta_{0,t}+N_t-Y_t$.
\paragraph{ComputeCRI.}
The posterior in \eqref{eq:appendix_bayesian_update} lives in observed
probability space. To obtain an interval for the amplitude, CABIQAE pulls this
posterior back to the latent angle \(\vartheta\).

Let
\[
    I_t=[\vartheta_l^{(t)},\vartheta_u^{(t)}]
\]
be the current identifiable branch.\footnote{See \textcite{Grinko2021IQAE} for a more detailed explanation of the identifiability problem and tracking.} The branch is chosen so that
\(K_t I_t\) lies inside a single interval
\[
    \left[
    \ell\frac{\pi}{2},
    (\ell+1)\frac{\pi}{2}
    \right],
\]
for some integer \(\ell\). On such an interval, \(p_t(\vartheta)\) is treated
as a single-branch transformation of the latent angle. The unnormalised
pullback density is
\[
    \widetilde{\pi}_t(\vartheta)
    =
    f_{\mathrm{Beta}}\!\left(
        p_t(\vartheta);
        \alpha_{\mathrm{post},t},\beta_{\mathrm{post},t}
    \right)
    \left|
        p_t'(\vartheta)
    \right|
    \mathbf{1}_{I_t}(\vartheta),
    \label{eq:appendix_pullback_density}
\]
where
\begin{equation}\label{eq:appendix_pobs_derivative}
        p_t'(\vartheta)
    =
    c_{k_t}K_t\sin(2K_t\vartheta)
\end{equation}
After normalisation,
\[
    Z_t
    =
    \int_{I_t}\widetilde{\pi}_t(u)\,du,
    \qquad
    \pi_t(\vartheta)
    =
    Z_t^{-1}\widetilde{\pi}_t(\vartheta).
\]
The CDF of the latent posterior is
\[
    F_t(x)
    =
    \int_{\vartheta_l^{(t)}}^x\pi_t(u)\,du.
\]
The equal-tailed credible interval at level \(1-\alpha\) is
\begin{equation}\label{eq:appendix_latent_cri}
    I_t^{\mathrm{new}}
    =
    [\vartheta_l,\vartheta_u]
    =
    \left[
    F_t^{-1}\!\left(\frac{\alpha}{2}\right),
    F_t^{-1}\!\left(1-\frac{\alpha}{2}\right)
    \right].
\end{equation}

Since \(a=\sin^2\vartheta\) is monotone on \([0,\pi/2]\), the corresponding
amplitude interval is
\begin{equation}\label{eq:appendix_amplitude_interval}
    [a_l,a_u]
    =
    [\sin^2\vartheta_l,\sin^2\vartheta_u].
\end{equation}

\paragraph{FindNextK.}
The scheduler searches for a larger Grover power that remains identifiable
and is expected to be informative under
the current posterior.

Given the updated interval \(I_t^{\mathrm{new}}\), a candidate \(k\) is
admissible if
\begin{equation}\label{eq:appendix_admissible_set_conditions}
    K I_t^{\mathrm{new}}
    \subset
    \left[
    \ell\frac{\pi}{2},
    (\ell+1)\frac{\pi}{2}
    \right]
    \quad\text{for some }\ell\in\mathbb{Z},
    \qquad
    \frac{K}{K_t}\geq\rho_{\min}.
\end{equation}
The first condition preserves single-branch identifiability, while the second
prevents negligible Grover-power increases.

For a Bernoulli observation at candidate Grover power \(k\), the log-likelihood
is
\[
    \ell(\vartheta;x,k)
    =
    x\log p_k(\vartheta)
    +
    (1-x)\log(1-p_k(\vartheta)).
\]
The corresponding Fisher information in the latent angle is
\begin{equation}\label{eq:appendix_fisher_information}
    \mathcal{I}_k(\vartheta)
    =
    \frac{\left[p_k'(\vartheta)\right]^2}
    {p_k(\vartheta)(1-p_k(\vartheta))}
    =
    \frac{
    c_k^2K^2\sin^2(2K\vartheta)
    }{
    p_k(\vartheta)(1-p_k(\vartheta))
    }.
\end{equation}
The theoretical scheduling score is the posterior expected Fisher information
per query,
\begin{equation}\label{eq:appendix_scheduler_score}
    \mathcal{S}_t(k)
    =
    \frac{1}{K}
    \int_{I_t^{\mathrm{new}}}
    \mathcal{I}_k(\vartheta)\pi_t(\vartheta)\,d\vartheta .
\end{equation}
CABIQAE selects
\begin{equation}\label{eq:appendix_find_next_k}
    k_{\mathrm{next}}
    =
    \arg\max_{k\in\mathcal{K}_t^{\mathrm{adm}}}
    \mathcal{S}_t(k),
\end{equation}
where \(\mathcal{K}_t^{\mathrm{adm}}\) is the set of admissible candidates
defined by \eqref{eq:appendix_admissible_set_conditions}. If this set is
empty, the algorithm keeps the same Grover power: $k_{\mathrm{next}}=k_t$.

\paragraph{PreparePrior.}
When \(k_{\mathrm{next}}>k_t\), the posterior must be transported to the
observed-probability space associated with the new Grover power. CABIQAE does
this in latent-angle space:
\begin{equation}\label{eq:appendix_prior_transport_samples}
    \vartheta^{(n)}\sim\pi_t,
    \qquad
    p^{(n)}
    =
    p_{\mathrm{obs}}(\vartheta^{(n)},k_{\mathrm{next}}),
    \qquad
    n=1,\ldots,N_{\mathrm{prior}}.
\end{equation}
The transported sample
\(\{p^{(r)}\}_{r=1}^{N_{\mathrm{prior}}}\) is approximated by a Beta distribution. Define
\[
    \widehat{\mu}
    =
    \frac{1}{N_{\mathrm{prior}}}\sum_{r=1}^{N_{\mathrm{prior}}} p^{(r)},
    \qquad
    \widehat{v}
    =
    \frac{1}{N_{\mathrm{prior}}}\sum_{r=1}^{N_{\mathrm{prior}}}
    \left(p^{(r)}-\widehat{\mu}\right)^2 .
\]
The unconstrained moment-matched concentration is
\[
    \widehat{\phi}
    =
    \frac{\widehat{\mu}(1-\widehat{\mu})}{\widehat{v}}-1.
\]
To avoid spuriously concentrated priors when the hardware contrast is small,
the fitted concentration is capped by the effective transported sample size
\begin{equation}\label{eq:appendix_concentration_cap}
    \phi
    =
    \min\left\{
        \widehat{\phi},
        c(k_{\mathrm{next}})^2N_{\mathrm{prior}}
    \right\}.
\end{equation}
The new Beta parameters are then
\begin{equation}\label{eq:appendix_transport_beta_parameters}
    \alpha_{0,t+1}
    =
    \max\{\widehat{\mu}\phi,\alpha_{\min}\},
    \qquad
    \beta_{0,t+1}
    =
    \max\{(1-\widehat{\mu})\phi,\beta_{\min}\}.
\end{equation}
If the moment fit is numerically ill-conditioned, for example if
\(\widehat v\leq0\) or
\(\widehat v\geq\widehat\mu(1-\widehat\mu)\), the implementation falls back to
a weakly informative Beta law. 

\paragraph{Numerical implementation.}
The latent posterior \(\pi_t\) is represented on a uniform grid
\(\{\vartheta_j\}_{j=1}^{J}\) over the current identifiable interval
\(I_t^{\mathrm{new}}\), with spacing
\(\Delta\vartheta=\vartheta_{j+1}-\vartheta_j\). The unnormalised posterior
values are denoted by \(\widetilde{\pi}_t(\vartheta_j)\). The normalisation
constant is computed by trapezoidal quadrature,
\[
    Z_t
    \simeq
    \sum_{j=1}^{J-1}
    \frac{
    \widetilde{\pi}_t(\vartheta_j)
    +
    \widetilde{\pi}_t(\vartheta_{j+1})
    }{2}
    \Delta\vartheta ,
\]
and the normalised grid density is
\[
    \pi_t(\vartheta_j)
    =
    Z_t^{-1}\widetilde{\pi}_t(\vartheta_j).
\]
Equivalently, introducing the trapezoidal weights
\[
    w_1=w_J=\frac{\Delta\vartheta}{2},
    \qquad
    w_j=\Delta\vartheta,
    \quad j=2,\ldots,J-1,
\]
the grid posterior satisfies
\[
    \sum_{j=1}^{J} w_j\pi_t(\vartheta_j)\simeq 1.
\]
The corresponding CDF is computed as
\[
    F_t(\vartheta_j)
    \simeq
    \sum_{r=1}^{j-1}
    \frac{
    \pi_t(\vartheta_r)
    +
    \pi_t(\vartheta_{r+1})
    }{2}
    \Delta\vartheta .
\]
Quantiles and posterior samples are obtained by linear interpolation of this
discrete CDF. Probabilities passed to Beta densities are clipped to
\([\delta,1-\delta]\), invalid normalisations are replaced by a uniform
density on the current interval, and Beta shape parameters are kept strictly
positive.

The numerical scheduler evaluates the posterior expected Fisher information
per query in \eqref{eq:appendix_scheduler_score} on the same grid. The
stabilised score used in the experiments is
\begin{equation}\label{eq:appendix_grid_scheduler_score}
    \mathcal{S}^{\mathrm{grid}}_t(k)
    =
    \frac{\gamma_k}{K}
    \sum_{j=1}^{J}
    w_j\,
    \pi_t(\vartheta_j)
    \frac{
    \left[p_k'(\vartheta_j)\right]^2
    }{
    \max\{p_k(\vartheta_j)(1-p_k(\vartheta_j)),\delta\}
    },
    \qquad
    \gamma_k
    =
    \sum_{j=1}^{J}
    w_j\,
    \pi_t(\vartheta_j)
    \left|\sin(2K\vartheta_j)\right|.
\end{equation}
Here \(\delta>0\) prevents division by zero, while \(\gamma_k\) is a
posterior-weighted flatness penalty for candidate Grover powers whose
amplified branch is locally uninformative over the region where the current
posterior places mass. Without the factor \(\gamma_k\) and the clipping by
\(\delta\), \(\mathcal{S}^{\mathrm{grid}}_t(k)\) is the direct trapezoidal
approximation of the theoretical score in
\eqref{eq:appendix_scheduler_score}.

\section{Quantum amplitude estimation and error metrics}
\label{appsection:quantum_mechanics}
\label{appsection:qae}

Quantum amplitude estimation (QAE), introduced by
\textcite{Brassard2000QAE}, is the primitive underlying the quadratic Monte
Carlo speedup used in this work. Given a state-preparation unitary \(A\) and
a projector \(\Pi_{\mathrm{good}}\) onto the standard \emph{good} (or marked)
subspace used in amplitude amplification, one can write
\begin{equation}
    A\ket{0}
    =
    \sqrt{1-a}\,\ket{\Psi_0}
    +
    \sqrt{a}\,\ket{\Psi_1},
    \qquad
    a=\bra{0}A^\dagger\Pi_{\mathrm{good}}A\ket{0},
    \label{eq:qae_state_decomposition_appendix}
\end{equation}
where, with a slight abuse of notation, \(\ket{\Psi_1}\) denotes the normalized component in the good subspace and
\(\ket{\Psi_0}\) the corresponding orthogonal component. Direct measurement of
\(A\ket{0}\) gives Bernoulli samples with success probability \(a\), so the
sample mean has standard error of order \(N^{-1/2}\).

QAE improves this by applying coherent amplitude amplification. Define
\(S_0:=I-2\ket{0}\bra{0}\), \(S_f:=I-2\Pi_{\mathrm{good}}\), and the Grover unitary
\(Q:=-AS_0A^\dagger S_f\). Writing \(a=\sin^2\theta\), with
\(\theta\in[0,\pi/2]\), the Grover unitary acts as a rotation in
\(\operatorname{span}\{\ket{\Psi_0},\ket{\Psi_1}\}\). After \(k\) Grover
applications,
\begin{equation}
    Q^kA\ket{0}
    =
    \cos((2k+1)\theta)\ket{\Psi_0}
    +
    \sin((2k+1)\theta)\ket{\Psi_1},
    \qquad
    p_k=\sin^2((2k+1)\theta),
    \label{eq:amplified_probability_appendix}
\end{equation}
up to global phases. Thus the unknown amplitude is inferred from an amplified
oscillation rather than from repeated unamplified Bernoulli sampling.

The original QAE algorithm estimates \(\theta\) by applying quantum phase
estimation to \(Q\), whose restriction to the two-dimensional invariant
subspace has eigenvalues \(e^{\pm 2i\theta}\). With a Fourier register of
order \(M\), controlled powers of \(Q\), and an inverse quantum Fourier
transform, it outputs \(\widetilde{a}=\sin^2(\widetilde{\theta})\);
equivalently, if the measured phase-register value is
\(y\in\{0,\ldots,M-1\}\), then
\(\widetilde{a}=\sin^2(\pi y/M)\).

The key guarantee is
\begin{equation}
    |\widetilde{a}-a|
    \leq
    \frac{2\pi\sqrt{a(1-a)}}{M}
    +
    \frac{\pi^2}{M^2}
    \label{eq:qae_brassard_bound_appendix}
\end{equation}
with probability at least \(8/\pi^2\), using \(\mathcal{O}(M)\) coherent
oracle queries \parencite{Brassard2000QAE}. Since
\(\sqrt{a(1-a)}\leq1/2\), this gives
\[
    |\widetilde{a}-a|
    =
    \mathcal{O}(M^{-1})
    =
    \mathcal{O}(N_q^{-1}),
\]
where \(N_q\) is the number of oracle queries. Hence, in the ideal
coherent-query model, QAE improves the direct-sampling scaling
\(\mathcal{O}(N_q^{-1/2})\) to the Heisenberg-limited scaling
\(\mathcal{O}(N_q^{-1})\). Equivalently, additive precision
\(\varepsilon\) requires \(\mathcal{O}(\varepsilon^{-1})\) queries instead of
the classical \(\mathcal{O}(\varepsilon^{-2})\) scaling
\parencite{Montanaro2015}.

\paragraph{Median errors and query-scaling exponents.}

The empirical curves in this work report median relative errors across
independent adaptive trajectories, rather than RMSEs. This is a
robustness choice: medians reduce sensitivity to rare unstable schedules and
finite-shot outliers. The following result makes precise that, in the ideal
statistical regime, this change of summary statistic preserves the same
query-scaling exponent.

For a non-negative random variable \(U\), define its median by
\[
    \operatorname{Med}(U)
    :=
    \inf\{t:\mathbb{P}(U\leq t)\geq1/2\}.
\]

\begin{proposition}[Median-error scaling]
\label{prop:median-error-scaling}
Let \(e_N=\hat{x}_N-x\). Suppose that, for some \(\alpha>0\),
\(N^\alpha e_N\Rightarrow Y\), where \(\mathbb{E}[Y^2]\in(0,\infty)\), and \(|Y|\) has a unique positive median.
If \(\{N^{2\alpha}e_N^2\}_{N\geq1}\) is uniformly integrable, then
\[
    \operatorname{RMSE}(\hat{x}_N)
    \sim
    N^{-\alpha}\sqrt{\mathbb{E}[Y^2]},
    \qquad
    \operatorname{Med}(|e_N|)
    \sim
    N^{-\alpha}\operatorname{Med}(|Y|).
\]
Thus RMSE and median absolute error have the same power-law exponent.
\end{proposition}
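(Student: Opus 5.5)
The two claims are proved separately; both start from the rescaled error \(Z_N := N^\alpha e_N\), which by hypothesis converges in distribution to \(Y\).

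\textbf{RMSE.} By the continuous mapping theorem, \(Z_N^2 \Rightarrow Y^2\). Uniform integrability of \(\{Z_N^2\}\), together with convergence in distribution, gives convergence of first moments (Vitali / the standard moment-convergence result for uniformly integrable families). Hence
\[
  \mathbb{E}[Z_N^2] = N^{2\alpha}\,\mathbb{E}[e_N^2] \to \mathbb{E}[Y^2] \in (0,\infty).
\]
Taking square roots gives \(N^\alpha\operatorname{RMSE}(\hat x_N)\to\sqrt{\mathbb{E}[Y^2]}\). Because the limit is strictly positive, this is exactly the asymptotic equivalence \(\sim\).

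\textbf{Median.} By continuous mapping again, \(|Z_N|\Rightarrow|Y|\). Let \(F_N\) and \(F\) be the CDFs of \(|Z_N|\) and \(|Y|\), and let \(m=\operatorname{Med}(|Y|)>0\). The median is positively homogeneous, so
\[
  \operatorname{Med}(|e_N|) = N^{-\alpha}\operatorname{Med}(|Z_N|),
\]
and it suffices to show \(\operatorname{Med}(|Z_N|)\to m\). This is convergence of quantiles under weak convergence at a level where the limiting quantile is unique. Fix \(\eta>0\) small enough that \(m\pm\eta>0\).

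\begin{itemize}
\item Uniqueness of the median means \(F(t)<1/2\) for \(t<m\) and \(F(t)>1/2\) for \(t>m\); otherwise \(F\) would be flat at level \(1/2\) on an interval and the median would not be unique.
\item Choose continuity points \(t_1\in(m-\eta,m)\) and \(t_2\in(m,m+\eta)\) of \(F\). This is possible because \(F\) has at most countably many discontinuities.
\item Weak convergence gives \(F_N(t_1)\to F(t_1)<1/2\) and \(F_N(t_2)\to F(t_2)>1/2\). So for all large \(N\), \(F_N(t_1)<1/2\le F_N(t_2)\).
\item From the infimum definition of the median, this forces \(t_1\le\operatorname{Med}(|Z_N|)\le t_2\).
\end{itemize}

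Since \(\eta\) was arbitrary, \(\operatorname{Med}(|Z_N|)\to m\). As \(m>0\), we get \(\operatorname{Med}(|e_N|)\sim N^{-\alpha}m\).

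\textbf{Conclusion and where care is needed.} Both quantities are asymptotic to \(N^{-\alpha}\) times a positive constant, so they share the power-law exponent \(\alpha\). The one step needing care is the quantile-convergence argument. The uniqueness hypothesis is used precisely to obtain \(F(t_1)<1/2\), and positivity of \(m\) is what makes \(\sim\) meaningful rather than trivial. Uniform integrability is used only for the RMSE part; the median part needs no moment assumption.
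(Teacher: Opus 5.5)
Your proof is correct and takes the same route as the paper's proof: continuous mapping plus uniform integrability for the RMSE, and quantile convergence under weak convergence plus positive homogeneity of the median for the second claim, with your continuity-point argument filling in the step the paper only asserts. One small correction to your closing remark: $F(t_1)<1/2$ for $t_1<m$ already follows from the infimum definition of $m$, and it is $F(t_2)>1/2$ for $t_2>m$ (which gives the upper bound $\operatorname{Med}(|Z_N|)\le t_2$) that genuinely needs uniqueness.
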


\begin{proof}
Set \(Z_N=N^\alpha e_N\). Then
\[
    \operatorname{RMSE}(\hat{x}_N)
    =
    N^{-\alpha}\bigl(\mathbb{E}[Z_N^2]\bigr)^{1/2}.
\]
Since \(Z_N\Rightarrow Y\), the continuous mapping theorem gives
\(Z_N^2\Rightarrow Y^2\). Uniform integrability implies
\(\mathbb{E}[Z_N^2]\to\mathbb{E}[Y^2]\), which proves the RMSE statement.
For the median, \(|Z_N|\Rightarrow|Y|\), and uniqueness of the median of
\(|Y|\) gives
\(\operatorname{Med}(|Z_N|)\to\operatorname{Med}(|Y|)\). Since positive
deterministic rescaling preserves quantiles exactly,
\[
    \operatorname{Med}(|e_N|)
    =
    N^{-\alpha}\operatorname{Med}(|Z_N|)
    \sim
    N^{-\alpha}\operatorname{Med}(|Y|).
\]
\end{proof}

For direct Monte Carlo,
\(\hat{\mu}_N=N^{-1}\sum_{i=1}^N X_i\), with
\(\operatorname{Var}(X_1)=\sigma^2\in(0,\infty)\), satisfies
\(\operatorname{RMSE}(\hat{\mu}_N)=\sigma/\sqrt{N}\). Moreover,
\(\sqrt{N}(\hat{\mu}_N-\mu)\Rightarrow\sigma Z\), with
\(Z\sim\mathcal{N}(0,1)\). Proposition~\ref{prop:median-error-scaling} gives
\[
    \operatorname{Med}(|\hat{\mu}_N-\mu|)
    \sim
    \frac{\sigma\Phi^{-1}(3/4)}{\sqrt{N}}
    \approx
    0.67449\,\frac{\sigma}{\sqrt{N}},
\]
so the median error preserves the classical \(N^{-1/2}\) exponent.

For QAE, the median statement follows even more directly from
\eqref{eq:qae_brassard_bound_appendix}. Since \(8/\pi^2>1/2\), the
high-probability bound holds at a probability level above the median level,
and therefore
\[
    \operatorname{Med}(|\widetilde{a}-a|)
    \leq
    \frac{2\pi\sqrt{a(1-a)}}{M}
    +
    \frac{\pi^2}{M^2}
    =
    \mathcal{O}(M^{-1})
    =
    \mathcal{O}(N_q^{-1}).
\]
Thus the median absolute error preserves the ideal QAE exponent.

The empirical plots use relative errors and deterministic CVA rescalings,
which do not affect exponents. If \(x\neq0\), then
\(|(\hat{x}_N-x)/x|=|x|^{-1}|\hat{x}_N-x|\). Likewise, if
\({\rm CVA}_\Delta=M(1-R_{\mathrm{CVA}})C_vC_pC_q\,a_{\rm CVA}\), then relative CVA error
and relative amplitude error coincide whenever \(a_{\rm CVA}>0\). Therefore,
\(\mathcal{O}(N_q^{-1/2})\) and \(\mathcal{O}(N_q^{-1})\) remain the
appropriate ideal Monte Carlo and QAE reference slopes for the median-error
curves. Deviations in hardware-replay should instead be attributed to contrast
loss, bias or noise saturation, not to the use of medians.

\paragraph{QFT-free implementations and CVA encoding.}

The canonical QAE construction attains the quadratic query scaling at the cost
of deep circuits involving controlled powers of \(Q\), a Fourier register and
an inverse quantum Fourier transform. Although optimal in the ideal
coherent-query model, it is poorly suited to near-term hardware. QFT-free
approaches, beginning with simplified quantum approximate counting
\parencite{AaronsonRall2020}, replace the phase-estimation register by
families of Grover-amplified circuits and classical post-processing. Iterative
and Bayesian variants such as IQAE, BIQAE and BAE retain the same
Grover-measurement structure while avoiding the explicit QFT-based circuit,
making them better aligned with NISQ constraints such as depth, calibration
stability and noise accumulation.

In the CVA setting of this work, \(A\) is the complete encoding circuit that
prepares the time--market distribution and applies the payoff, discount and
default rotations. Once this circuit has encoded the rescaled CVA as an
amplitude, amplitude estimation infers that amplitude, and the financial
quantity is recovered by multiplying the estimate by the deterministic
rescaling constants introduced in the main text.

\section{Extended results}

\subsection{Quantum CVA robustness analysis} \label{appsec:robust_analysis}
To assess the stability of the ideal statevector CVA pipeline, we perform a robustness test over 22 controlled market scenarios. The scenarios perturb the main financial inputs of the problem, including payoff strikes, volatilities, interest rates, and CDS-implied default spreads, while also including joint risk-on and risk-off stress configurations. Table~\ref{tab:ideal_cva_scenario_correspondence} summarises the correspondence between each scenario label and its associated perturbation. For each scenario, the exposure component is retrained in the ideal statevector regime and the resulting quantum CVA is compared against the corresponding classical small-grid benchmark. The relative errors and CVA values are reported in Table~\ref{tab:ideal_cva_robustness_results_compact}. Overall, the ideal quantum CVA encoding remains numerically stable in the robustness sense: across all 22 scenarios the error remains bounded, with a maximum relative error of \(85.4\%\), and no scenario exhibits divergence, sign instability, or an error exceeding one order of magnitude.

\begin{table}[H]
\normalsize
\centering
\caption{Ideal robustness-test CVA results. The tabulated benchmark is the scenario-specific classical small-grid CVA with \(n=4\), and the statevector value is evaluated using the ideal trained quantum pipeline.}
\label{tab:ideal_cva_robustness_results_compact}
\setlength{\tabcolsep}{3pt}
\renewcommand{\arraystretch}{1.12}
\setlength{\aboverulesep}{0pt}
\setlength{\belowrulesep}{0pt}
\makebox[\textwidth][c]{%
\begin{tabular}{@{}lrrrlrrrlrrr@{}}
\toprule
\multicolumn{4}{c}{\textbf{Scenarios S1--S8}} & \multicolumn{4}{c}{\textbf{Scenarios S9--S16}} & \multicolumn{4}{c@{}}{\textbf{Scenarios S17--S22}} \\
\cmidrule(lr){1-4}\cmidrule(lr){5-8}\cmidrule(l){9-12}
\textbf{Sc.} & \textbf{\(\mathrm{CVA}_{\mathrm{tab}}\)} & \textbf{\(\mathrm{CVA}_{\mathrm{SV}}\)} & \textbf{\(\varepsilon_{\mathrm{rel}}\)(\%)} & \textbf{Sc.} & \textbf{\(\mathrm{CVA}_{\mathrm{tab}}\)} & \textbf{\(\mathrm{CVA}_{\mathrm{SV}}\)} & \textbf{\(\varepsilon_{\mathrm{rel}}\)(\%)} & \textbf{Sc.} & \textbf{\(\mathrm{CVA}_{\mathrm{tab}}\)} & \textbf{\(\mathrm{CVA}_{\mathrm{SV}}\)} & \multicolumn{1}{r@{}}{\textbf{\(\varepsilon_{\mathrm{rel}}\)(\%)}} \\
\midrule
S1 & 0.522 & 0.670 & +28.4 & S9 & 0.547 & 0.669 & +22.3 & S17 & 0.469 & 0.604 & +28.7 \\
S2 & 0.812 & 0.916 & +12.9 & S10 & 0.501 & 0.681 & +35.9 & S18 & 0.574 & 0.738 & +28.7 \\
S3 & 0.312 & 0.484 & +55.4 & S11 & 0.484 & 0.692 & +43.2 & S19 & 0.652 & 0.839 & +28.7 \\
S4 & 0.574 & 0.686 & +19.6 & S12 & 0.519 & 0.667 & +28.6 & S20 & 0.571 & 0.675 & +18.3 \\
S5 & 0.414 & 0.611 & +47.7 & S13 & 0.520 & 0.669 & +28.5 & S21 & 0.470 & 0.654 & +39.2 \\
S6 & 0.960 & 0.997 & +3.9 & S14 & 0.522 & 0.673 & +28.9 & S22 & 0.419 & 0.646 & +54.4 \\
S7 & 0.232 & 0.430 & +85.4 & S15 & 0.523 & 0.675 & +29.1 &  &  &  &  \\
S8 & 0.604 & 0.685 & +13.5 & S16 & 0.391 & 0.504 & +28.7 &  &  &  &  \\
\bottomrule
\end{tabular}%
}
\end{table}

\begin{table}[H]
\normalsize
\centering
\caption{Scenario definitions for the ideal robustness analysis.}
\label{tab:ideal_cva_scenario_correspondence}
\setlength{\tabcolsep}{4pt}
\renewcommand{\arraystretch}{1.15}
\begin{tabular*}{\textwidth}{@{\extracolsep{\fill}} l l l l l l @{}}
\toprule
\textbf{Sc.} & \textbf{Perturbation} & \textbf{Sc.} & \textbf{Perturbation} & \textbf{Sc.} & \textbf{Perturbation} \\
\midrule
S1 & Baseline configuration & S8  & $\sigma\mapsto 0.80\sigma$ & S15 & $r(t)\mapsto r(t)+100\,\mathrm{bp}$ \\
S2 & $K_1\mapsto 0.90K_1$ & S9  & $\sigma\mapsto 0.90\sigma$ & S16 & $s_{\mathrm{CDS}}\mapsto 0.75s_{\mathrm{CDS}}$ \\
S3 & $K_1\mapsto 1.10K_1$ & S10 & $\sigma\mapsto 1.10\sigma$ & S17 & $s_{\mathrm{CDS}}\mapsto 0.90s_{\mathrm{CDS}}$ \\
S4 & $K_2\mapsto 0.90K_2$ & S11 & $\sigma\mapsto 1.20\sigma$ & S18 & $s_{\mathrm{CDS}}\mapsto 1.10s_{\mathrm{CDS}}$ \\
S5 & $K_2\mapsto 1.10K_2$ & S12 & $r(t)\mapsto r(t)-100\,\mathrm{bp}$ & S19 & $s_{\mathrm{CDS}}\mapsto 1.25s_{\mathrm{CDS}}$ \\
S6 & $K_1\mapsto 0.90K_1$; $K_2\mapsto 0.90K_2$ & S13 & $r(t)\mapsto r(t)-50\,\mathrm{bp}$ & & \\
S7 & $K_1\mapsto 1.10K_1$; $K_2\mapsto 1.10K_2$ & S14 & $r(t)\mapsto r(t)+50\,\mathrm{bp}$ & & \\
\midrule
S20 & \multicolumn{5}{l}{$K_1\mapsto 0.95K_1$; $K_2\mapsto 1.05K_2$; $\sigma\mapsto 0.90\sigma$; $r(t)\mapsto r(t)+50\,\mathrm{bp}$; $s_{\mathrm{CDS}}\mapsto 0.90s_{\mathrm{CDS}}$} \\
S21 & \multicolumn{5}{l}{$K_1\mapsto 1.05K_1$; $K_2\mapsto 0.95K_2$; $\sigma\mapsto 1.10\sigma$; $r(t)\mapsto r(t)-50\,\mathrm{bp}$; $s_{\mathrm{CDS}}\mapsto 1.10s_{\mathrm{CDS}}$} \\
S22 & \multicolumn{5}{l}{$K_1\mapsto 1.10K_1$; $K_2\mapsto 0.90K_2$; $\sigma\mapsto 1.20\sigma$; $r(t)\mapsto r(t)-100\,\mathrm{bp}$; $s_{\mathrm{CDS}}\mapsto 1.25s_{\mathrm{CDS}}$} \\
\bottomrule
\end{tabular*}
\end{table}

\subsection{Finite-grid convergence analysis}
\label{app:convergence_analysis}
\begin{table}[H]
\centering
\normalsize
\caption{Finite-grid CVA convergence and circuit-resource scaling as a function
of the market-register size \(n\), while keeping \(m=2\) fixed. Logical and ISA
resources are reported for the amplified block \(Q^{1}A\). The ISA resources
correspond to hardware-aware optimised transpilation to the backend used in
later CVA amplitude estimation experiments, \texttt{ibm\_basquecountry}. For
\(n=4\), the reported depths and two-qubit gate counts correspond to the
actual trained CVA circuit. For \(n\neq4\), no trained circuits are used; the
values are structural projections obtained by scaling the \(n=4\) hardware
metrics with same-family \(Q^{1}A\) proxy circuits.}
\label{tab:grid_cva_resource_scaling}
\setlength{\tabcolsep}{6pt}
\renewcommand{\arraystretch}{1.2}
\setlength{\aboverulesep}{0pt}
\setlength{\belowrulesep}{0pt}

\makebox[\textwidth][c]{%
\begin{tabular}{@{}
    c
    r
    r
    r
    r
    r
    r
@{}}
\toprule

\multicolumn{1}{@{}c}{}
& \multicolumn{2}{c}{\textbf{CVA Approx.}}
& \multicolumn{2}{c}{\textbf{Logical Resources}}
& \multicolumn{2}{c@{}}{\textbf{ISA Resources}} \\

\cmidrule{2-3} \cmidrule{4-5} \cmidrule{6-7}

\textbf{\(n\)}
& \textbf{\(\mathrm{CVA}_n\)}
& \textbf{$\varepsilon_{\mathrm{grid}}(n)(\%)$}
& \textbf{Depth}
& \textbf{2Q Gates}
& \textbf{Depth}
& \multicolumn{1}{r@{}}{\textbf{2Q Gates}} \\
\midrule

2  & 0.399 & 63.4 & 557   & 335  & 1430  & 598  \\
4  & 0.522 & 52.1 & 1226  & 702  & 2691  & 1122 \\
8  & 0.929 & 14.8 & 2508  & 1449 & 6015  & 2503 \\
12  & 1.046 & 4.0  & 4600  & 2517 & 10794 & 4293 \\
20 & 1.082 & 0.7  & 10676 & 5421 & 24474 & 9101 \\

\bottomrule
\end{tabular}%
}
\end{table}

\subsection{Amplified-circuit resource tables}
\label{appsec:amplified_resource_tables}

The ISA depths and two-qubit gate counts reported for the CVA circuits (Table \ref{tab:cva_amplified_logical_transpiled_resources}) should
be interpreted as hardware-aware proxy metrics. They are obtained by
transpiling the amplified circuits to \texttt{ibm\_basquecountry} using the same
logical circuit structure and a nine-qubit SWAP-free mapping. They should not
be read as exact resource reports for the Q-CTRL executions, since Q-CTRL's
Performance Management Qiskit Function does not expose the final
internally optimised circuits submitted to the device.

\begin{table}[H]
\centering
\normalsize
\caption{Circuit resource scaling for the amplified validation circuits $Q^k A_{\mathrm{test}}$, where $K = 2k+1$. \textit{Logical resources} (left) report gate counts after decomposing controlled rotations and $R_{ZZ}$ interactions into elementary one-qubit and CNOT gates. \textit{ISA resources} (right) result from hardware-aware fixed-layout transpilation with multi-seed selection at optimisation level~3, targeting \texttt{ibm\_basquecountry} with no SWAP insertions. The lower block details the individual Grover constituents.}
\label{tab:atest_amplified_logical_transpiled_resources}
\setlength{\tabcolsep}{5pt}
\renewcommand{\arraystretch}{1.2}
\setlength{\aboverulesep}{0pt}
\setlength{\belowrulesep}{0pt}
\begin{tabular}{@{\hspace{6pt}} l c c c c c c c c @{\hspace{6pt}}}
\toprule
\multicolumn{1}{@{\hspace{6pt}} l}{}
& &
& \multicolumn{3}{c}{\textbf{Logical Resources}}
& \multicolumn{3}{c @{\hspace{6pt}}}{\textbf{ISA Resources}} \\
\cmidrule{4-6} \cmidrule{7-9}
\multicolumn{1}{@{\hspace{6pt}} l}{\textbf{Circuit}}
& \(\boldsymbol{k}\)
& \(\boldsymbol{K}\)
& \textbf{1q Gates}
& \textbf{CNOTs}
& \textbf{Depth}
& \textbf{1q Gates}
& \textbf{CZs}
& \multicolumn{1}{c @{\hspace{6pt}}}{\textbf{Depth}} \\
\midrule
$Q^{0}A_{\mathrm{test}}$  & $0$  & $1$  & $17$   & $12$  & $26$   & $79$   & $16$   & $64$   \\
$Q^{1}A_{\mathrm{test}}$  & $1$  & $3$  & $69$   & $42$  & $89$   & $259$  & $60$   & $209$  \\
$Q^{2}A_{\mathrm{test}}$  & $2$  & $5$  & $121$  & $72$  & $152$  & $426$  & $100$  & $348$  \\
$Q^{3}A_{\mathrm{test}}$  & $3$  & $7$  & $173$  & $102$ & $215$  & $613$  & $146$  & $493$  \\
$Q^{4}A_{\mathrm{test}}$  & $4$  & $9$  & $225$  & $132$ & $278$  & $785$  & $190$  & $644$  \\
$Q^{10}A_{\mathrm{test}}$ & $10$ & $21$ & $537$  & $312$ & $656$  & $1822$ & $437$  & $1492$ \\
$Q^{28}A_{\mathrm{test}}$ & $28$ & $57$ & $1473$ & $852$ & $1790$ & $5008$ & $1185$ & $4056$ \\
\midrule
\multicolumn{9}{@{\hspace{6pt}} l}{\textbf{Grover constituents}} \\
$A$           & -- & -- & $17$ & $12$ & $25$ & $77$ & $16$ & $61$ \\
$A^{\dagger}$ & -- & -- & $17$ & $12$ & $23$ & $84$ & $18$ & $62$ \\
$S_{0}$       & -- & -- & $17$ & $6$  & $14$ & $37$ & $9$  & $33$ \\
$S_{f}$       & -- & -- & $1$  & $0$  & $1$  & $1$  & $0$  & $1$  \\
\bottomrule
\end{tabular}
\end{table}

The logical resources are reported after decomposing two-qubit gates
into a standard elementary circuit basis consisting of arbitrary one-qubit
gates and CNOTs. This convention follows the standard decomposition framework
of \cite{Barenco1995ElementaryGates}, where one-qubit gates
together with the two-qubit XOR/CNOT gate are shown to form a universal
elementary gate set. The ISA resources are then obtained by transpiling the
same circuits to the backend-native basis, where the elementary entangling
operation may be represented as CZ rather than CNOT.

\begin{table}[H]
\centering
\normalsize
\caption{Circuit resource scaling for amplified CVA circuits $Q^k A_{\Theta}$. \textit{Logical resources} (left) report gate counts after decomposing all controlled operations into elementary gates. \textit{ISA resources} (right) result from hardware-aware transpilation targeting \texttt{ibm\_basquecountry} on a nine-qubit mapping that requires no SWAP insertions. The lower block details the individual Grover ($Q=-AS_0A^{\dagger}S_f$) constituents metrics.}
\label{tab:cva_amplified_logical_transpiled_resources}
\setlength{\tabcolsep}{5pt}
\renewcommand{\arraystretch}{1.2}
\setlength{\aboverulesep}{0pt}
\setlength{\belowrulesep}{0pt}
\begin{tabular}{@{\hspace{6pt}} l c c c c c c c c @{\hspace{6pt}}}
\toprule
\multicolumn{1}{@{\hspace{6pt}} l}{}
& &
& \multicolumn{3}{c}{\textbf{Logical Resources}}
& \multicolumn{3}{c @{\hspace{6pt}}}{\textbf{ISA Resources}} \\
\cmidrule{4-6} \cmidrule{7-9}
\multicolumn{1}{@{\hspace{6pt}} l}{\textbf{Circuit}}
& \(\boldsymbol{k}\)
& \(\boldsymbol{K}\)
& \textbf{1q Gates}
& \textbf{CNOTs}
& \textbf{Depth}
& \textbf{1q Gates}
& \textbf{CZs}
& \multicolumn{1}{c @{\hspace{6pt}}}{\textbf{Depth}} \\
\midrule
$Q^{0}A_{\Theta}$ & $0$ & $1$ & $161$  & $124$  & $213$  & $496$   & $117$  & $322$  \\
$Q^{1}A_{\Theta}$ & $1$ & $3$ & $1007$ & $702$  & $1226$ & $3883$  & $1122$ & $2691$ \\
$Q^{2}A_{\Theta}$ & $2$ & $5$ & $1853$ & $1280$ & $2239$ & $7201$  & $2093$ & $5082$ \\
$Q^{3}A_{\Theta}$ & $3$ & $7$ & $2699$ & $1858$ & $3252$ & $10647$ & $3101$ & $7611$ \\
$Q^{4}A_{\Theta}$ & $4$ & $9$ & $3545$ & $2436$ & $4265$ & $14029$ & $4091$ & $9815$ \\
\midrule
\multicolumn{9}{@{\hspace{6pt}} l}{\textbf{Grover constituents}} \\
$A$           & -- & -- & $161$ & $124$ & $212$ & $496$  & $117$ & $322$  \\
$A^{\dagger}$ & -- & -- & $161$ & $124$ & $222$ & $563$  & $145$ & $358$  \\
$S_{0}$       & -- & -- & $513$ & $324$ & $583$ & $2253$ & $665$ & $1713$ \\
$S_{f}$       & -- & -- & $11$  & $6$   & $12$  & $56$   & $18$  & $43$   \\
\bottomrule
\end{tabular}
\end{table}

\subsection{Calibration results}
\label{appsec:calibration}
\begin{table}[H]
\normalsize
\centering
\caption{Summary of the single-amplitude hardware and contrast-model calibration in the validation experiment with the reduced-depth circuit $A_{\mathrm{test}}$, executed on \texttt{ibm\_basquecountry}. The noise model is the one in \eqref{eq:cabiqae_observation_model}.}
\label{tab:combined_calibration_summary}
\begin{tabular}{l cccc}
\toprule
\multicolumn{5}{c}{\textbf{A. Hardware Calibration Parameters}} \\
\midrule
\multicolumn{3}{l}{Parameter} & \multicolumn{2}{c}{Value} \\
\cmidrule(r){1-3} \cmidrule(l){4-5}
\multicolumn{3}{l}{Hardware-replay target amplitude $a_{\mathrm{true}}$} & \multicolumn{2}{c}{$0.3602728053$} \\
\multicolumn{3}{l}{Grover-power range} & \multicolumn{2}{c}{$k=0,\ldots,40$ (except $k=36$)} \\
\multicolumn{3}{l}{Calibration points (Aggregated / Fit / Excluded)} & \multicolumn{2}{c}{$40$ / $24$ / $16$} \\
\multicolumn{3}{l}{Amplification shots (Total / Typical per point)} & \multicolumn{2}{c}{$430\,080$ / $12\,288$} \\
\multicolumn{3}{l}{$P(\mathrm{obs}=1\mid \mathrm{prep}=0)$} & \multicolumn{2}{c}{$0.00061 \pm 0.00027$} \\
\multicolumn{3}{l}{$P(\mathrm{obs}=1\mid \mathrm{prep}=1)$} & \multicolumn{2}{c}{$0.99865 \pm 0.00041$} \\
\addlinespace
\multicolumn{5}{c}{\textbf{B. Empirical Contrast-Model Calibration}} \\
\midrule
\multicolumn{1}{c}{$c_0$}
    & \multicolumn{2}{c}{Slope of $\log c_k$}
    & \multicolumn{1}{c}{$\tau_c$}
    & \multicolumn{1}{c}{$R^2$} \\
\midrule
\multicolumn{1}{c}{$1.16 \pm 0.10$}
    & \multicolumn{2}{c}{$-0.0294 \pm 0.0024$}
    & \multicolumn{1}{c}{$33.96 \pm 2.80$}
    & \multicolumn{1}{c}{$0.87$} \\
\bottomrule
\end{tabular}
\end{table}

\begin{table}[H]
\normalsize
\centering
\caption{Summary of the hardware and contrast-model calibrations in the
six-qubit CVA amplitude estimation experiment executed on
\texttt{ibm\_basquecountry} with Q-CTRL's Performance Management. Again the noise model is \eqref{eq:cabiqae_observation_model}.}
\label{tab:cva_hardware_combined_calibration_summary}
\begin{tabular}{l cccc}
\toprule
\multicolumn{5}{c}{\textbf{A. Hardware Calibration Parameters}} \\
\midrule
\multicolumn{3}{l}{Parameter} & \multicolumn{2}{c}{Value} \\
\cmidrule(r){1-3} \cmidrule(l){4-5}
\multicolumn{3}{l}{Initial ideal amplitude $a_0$}
    & \multicolumn{2}{c}{$0.152$} \\
\multicolumn{3}{l}{Objective good state}
    & \multicolumn{2}{c}{\texttt{111}} \\
\multicolumn{3}{l}{Grover-power range}
    & \multicolumn{2}{c}{$k=0,\ldots,4$} \\
\multicolumn{3}{l}{Calibration points (Aggregated / Fit / Excluded)}
    & \multicolumn{2}{c}{$5$ / $4$ / $1$} \\
\multicolumn{3}{l}{Amplification shots (Total / Typical per point)}
    & \multicolumn{2}{c}{$330000$ / $66000$} \\
\multicolumn{3}{l}{$P(\mathrm{obs}=111\mid \mathrm{prep}=000)$}
    & \multicolumn{2}{c}{$0.00 \pm 0.00$} \\
\multicolumn{3}{l}{$P(\mathrm{obs}=111\mid \mathrm{prep}=111)$}
    & \multicolumn{2}{c}{$0.99500 \pm 0.00078$} \\
\multicolumn{3}{l}{Fitted contrast baseline $b$}
    & \multicolumn{2}{c}{$0.15$} \\
\addlinespace
\multicolumn{5}{c}{\textbf{B. Empirical Contrast-Model Calibration}} \\
\midrule
\multicolumn{1}{c}{$c_0$}
    & \multicolumn{2}{c}{Slope of $\log c_k$}
    & \multicolumn{1}{c}{$\tau_c$}
    & \multicolumn{1}{c}{$R^2$} \\
\midrule
\multicolumn{1}{c}{$0.67 \pm 0.31$}
    & \multicolumn{2}{c}{$-0.46 \pm 0.13$}
    & \multicolumn{1}{c}{$2.20 \pm 0.60$}
    & \multicolumn{1}{c}{$0.86$} \\
\bottomrule
\end{tabular}
\end{table}

\begin{figure}[H]
    \centering
    \includegraphics[width=1.00\textwidth]{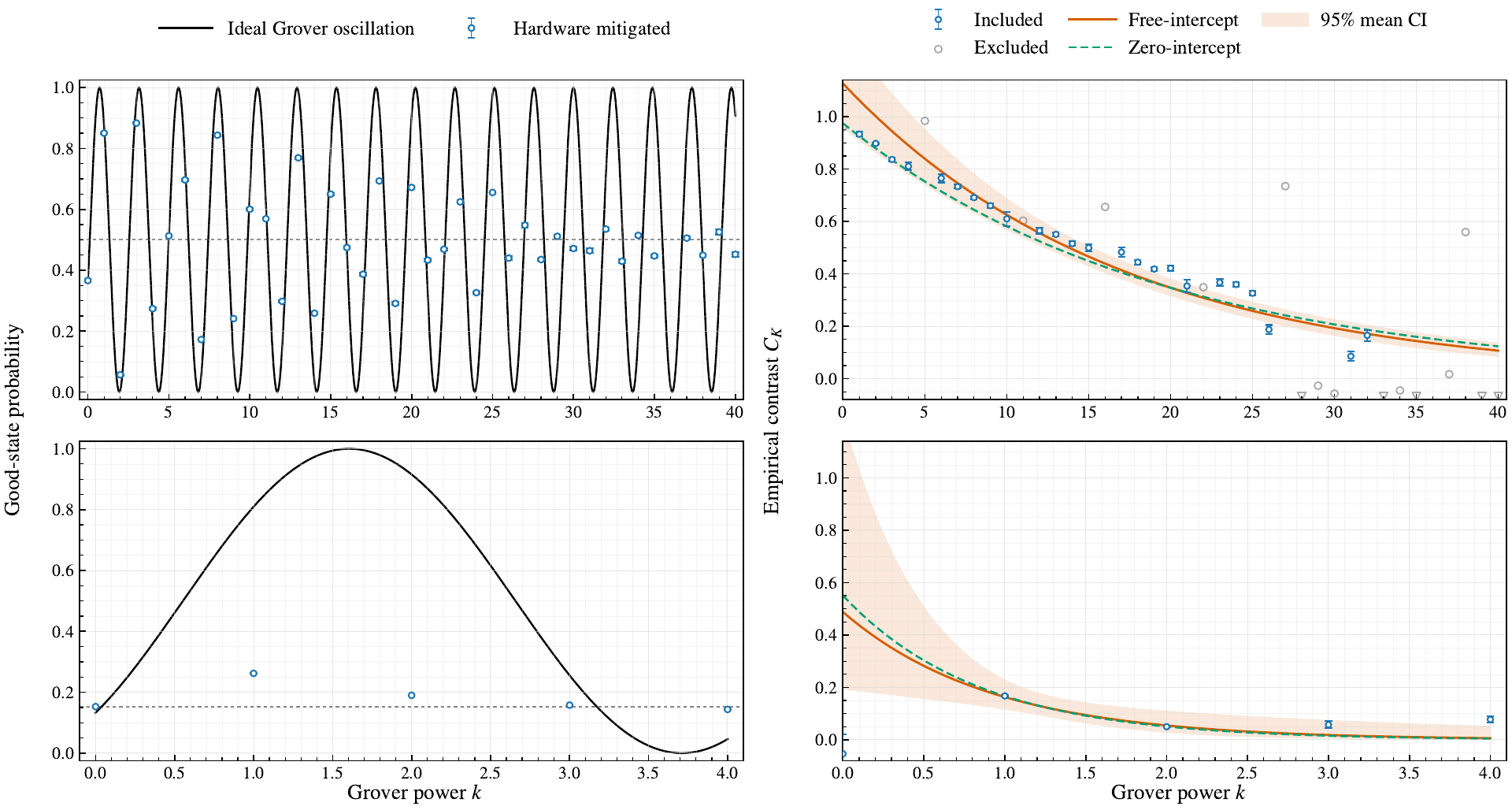}
    \caption{
Experimental characterisation of Grover amplification for the three-qubit toy model $A_{\mathrm{test}}$ (top row) and the CVA circuit $A_{\Theta}$ (bottom row). The $A_{\mathrm{test}}$ hardware calibration uses one target amplitude. Left: mitigated good-state probability and ideal Grover oscillation. Right: empirical contrast, exponential fits, and $95\%$ confidence interval. Error bars denote standard errors; triangles indicate values outside the displayed range.
}
    \label{fig:calibration}
\end{figure}

\subsection{Amplitude estimation}
\label{appsec:ae_extended_results}

\begin{figure}[H]
    \centering
    \includegraphics[width=1.00\textwidth]{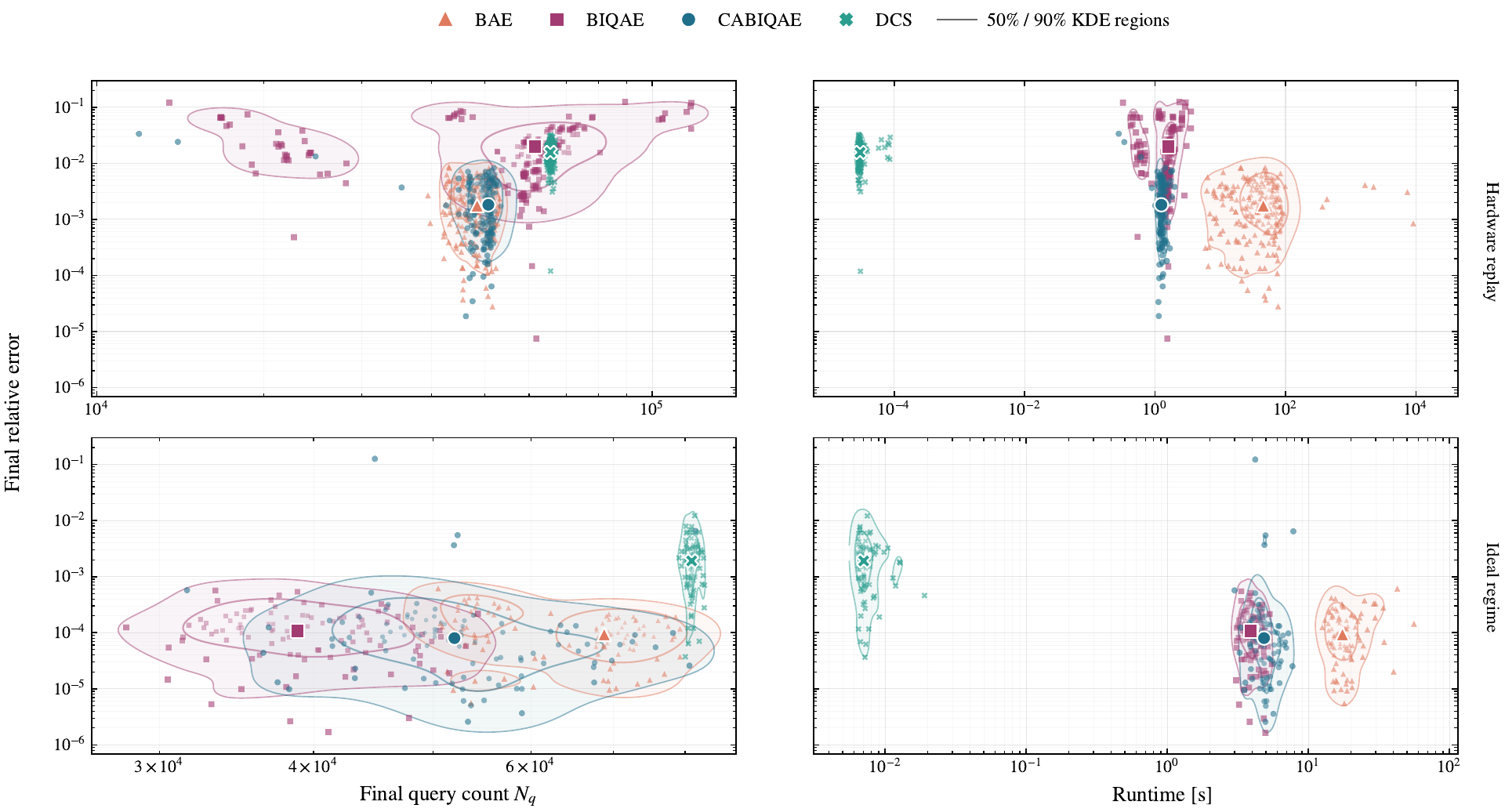}
   \caption{Final relative error versus query cost (left) and runtime (right) for the CABIQAE validation experiment under hardware-replay (top) and the ideal regime (bottom). Markers show individual runs, enlarged symbols indicate medians, and shaded contours delimit the $50\%$ and $90\%$ KDE regions.}
    \label{fig:ae_validation_final_error_density}
\end{figure}

\begin{figure}[H]
    \centering
    \includegraphics[width=1.00\textwidth]{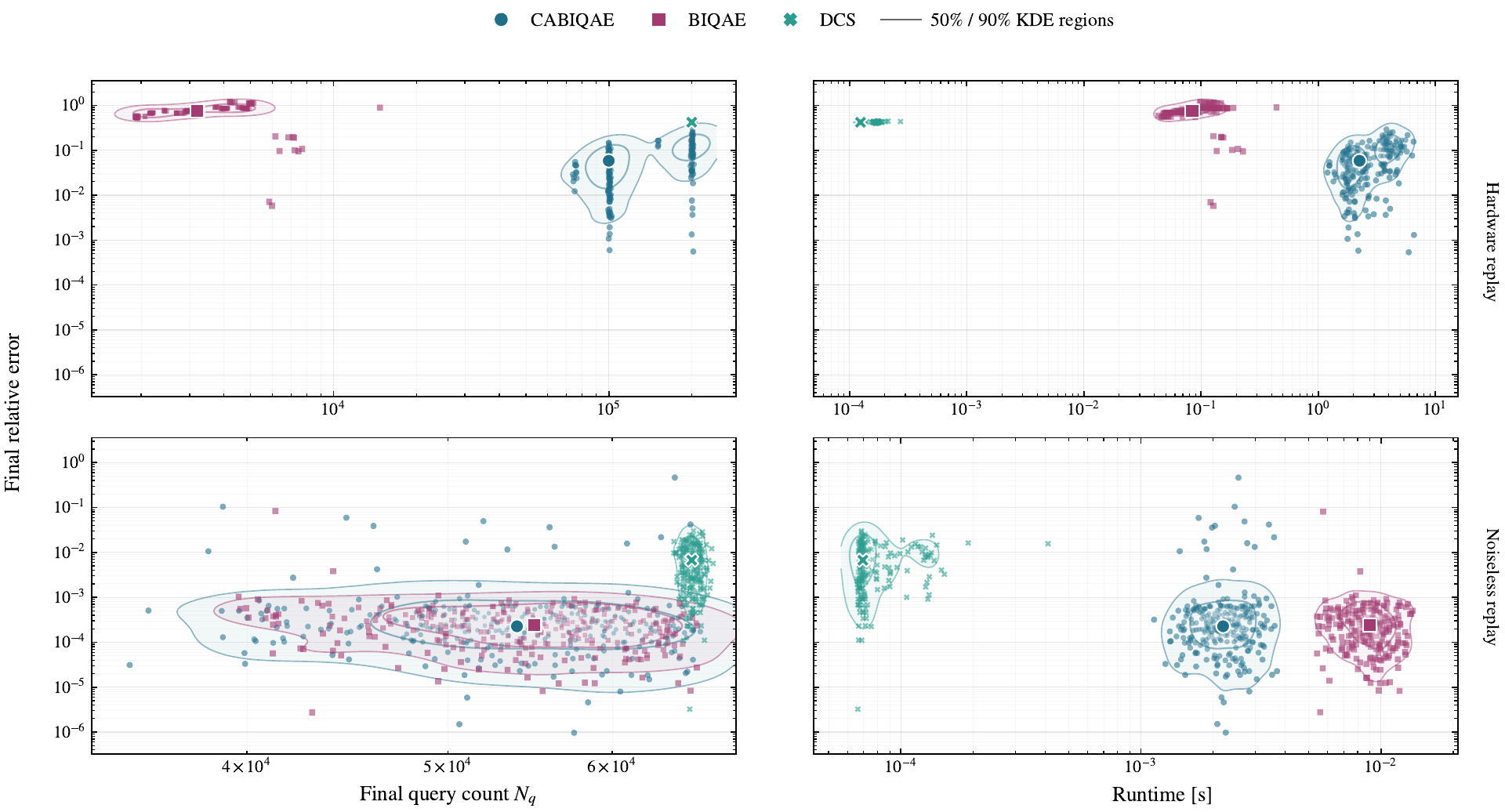}
    \caption{Final relative CVA error versus query cost (left) and runtime (right) for the CVA amplitude-estimation experiment under hardware-replay (top) and the ideal regime (bottom). Markers show individual runs, enlarged symbols indicate medians, and shaded contours delimit the $50\%$ and $90\%$ KDE regions.}
    \label{fig:cva_final_error_density}
\end{figure}
\end{document}